\newcommand{\LineIf}[2]{\textbf{if}\ {#1}\ \textbf{then} {#2} }
\newcommand{\inc}{\ensuremath{\mathtt{inc}}\xspace}
\newcommand{\dec}{\ensuremath{\mathtt{dec}}\xspace}
\newcommand{\nextel}{\ensuremath{\mathtt{next}()}\xspace}
\newcommand{\disj}{\ensuremath{\mathtt{Disj}}\xspace}
\newcommandx{\unsure}[2][1=]{\todo[linecolor=green,backgroundcolor=green!25,bordercolor=green,#1]{\normalsize #2}}
\newcommandx{\improvement}[2][1=]{\todo[inline,linecolor=blue,backgroundcolor=blue!05,bordercolor=blue,#1]{\normalsize #2}}
\newcommandx{\info}[2][1=]{\todo[linecolor=yellow,backgroundcolor=yellow!25,bordercolor=yellow,#1]{#2}}
\newcommandx{\floatmodel}[2][1=]{\todo[inline,linecolor=red,backgroundcolor=yellow!25,bordercolor=yellow,#1]{#2}}
\newcommandx{\thiswillnotshow}[2][1=]{\todo[disable,#1]{#2}}
\newcommandx{\karol}[2][1=]{\todo[inline,linecolor=green,backgroundcolor=green!25,bordercolor=green,#1]{\normalsize #2}}
\newcommandx{\jesper}[2][1=]{\todo[inline,linecolor=red,backgroundcolor=red!25,bordercolor=red,#1]{\normalsize #2}}
\newtheorem{theorem}{Theorem}
\newtheorem{definition}[theorem]{Definition}
\newtheorem{lemma}[theorem]{Lemma}
\newtheorem{corollary}[theorem]{Corollary}
\newtheorem{claim}[theorem]{Claim}
\newtheorem{observation}[theorem]{Observation}
\newtheorem{remark}[theorem]{Remark}
\newtheorem{assumption}{Assumption}
\numberwithin{theorem}{section}
\numberwithin{lemma}{section}
\numberwithin{claim}{section}
\numberwithin{corollary}{section}
\numberwithin{definition}{section}
\numberwithin{observation}{section}
\numberwithin{remark}{section}
\newcommand{\eps}{\varepsilon}
\newcommand{\Oh}{\mathcal{O}}
\newcommand{\Os}{\Oh^{\star}}
\newcommand{\Oms}{\Omega^{\star}}
\newcommand{\Otilde}{\widetilde{\Oh}}
\newcommand{\Ot}{\Otilde}
\newcommand{\nat}{\mathbb{N}}
\newcommand{\Aa}{\mathcal{A}}
\newcommand{\Bb}{\mathcal{B}}
\newcommand{\Ff}{\mathcal{F}}
\newcommand{\Rr}{\mathcal{R}}
\newcommand{\Ss}{\mathcal{S}}
\newcommand{\Ll}{\mathcal{L}}
\newcommand{\prob}[2]{\mathbb{P}_{#2}\left[ #1 \right]}
\newcommand{\Ex}[1]{\mathbb{E}\left[ #1 \right]}
\newcommand{\ssum}{Subset Sum\xspace}
\newcommand{\defproblem}[3]{
  \vspace{2mm}
  \vspace{1mm}
\noindent\fbox{
  \begin{minipage}{0.95\textwidth}
  #1 \\
  {\bf{Input:}} #2  \\
  {\bf{Task:}} #3
  \end{minipage}
  }
  \vspace{2mm}
}
\newcounter{openquestion}
\newenvironment{openquestion}
{\begin{center}\begin{minipage}{\textwidth}\begin{framed}\refstepcounter{openquestion}\textbf{Question~\theopenquestion:}}	{\end{framed}\end{minipage}\end{center}}
\title{Improving Schroeppel and Shamir's Algorithm for Subset Sum via Orthogonal Vectors}
\date{}
\author{
    Jesper Nederlof\footnote{Utrecht University, The
    Netherlands, \texttt{j.nederlof@uu.nl}. Supported by
    the project CRACKNP that has received funding from the European
    Research Council (ERC) under the European Union’s Horizon 2020 research and
    innovation programme (grant agreement No 853234).}
    \and
    Karol W\k{e}grzycki\footnote{Saarland University and Max Planck Institute for Informatics,
        Saarbr\"ucken, Germany, \texttt{wegrzycki@cs.uni-saarland.de}. 
    This work is part of the project TIPEA that has
    received funding from the European Research Council (ERC) under the European Unions Horizon
    2020 research and innovation programme (grant agreement No. 850979).
Author was also supported Foundation for Polish Science (FNP), by the grants
    2016/21/N/ST6/01468 and 2018/28/T/ST6/00084 of the Polish National Science
    Center and project TOTAL that has received funding from the European
    Research Council (ERC) under the European Union’s Horizon 2020 research and
    innovation programme (grant agreement No 677651).} }
\begin{document}

\maketitle

\begin{picture}(0,0)
\put(-67,-430)
{\hbox{\includegraphics[width=40px]{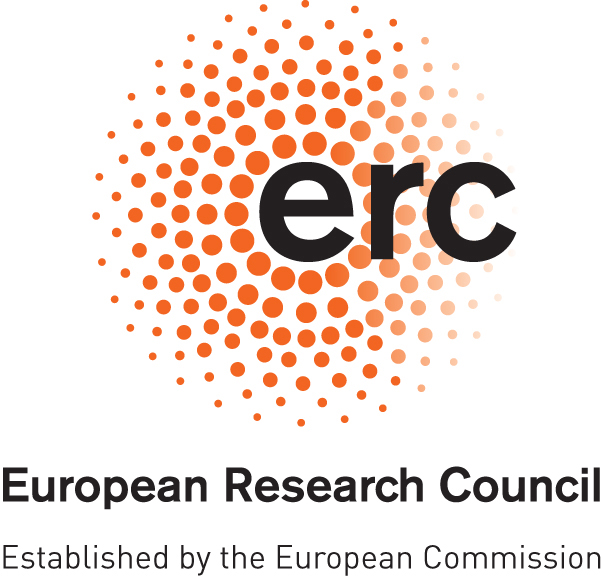}}}
\put(-77,-490)
{\hbox{\includegraphics[width=60px]{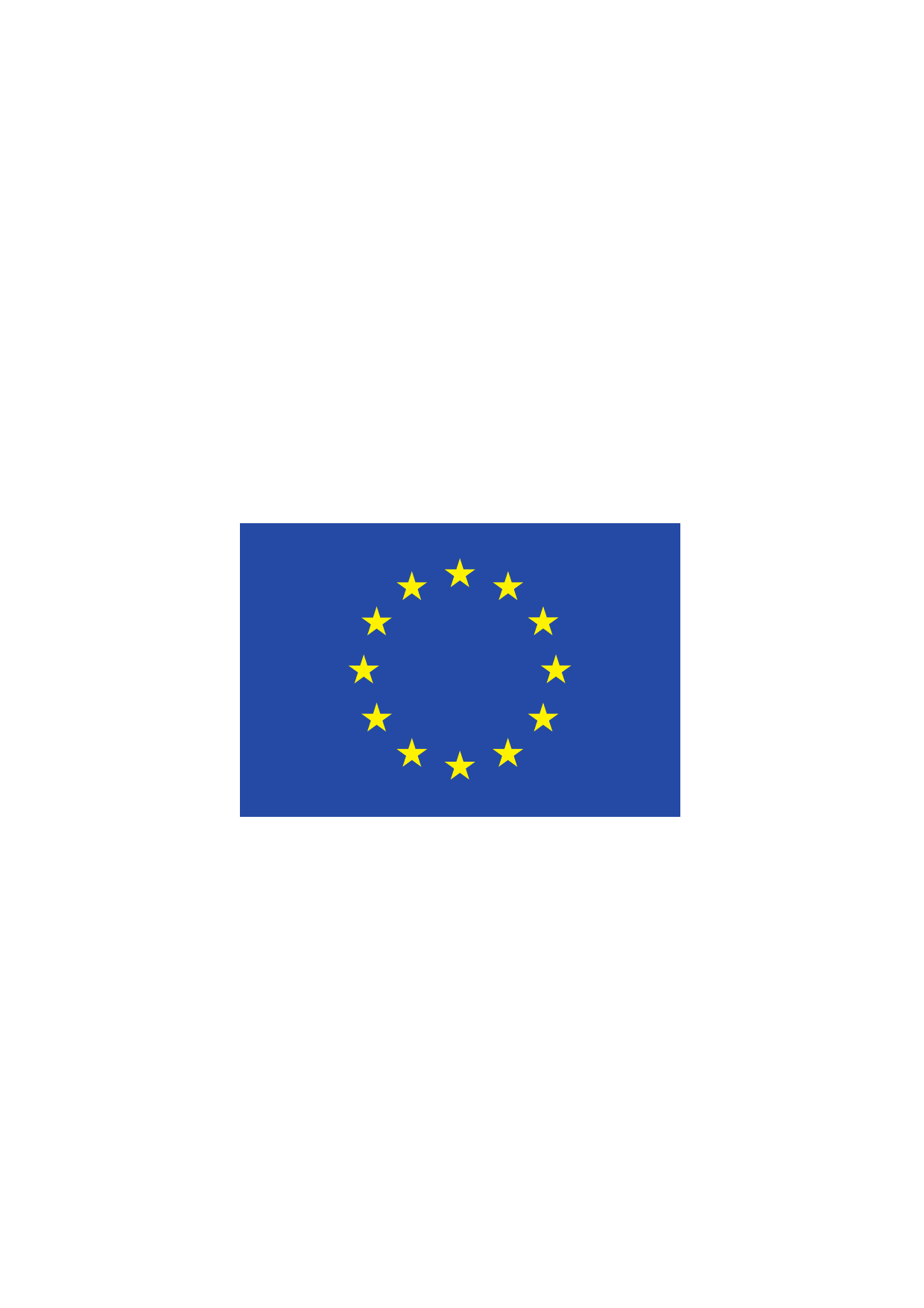}}}
\end{picture}

\thispagestyle{empty}
We present an $\Os(2^{0.5n})$ time and $\Os(2^{0.249999n})$ space randomized
algorithm for solving worst-case Subset Sum instances with $n$ integers. This is the first improvement over the
long-standing $\Os(2^{n/2})$ time and $\Os(2^{n/4})$ space algorithm due to
Schroeppel and Shamir (FOCS 1979).

\medskip 
We breach this gap in two steps: (1) We present a space efficient reduction to
the Orthogonal Vectors Problem (OV), one of the most central problem in
Fine-Grained Complexity. The reduction is established via an intricate
combination of the method of Schroeppel and Shamir, and the representation
technique introduced by Howgrave-Graham and Joux (EUROCRYPT 2010) for designing Subset Sum algorithms for the average case regime. (2) We provide an algorithm for OV that detects an
orthogonal pair among $N$ given vectors in $\{0,1\}^d$ with support size $d/4$
in time $\Ot(N\cdot2^d/\binom{d}{d/4})$. Our algorithm for OV is
based on and refines the representative families framework developed by Fomin, Lokshtanov, Panolan and Saurabh (J. ACM 2016).


\medskip 

Our reduction uncovers a curious tight relation between Subset Sum and OV, because any improvement of our algorithm for OV would imply an improvement over the runtime of Schroeppel and Shamir, which is also a long standing open problem.


\clearpage
\setcounter{page}{1}

\section{Introduction}
The most natural question in computational complexity is: Can an algorithm be improved, or is there some fundamental barrier stopping us from doing so?
A major theme in contemporary research has been to study this question in a
\emph{fine-grained} sense: given an algorithm using $\mathcal{T}$ time (and
$\mathcal{S}$ space) on worst-case instances, can this be improved to
$\mathcal{T}^{1-\eps}$ (or $\mathcal{S}^{1-\eps}$ space), for some $\eps>0$?

Because of the highly challenging nature of finding such improvements, researchers introduced several hypotheses that state that the currently best known algorithms already hit upon a barrier and therefore cannot be improved in the above sense.
Under these hypotheses, many simple algorithms for standard problems in $P$ like $3$-SUM, Edit Distance, or Diameter cannot be significantly improved. 
The fine-grained hardness of the latter two problems is based on the hardness of a problem that is particularly central in the area, called \emph{Orthogonal Vectors}: Given $N$ vectors in $\{0,1\}^d$, detect two orthogonal vectors.
A common hypothesis is that for $d=\omega(\log n)$ the problem cannot be solved
in $\Oh(N^{2-\eps})$ time for some constant $\eps>0$. See for example the survey~\cite{icm-survey}.

For NP-complete problems the situation is slightly different: Although similar fine-grained hypotheses for CNF-SAT and Set Cover have been introduced, they did not prove sufficient yet to rule out improvements of the currently best algorithms for basic NP-complete problems such as Traveling Salesman, Graph Coloring and MAX-3-SAT.
See the survey~\cite{ETH-survey} for some hardness results in this regime.
There may be a good reason for this: While improved polynomial time algorithms
can be naturally used as subroutines for improved exponential time algorithms,
the converse is far less natural. Therefore it is quite plausible that finding
better exponential time algorithms is much easier than finding faster polynomial time algorithms.
And indeed, in the last decade improved algorithms for basic problems such as
Undirected Hamiltonicity~\cite{Bjorklund14} and Graph
Coloring~\cite{BjorklundH06} were found. This motivates the optimism that
for many NP-complete problems the currently best known algorithms can still
be enhanced.

Equipped with this optimism, we study the fine-grained complexity of the following important class of NP-complete problems revolving around numbers.

\paragraph{Subset Sum, Knapsack and Binary Integer Programming}

In the Subset Sum problem, we are given as input a set of integers $\{w_1,\ldots,w_n\}$ and a target $t$.
The task is to decide if there exists a subset $S \subseteq \{1,\ldots,n\}$
such that the total sum of integers $w(S):= \sum_{i \in S} w_i$ is equal to $t$.

In the 1970's, \cite{horowitz-sahni} introduced the \emph{meet-in-the-middle
strategy} and solved Subset Sum in $\Os(2^{n/2})$ time and space.
Since then, it has been a notorious open question to improve their result:
\begin{openquestion}\label{qmain}
	\label{q:mitm}Can Subset Sum be solved in $\Os\left(2^{(1/2-\eps)n}\right)$ time, for $\eps>0$?
\end{openquestion}
A few years later, \cite{schroeppel} gave an algorithm for Subset Sum using
$\Os(2^{n/2})$ time and only $\Os(2^{n/4})$ \emph{space}.
In the last section of their paper, they ask the following:
\begin{openquestion}\label{q:ss}
	Can Subset Sum be solved in $\Os\left(2^{n/2}\right)$ time and $\Os\left(2^{(1/4-\eps)n}\right)$ space, for $\eps>0$?
\end{openquestion}

Both questions seemed to be out of reach until 2010, when
\cite{generic-knapsack1} introduced the \emph{representation technique} and used it
to solve \emph{random instances} of Subset Sum in $\Os(2^{0.337n})$ time.
The main idea behind the representation technique is to artificially expand the
search space such that a single solution has an exponential number $r$ of
representatives in the new search space. This allows us to subsequently restrict
attention to a $1/r$-fraction of the search space, which in some settings can be advantageous.
In the context of Subset Sum, this technique has already inspired improved algorithms for large classes of instances~\cite{stacs2015,stacs2016}, time-space trade-offs~\cite{subsetsum-tradeoff,dinur-tradeoff} and improved polynomial space algorithms~\cite{polyspace-stoc2017}.

Nevertheless, answers to
Questions~\ref{q:mitm} and~\ref{q:ss} for worst-case instances still remained elusive.

\subsection{Our Main Result and Key Insight}

Our main result is a positive answer to the 40-year old open Question~\ref{q:ss}:

\begin{restatable}{theorem}{mainthm}
	\label{main-thm}
	Every instance of Subset Sum can be solved in $\Os(2^{n/2})$ time and
	$\Os(2^{0.249999n})$ space by a randomized Monte Carlo
	algorithm with constant success probability.
\end{restatable}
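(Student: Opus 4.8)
The plan is to establish Theorem~\ref{main-thm} through the two-step program advertised in the abstract. First I would set up the space-efficient reduction from Subset Sum to Orthogonal Vectors. The starting point is the Schroeppel--Shamir framework: split the $n$ integers into four blocks of size $n/4$, so that a candidate solution is described by a choice of one of $2^{n/4}$ partial sums in each block, and the goal becomes finding four partial sums (one per block) that add to $t$. Schroeppel and Shamir show how to enumerate, using only $\Os(2^{n/4})$ space, the Cartesian-product search space in sorted order so that the correct combination can be detected. The key new idea is to overlay the representation technique of Howgrave-Graham and Joux on top of this: instead of searching for \emph{one} representation of the solution, we artificially enlarge the search space (e.g.\ by guessing a splitting of the target $t = t_1 + t_2$ modulo a cleverly chosen modulus, or by allowing $\{-1,0,1\}$-combinations that cancel) so that each true solution acquires roughly $r = 2^{\Theta(n)}$ representations. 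We then only need to inspect a $1/r$ fraction of the enlarged space; concretely, I would arrange the computation so that after fixing a random ``bucket'' the surviving candidates in two of the blocks become two families of vectors in $\{0,1\}^d$ (indexing which representatives are consistent with which target residues), and a valid solution corresponds precisely to an orthogonal pair among these vectors. The parameters should be tuned so that $d = \Theta(n)$, the number of vectors is $N = \Os(2^{n/4})$, the support of each vector is $d/4$, and the whole reduction runs in $\Os(2^{n/2})$ time and $\Os(2^{n/4 \cdot(1+o(1))})$ space — but crucially, solving the resulting OV instance faster than brute force will push the space below $2^{n/4}$.

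Second, I would invoke the sparse Orthogonal Vectors algorithm stated in the abstract: an orthogonal pair among $N$ vectors in $\{0,1\}^d$ with support exactly $d/4$ can be found in time $\tilde O\!\left(N \cdot 2^d / \binom{d}{d/4}\right)$. The point is that $\binom{d}{d/4} = 2^{(H(1/4)+o(1))d}$ with $H(1/4) \approx 0.811$, so $2^d/\binom{d}{d/4} = 2^{(1 - H(1/4) + o(1))d} = 2^{(0.188\ldots + o(1))d}$, which is substantially subexponential in $d$ relative to the trivial $2^d$ count and, with $d$ a suitably small constant multiple of $n$, yields a saving over the $2^{n/4}$ space bound while keeping the total time at $\Os(2^{n/2})$. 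This algorithm is built on the representative-families machinery of Fomin--Lokshtanov--Panolan--Saurabh: one thinks of each vector's support as a $(d/4)$-set, uses a representative family (computed via the polynomial/exterior-algebra or efficient-basis method) that is guaranteed to preserve the existence of a disjoint pair while being much smaller than the full collection, and then one only has to test pairs within these compressed families.

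Putting the two pieces together: run the reduction to produce the OV instance with $N = \Os(2^{n/4})$, $d = cn$ for the chosen constant $c$, and support $d/4$; then solve it with the sparse-OV algorithm. The total running time is $\Os(2^{n/2})$ (dominated by enumerating the Schroeppel--Shamir search space and iterating over the $\poly(n)$ many random buckets / target splittings to boost success probability), and the space is dominated by the storage needed for the representative families, which is $\Os\!\big(N \cdot 2^d/\binom{d}{d/4}\big) = \Os(2^{(1/4 - \delta)n})$ for some $\delta>0$; tuning $c$ (and hence $\delta$) to make the exponent at most $0.249999$ gives the claimed bound. Randomization enters through the choice of bucket/modulus in the representation step and through the hashing used inside the representative-family construction, and a constant number of repetitions yields constant success probability, giving the Monte Carlo guarantee.

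The main obstacle I anticipate is the first step — making the representation technique \emph{space-efficient} while simultaneously compatible with the Schroeppel--Shamir priority-queue enumeration. In the average-case setting the representation technique is run with a big hash table, which would blow up the space; the delicate part is to show that one can stream through the enlarged search space, block by block, in the right sorted order, so that the ``consistency with a random residue class'' constraint can be evaluated on the fly and the resulting two families of $\{0,1\}^d$ vectors never need to be materialized all at once beyond the $2^{(1/4-\delta)n}$ budget. Getting the arithmetic of the three parameter regimes (the modulus controlling the number of representations, the dimension $d$, and the block sizes) to line up so that time stays exactly at $2^{n/2}$ while space strictly drops below $2^{n/4}$ is where the real work lies; the rest is assembling known components.
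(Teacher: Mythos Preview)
Your proposal has the two main ingredients right but the architecture inverted, and the space accounting does not work. You set $N = \Os(2^{n/4})$ and then claim the OV solver uses space $\Os\!\big(N \cdot 2^d/\binom{d}{d/4}\big) = \Os(2^{(1/4-\delta)n})$. But $2^d/\binom{d}{d/4} = 2^{(1-h(1/4))d} > 1$, so $N \cdot 2^d/\binom{d}{d/4} > N \geq 2^{n/4}$; that quantity is the OV \emph{time} bound, not its space, and it exceeds $2^{n/4}$ rather than falling below it. The paper's OV algorithm (Theorem~\ref{ov-lemma}) uses $\Otilde(N + 2^d)$ space, which is at least $N$, so the OV step cannot be where the space is saved.

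In the paper the roles are reversed. The representation technique is applied at \emph{two} levels, using three disjoint mixer sets $M, M_L, M_R$ of size $\mu n \approx 0.217n$: the first level splits the solution across $M$, the second level splits each half again across $M_L$ (resp.\ $M_R$). This produces four lists $\Ll_1, \Ll_2, \Rr_1, \Rr_2$ whose expected sizes are already $\Os(2^{(1/4 - \Omega(1))n})$ --- \emph{that} is where the space saving comes from, before any OV call, and these four small lists are what feed the Schroeppel--Shamir priority queues. The price paid is a disjointness constraint on the shared mixer $M$ between elements coming from $\Ll_2$ and $\Rr_2$; checking it naively would push the running time above $2^{n/2}$, and the sparse-OV algorithm is precisely what brings the time back down to $\Os(2^{n/2})$ (the total OV input over the run has size $\Os(2^{n/2 - \mu n(1/2+\lambda-h(\lambda/2))})$ and the OV runtime multiplies by exactly the reciprocal factor). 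Your single-level sketch does not produce lists below $2^{n/4}$, and you are also missing the win-win preprocessing (Lemmas~\ref{win-win} and~\ref{small-lambda}) that disposes of instances where the mixers fail or the solution is unbalanced; without it one cannot assume the structure the two-level representation requires.
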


The result implies an analogous
space improvement for Knapsack and Binary Integer Programming (see
Corollary~\ref{cor:knbip}).\footnote{\label{fn:apppd}See Appendix~\ref{sec:problems} for definitions of all problems considered in this paper.}
To explain our key ideas and their combination with existing methods, the following problem is instrumental:

\newcommand{\wOV}{\ensuremath{\textnormal{WOV}}}
\defproblem{\underline{Weighted Orthogonal Vectors} \hfill  \textcolor{gray}{(shorthand notation: $\wOV(N,d,h)$)}}
{Families of $N$ weighted sets $\mathcal{A}, \mathcal{B} \subseteq \binom{[d]}{h} \times \mathbb{N}$ of Hamming weight $h$, target integer $t$}
{Detect $(A,w_A) \in \mathcal{A}$ and $(B,w_B) \in \mathcal{B}$ such that $A$ and $B$ are disjoint and $w_A+w_B=t$}

The starting point is the $\Os(2^{n/2})$ time and space algorithm by~\cite{horowitz-sahni}.
Their algorithm can be seen as a reduction to an instance of $\wOV(2^{n/2},0,0)$. Since $d=0$, this is an instance of 2-SUM,\textsuperscript{\ref{fn:apppd}} and the runtime follows by a linear time algorithm for 2-SUM.

In contrast, the representation technique by \cite{generic-knapsack1} can also
be thought of as a reduction from instances of Subset Sum to WOV, but with the
assumption that the Subset Sum instance does not have \textit{additive structure}.\footnote{Specifically, this means that $|w(2^{[n]})| \geq 2^{(1-\varepsilon)n}$ for some small $\varepsilon >0$, where $w(2^{[n]})$ denotes $\{w(X) : X \subseteq [n] \}$. }
In a follow-up work, \cite{stacs2016} loosen the assumption of
\cite{generic-knapsack1} and show that their reduction applies whenever there is a small subset of $d$ weights without additive structure. Their work implies a reduction from every\footnote{Actually not every instance, but instances where the reduction fails can be solved quickly by other means.} instance of Subset Sum to $\wOV(N,d ,d/ 4)$, where $N=2^{n/2}\binom{d}{d / 4}/2^d$ and $d/n > 0$ is a small (but fixed) positive constant.

Note that the two above reductions feature an intriguing trade-off between the \emph{size} $N$ and the \emph{dimension} $d$ of the produced instance, and the natural question is how the worst-case complexities of solving these instances as quick as possible compare.
Our first step towards proving Theorem~\ref{main-thm} is to show that this trade-off is \textbf{tight}, unless Question~\ref{q:mitm} is answered positively:
\mdfsetup{%
    nobreak=true,
	middlelinecolor=gray,
	middlelinewidth=1pt,
	backgroundcolor=gray!10,
    innertopmargin=7pt,
	roundcorner=5pt}
\begin{mdframed}
	\textbf{Key Insight:} There is an algorithm for $\wOV$ whose run time dependency in $d$ matches the instance decrease in $d$ in the reduction from~\cite{stacs2016}.
	In particular, $\wOV(N,d,d/4)$ can be solved in $\Ot(N\cdot
    2^d/\binom{d}{d/4})$ time and $\Ot(N+2^d)$ space (see
    Theorem~\ref{ov-lemma}).
\end{mdframed}

This insight has two interesting immediate consequences. First, it provides an
avenue towards resolving Question~\ref{q:mitm}, because a positive answer to
this question is implied by an improvement of our algorithm even for the unweighted version of $\wOV(N,d,d/4)$. To the best of our knowledge, such an improvement is entirely consistent with all the known hypotheses on (low/moderate/sparse) versions of the Orthogonal Vectors problem~\cite{ChenW19, GaoIKW19}. In fact, to answer Question~\ref{q:mitm} affirmatively we only need an improvement for the regime $2^d/\binom{d}{d/4}\leq N \leq 2^d$, while previous hypotheses address the regime where $d / \log N$ tends to infinity.

Second, a combination of the reduction from~\cite{stacs2016} and our
algorithm for $\wOV(N,d,d/4)$ would give an algorithm for Subset Sum that runs
in $\Os(2^{n/2})$ time and $\Os(2^{(1/2-\delta)n})$ space, for some small
$\delta >0$. While this is not even close to the memory improvement of~\cite{schroeppel}, one may hope that by adding the ideas~\cite{schroeppel} on top of this approach results in a
better memory usage. Notwithstanding the significant hurdles that need to be overcome to make these two methods combine, this is exactly how we get the improvement in Theorem~\ref{main-thm}.

\subsection{The Representation Technique Meets Schroeppel and Shamir's Technique}
We now give a high level proof idea of Theorem~\ref{main-thm}.
While our conceptual contribution lies in the aforementioned key insight, our
main \emph{technical} effort lies in showing that indeed the representation
technique and the algorithm of~\cite{schroeppel} can be combined to get a space efficient reduction from Subset Sum to $\wOV$. 

Following~\cite{horowitz-sahni}, the method from~\cite{schroeppel} can also be
seen as a reduction from Subset Sum to an instance of $\wOV(2^{n/2},0,0)$, but it
is an \emph{implicit} one: The relevant vectors of the instance can be
enumerated quickly by decomposing the search space of $2^{n/2}$ vectors into a
Cartesian product of two sets of $2^{n/4}$ vectors, and generating all vectors in a
useful order via priority queues. See Section~\ref{sec:prel} for a further explanation.
Thus, to prove Theorem~\ref{main-thm}, we aim to generate the relevant parts of
the instance $\mathcal{I}$ of $\wOV(N,d,d/4)$ defined by the representation
technique efficiently, using priority queues of size at most $\Os(2^{0.249999n})$.

Unfortunately, the vectors from the instance $\mathcal{I}$ defined by the representation technique are elements of a search space of size $2^{(1/2+\Omega(1))n}$; its crux is that there are only $2^{(1/2-\Omega(1))n}$ vectors in the instance because we only have vectors with a fixed inner product with the weight vector $(w_1,\ldots,w_n)$.\footnote{Some knowledge of the representation technique is required to understand this in detail; We explain the representation technique in Section~\ref{sec:ss-via-ov}.} Thus a straightforward decomposition of this space into a Cartesian product will give priority queues of size $2^{(1/4+\Omega(1))n}$.

To circumvent this issue, we show that we can apply the representation technique
\emph{again} to generate the vectors of the instance $\mathcal{I}$ efficiently
using priority queues of size $\Os(2^{0.249999n})$. While the representation technique was already used in a multi-level fashion in several earlier works~ (see e.g.~\cite{subsetsum-tradeoff, generic-knapsack1}), an important ingredient of our algorithm is that we apply the technique in different ways at the different levels depending on the structure of the instance.

\subsection{Additional Results and Techniques}
Our route towards Theorem~\ref{main-thm} as outlined above has the following by-products that may be considered interesting on their own. The first one was already referred to in the `key insight':

\paragraph{An algorithm for Orthogonal Vectors}
\newcommand{\OV}{\ensuremath{\textnormal{OV}}}

A key subroutine in this paper is the following algorithm for Orthogonal
Vectors. We let $\OV(N,d,h)$ refer to the problem $\wOV(N,d,h)$ restricted to
unweighted instances (that is all involved integers are zero).

\begin{restatable}{theorem}{ovmainthm}
	\label{ov-simplified}
	There is a Monte-Carlo algorithm solving $\OV(N,d,d/4)$ using 
	$\Ot\left( N \cdot 2^{d}/\binom{d}{d/4} \right)$ time and $\Ot(N+2^d)$ space.
\end{restatable}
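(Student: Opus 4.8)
The plan is to prove Theorem~\ref{ov-simplified} by interpolating between three elementary algorithms, gluing them together with the representation technique of~\cite{generic-knapsack1}, and phrasing the resulting object as a (randomised, dimension‑localised) representative family.

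Two of the three building blocks are easy, and they handle the two extreme regimes of $N$. \emph{Brute force} tests all pairs and solves $\OV(N,d,d/4)$ in $\Ot(N^2)$ time and $\Ot(N)$ space, which already meets the target bound whenever $N\le 2^d/\binom{d}{d/4}$. At the other end, an \emph{inclusion--exclusion} algorithm suffices: from $[A\cap B=\emptyset]=\sum_{S\subseteq A}(-1)^{|S|}[S\subseteq B]$ one gets that the number of disjoint pairs equals $\sum_{S\subseteq[d]}(-1)^{|S|}\,a_\uparrow(S)\,b_\uparrow(S)$, where $a_\uparrow(S)=|\{A\in\mathcal{A}:S\subseteq A\}|$ and $b_\uparrow$ is analogous. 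Since every $A\in\mathcal{A}$ has size $d/4$, only sets $S$ with $|S|\le d/4$ contribute, so the two truncated upward zeta transforms and the final sum take $\Ot(N+\binom{d}{d/4})$ time and $\Ot(N+2^d)$ space; this meets the target whenever $N\gtrsim \binom{d}{d/4}^2/2^d$, in particular whenever $N=\Theta(\binom{d}{d/4})$.

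The heart of the matter is the intermediate range, where we run a randomised separation step. Pick a colouring $c:[d]\to\{\mathsf{a},\mathsf{s},\mathsf{b}\}$ whose marginals depend on $\log_2 N/d$, and keep only those $A\in\mathcal{A}$ avoiding colour $\mathsf{b}$ and those $B\in\mathcal{B}$ avoiding colour $\mathsf{a}$ (optionally also restricting to sets with a fixed number of $\mathsf{s}$‑coordinates, at the cost of a $\poly(d)$ factor). If a disjoint pair $(A^\star,B^\star)$ exists it survives this restriction with probability $2^{-\Theta(d)}$, and once it does, any two surviving sets can intersect only inside the colour‑$\mathsf{s}$ coordinates; hence it remains to detect a disjoint pair on that smaller ``shared'' universe --- either by recursing (after fixing the splitting profile the shared instance is again an Orthogonal Vectors instance, of somewhat smaller dimension), or, once its dimension has dropped, by running the inclusion--exclusion algorithm above or brute force or a representative family of the shared family on it. Repeating the colouring $2^{\Theta(d)}$ times with fresh randomness drives the failure probability below a constant, which is all a Monte Carlo algorithm needs; the working space is one shared‑universe zeta table reused across repetitions plus the input, i.e.\ $\Ot(N+2^d)$. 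The family obtained from a single colouring is precisely a \emph{query‑local} representative family: for a fixed query it has the covering property with an exponentially small but controlled probability, and the union over the repetitions is representative for \emph{all} queries with high probability --- this is the sense in which the algorithm refines the representative‑families framework.

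The main obstacle is purely quantitative: one must choose the colour marginals (level by level, in the recursive version) and the thresholds at which to switch from separation to inclusion--exclusion and from inclusion--exclusion to brute force so that the total running time comes out to exactly $\Ot(N\cdot 2^d/\binom{d}{d/4})$ and not, say, $\Ot(N\cdot 2^{d/4})$. The subtlety is that the work charged by one separation level is roughly (number of colourings so far)$\,\times\,$(current family size), and as the alive fraction of the coordinates shrinks this product grows; one has to halt the recursion before it overtakes the target budget, argue that whatever sub‑instance remains is then solved fast by one of the other two methods, and check that the three regimes of $N$ tile the whole range $N\le\binom{d}{d/4}$ with no gap. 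Making this stitching go through --- together with the worst‑case form of the concentration estimate for how $A^\star\cup B^\star$ distributes over the colour classes, which forces re‑randomising the colouring rather than fixing a favourable one --- is the technical content of the proof; the two endpoints above are merely its boundary cases.
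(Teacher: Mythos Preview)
Your route is genuinely different from the paper's. The paper does \emph{not} recurse and does \emph{not} interpolate between regimes of $N$: it builds a single $1$-cover of the disjointness matrix $\disj_{p,q,d}$ by sampling sets $S\in\binom{[d]}{x}$ with $x\approx d/2$ (rather than $S\subseteq[d]$ uniformly, which is the crucial improvement over~\cite{fomin-jacm2016}), introduces the new parameter of \emph{sparsity} of a $1$-cover, shows that this construction has sparsity $\Ot(2^d/\binom{d}{d/4})$, and then --- via precomputed lookup tables on a $c$-fold product decomposition of $[d]$ --- answers each query in time proportional to the sparsity rather than to the number $z\approx 2^{d/2}$ of rectangles. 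The whole algorithm is flat; the amortisation that turns ``$z$ repetitions of cost $N$'' into ``$N$ queries of cost $\Psi$'' is done by the lookup tables, not by recursion.

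Your recursive colour-and-project scheme \emph{can} be made to hit the same bound, but only after the quantitative step you explicitly defer. Concretely: if at every level you split $[d]$ into $(U_a,S,U_b)$ with $|S|=\sigma d$ and $|U_a|=|U_b|=(1-\sigma)d/2$, and keep only balanced survivors ($|A\cap S|=|B\cap S|=\sigma d/4$), then one level needs $R\approx 2^{(1-\sigma)d/2}$ repetitions while each surviving family shrinks by $2^{-\gamma(1-\sigma)d}$ with $\gamma=1-\tfrac34 h(1/3)$. The work at level $i$ is therefore $N\cdot 2^{\,d\,[\,1/2-\gamma+\tau_i(\gamma-\sigma/2)\,]}$ with $\tau_i=\sigma^i$. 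This is level-independent and equal to the target $N\cdot 2^{(1-h(1/4))d}$ precisely when $\sigma=2\gamma$ and $\gamma+\big(1-h(1/4)\big)=\tfrac12$; the latter is the exact identity $h(1/4)+\tfrac34 h(1/3)=\tfrac32$, giving $\sigma=2-\tfrac32 h(1/3)\approx 0.623$. With this single $\sigma$ and $O(\log d)$ levels the total time is $\Ot(N\cdot 2^d/\binom{d}{d/4})$ for \emph{every} $N$; neither the brute-force nor the inclusion--exclusion endpoint, nor an $N$-dependent colouring, is needed. Without identifying this constant (or an equivalent balancing), the scheme only delivers $\Ot(N\cdot 2^{d/2})$ --- the calculation you postpone is exactly the proof.
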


An easy reduction shows the same runtime and space usage can be obtained for $\wOV(N,d,d/4)$.
Our algorithm for Orthogonal Vectors uses the general blueprint of an algorithm
by~\cite{fomin-jacm2016} (which in turn builds upon ideas
from~\cite{bollobas1965generalized, Monien83}). However, to ensure that the
algorithm for Orthogonal Vectors combined with our methods result in an $\Os(2^{n/2})$
time algorithm for Subset Sum, we need to refine their method and analysis. 

To facilitate our presentation, we consider a new communication complexity-related parameter of \emph{$1$-covers of a matrix}
that we call the `\emph{sparsity}'. We show that a $1$-cover of low sparsity of a specific \emph{Disjointness Matrix}
implies an efficient algorithm for Orthogonal Vectors, and we exhibit a $1$-cover of low sparsity of the disjointness matrix. We also prove that our
$1$-cover has nearly optimal sparsity. This means that Question~\ref{main-thm} cannot be resolved directly via our route combined with improved $1$-covers.
Additionally, we use several preprocessing techniques to ensure that the
space usage of our algorithm is only $\Ot(N + 2^d)$,
which is crucial to Theorem~\ref{main-thm}. 

\paragraph{Reduction to weighted $P_4$}
While we do not resolve Question~\ref{q:mitm}, our approach
provides new avenues by reducing it to typical questions in the study of
fine-grained complexity of problems in the complexity class P. 
Our new reductions enable us to show a new
connection between Subset Sum and the following graph problem:
In the Exact Node Weighted $P_4$ problem one is
given an undirected graph $G=(V,E)$ with vertex weights and the task is to
decide whether there exists a path on four vertices with weights summing to $0$ (see
also Appendix~\ref{sec:problems}).

\begin{restatable}{theorem}{weightedpfour}
	\label{thm:weightedp4}
    If Exact Node Weighted $P_4$ on a graph $G=(V,E)$ can be solved in
    $\Oh(|V|^{2.05})$ time, then Subset Sum can be solved in $\Os(2^{(0.5 - \delta)n})$ randomized time for some $\delta > 0$.
\end{restatable}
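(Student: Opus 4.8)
The plan is to reduce Subset Sum to a small number of instances of Exact Node Weighted $P_4$, each on a graph $G=(V,E)$ with $|V|,|E|\le 2^{(1/4-\gamma)n}$ for a constant $\gamma>0$ chosen so that $2.05\cdot(1/4-\gamma)<1/2$. The hypothesised $\Oh(|V|^{2.05})$ algorithm then solves each instance in $\Os(2^{(1/2-\delta)n})$ time with $\delta=1/2-2.05(1/4-\gamma)>0$; combined with the time to build the graphs and a polynomial number of repetitions for amplification, this proves Theorem~\ref{thm:weightedp4}. Note the quantitative tightness already: since $2.05>2$, even $|V|=2^{n/4}$ is too large (it yields $2^{0.5125n}>2^{n/2}$), so we genuinely have to push the vertex count \emph{below} $2^{n/4}$, and a naive ``four quarters'' split is therefore insufficient.

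To obtain such small instances we use the representation technique~\cite{generic-knapsack1,stacs2016} in the same spirit as in the proof of Theorem~\ref{main-thm}. As there, one first reduces Subset Sum to detecting, among two families $\mathcal A,\mathcal B$ of ``partial solutions'' (subsets of $[n]$, or signed $\{-1,0,1\}$-vectors, of a prescribed profile), a pair $(A,B)$ whose supports are compatible (so that $A\cup B$, resp.\ $A+B$, is a genuine $0/1$-vector, i.e.\ a half-solution) and with $w(A)+w(B)=t$; the families are thinned by imposing random modular constraints on the weights of $A$, of $B$, and of certain \emph{sub-parts} of $A$ and $B$. These constraints must simultaneously (a) bring $|\mathcal A|,|\mathcal B|$ and the number of relevant support-compatible pairs below $2^{(1/4-\gamma)n}$, and (b), as long as the number of free residues is at most the logarithm of the number of representations of a solution, leave with constant probability at least one surviving representation of any yes-instance. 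Both requirements cannot be met with the plain ``$(n/4)$-subset'' representation, which forces $|\mathcal A|\ge 2^{0.311n}$; one needs the multi-level and/or signed variants, exactly as in the optimisations behind Theorem~\ref{main-thm} and~\cite{generic-knapsack1}.

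The detection task is cast as a weight-$0$ $P_4$ problem on a four-layered graph: the two inner layers are $\mathcal A$ and $\mathcal B$, and the two outer layers index the sub-parts certifying the finest modular filters; we put an edge between $A$ and $B$ exactly when their supports are compatible (a single ``orthogonality''-type relation between the two objects, which by itself rules out all overlap artefacts), and an edge between a sub-part and the $\mathcal A$- or $\mathcal B$-object it refines; we put weight $w(A)$ on vertex $A$, weight $w(B)-t$ on vertex $B$, weight $0$ on the outer vertices, and add a geometric progression of large ``separating'' offsets to the four layers so that, for each of the finitely many non-straight $P_4$ shapes available in a four-layered graph, the offsets contribute a fixed nonzero amount, whereas the straight shape contributes exactly $0$. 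Then a weight-$0$ $P_4$ exists iff there are $A\in\mathcal A$, $B\in\mathcal B$ with compatible supports, $w(A)+w(B)=t$, each certified by a surviving sub-part — i.e.\ iff the instance is a yes-instance — and such a $P_4$ is present with constant probability over the random moduli and the internal randomness. Since $|V|,|E|\le 2^{(1/4-\gamma)n}$, the graph can be built in $\Os(2^{(1/4-\gamma)n})\subseteq\Os(2^{(1/2-\delta)n})$ time, invoking the efficient-enumeration ideas from the proof of Theorem~\ref{main-thm} to list the thinned families and their edges within that budget.

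I expect the main obstacle to be this last round of parameter engineering: driving $|\mathcal A|,|\mathcal B|$ and $|E|$ strictly below $2^{(1/2)/2.05\,n}=2^{0.2439n}$ while (i) the modular thinning stays lossless, (ii) the \emph{three} pairwise relations available along a $P_4$ suffice to certify both support-compatibility of $A$ and $B$ and the two levels of filtering (so that no spurious ``overlapping'' partial solutions are accepted), and (iii) the whole graph remains constructible within the $\Os(2^{(1/2-\delta)n})$ budget. Reconciling these three demands is precisely where the full strength of the (multi-level, possibly signed) representation technique, together with the enumeration machinery developed for Theorem~\ref{main-thm}, has to be deployed; the plain four-quarter split, which would produce $|V|=\Theta(2^{n/4})$ and hence $2^{0.5125n}$ time, falls just short of the target.
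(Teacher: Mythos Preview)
Your high-level plan is the right one and matches the paper's route: use the two-level representation technique to obtain four families of partial solutions each of size strictly below $2^{n/4}$, build a four-layered weighted graph whose straight $P_4$'s correspond to solutions, and add large additive offsets per layer so that any non-straight $P_4$ has nonzero total weight. Two points, however, need correction or sharpening.

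First, the claim $|E|\le 2^{(1/4-\gamma)n}$ is false. The edges between the middle layers encode a disjointness relation and are dense; in the paper's construction $|E|=\Os(|V|^2)\le\Os(2^{0.486n})$. This is not fatal---graph construction still fits in $\Os(2^{(1/2-\delta')n})$---but you should not assert that $|E|$ is linear in $|V|$.

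Second, your graph (two weighted inner layers plus two weight-$0$ outer ``certifying'' layers) is more awkward than needed, and as written it is not clear it works: with weight $0$ on the outer vertices the $P_4$ weight is just $w(A)+w(B)-t$, so the outer vertices carry no arithmetic content, and you have not specified what relation the outer--inner edges encode or why a weight-$0$ $P_4$ forces $A$ and $B$ themselves to be small enough. The paper's construction is cleaner: the four layers \emph{are} the four lists $\Ll_1,\Ll_2,\Rr_2,\Rr_1$ produced by Lemma~\ref{main-lem}, each vertex carrying its own set weight (plus offsets $M,2M,4M,-7M-t$ respectively), and $v^i_X$ is joined to $v^{i+1}_Y$ iff $X\cap Y=\emptyset$. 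The obstacle you flagged---that only three pairwise relations are available along a $P_4$---is dissolved structurally rather than by extra gadgetry: by how $\Ll_1,\Ll_2,\Rr_1,\Rr_2$ are built from the partition $L\uplus M_L\uplus M\uplus M_R\uplus R$, any $A_1\in\Ll_1$ is automatically disjoint from any $A_3\in\Rr_2$ and any $A_4\in\Rr_1$, and any $A_2\in\Ll_2$ is automatically disjoint from any $A_4\in\Rr_1$; only the three \emph{consecutive} pairs can overlap, and those are exactly the three $P_4$ edges. Choosing $\mu=1/(3+2h(1/4))\approx 0.2164$ then yields $|V|\le\Os(2^{0.2429n})$, comfortably under your target $2^{0.2439n}$, and $|V|^{2.05}\le\Os(2^{0.498n})$.
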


In comparison to the straightforward reduction from Subset Sum to $4$-SUM, our
reduction creates a set of integers with an additional path constraint.
Thus a possible attack towards resolving Question~\ref{q:mitm} is to design a
quadratic time algorithm for Exact Node Weighted $P_4$ (or more particularly, only
for the instances of the problem generated by our reduction).

The na\"ive algorithm for Exact Node Weighted $P_4$ works in $\Ot(|V|^3)$ time. To the
best of our knowledge the best algorithm for this problem runs in $\Ot(|V|^{2.5})$
when $\omega = 2$~\cite{karl} (where $\omega$ is the exponent of currently the fastest algorithm for matrix multiplication).\footnote{Briefly described, reduce a problem instance on $(V,E)$ to the problem of finding a triangle in an unweighted graph on $|V|^2$ edges: One vertex of the triangle represents the two extreme vertices of the path and the sum of the weights of the two first vertices on the path, and the other two vertices of the triangle represent the inner vertex. This instance of unweighted triangle can be solved in $\Ot(|V|^{2.5})$ with standard methods, assuming $\omega=2$.} 
On the lower bounds side, using the `vertex minor' method from~\cite{AbboudL13}
it can be shown that the problem of detecting triangles in a graph can be
reduced to Exact Node Weighted $P_4$~\cite{amir}. This explains that obtaining a
quadratic time algorithm may be hard (since it is even hard to obtain for detecting
triangles). 
However, detecting triangles in a graph is known to be solvable in $\Ot(|V|^{\omega})$ time,  Therefore it is still
justified to aim for a $\Ot(|V|^\omega)$ time algorithm for Exact Node Weighted
$P_4$. We leave it as an intriguing open question whether Exact Node Weighted $P_4$ can be solved in $\Ot(|V|^\omega)$
time.

\paragraph{More general problems} 

Known reductions from~\cite{nederlof-mfcs2012} combined with Theorem~\ref{main-thm} also imply the following improved algorithms for generalizations of the Subset Sum problem (see Appendix~\ref{sec:problems} for their definitions):

\begin{corollary}\label{cor:knbip}
	Any instance of Knapsack on $n$ items can be solved in $\Os(2^{n/2})$ time
    and $\Os(2^{0.249999n})$ space, and any instance of Binary Integer
    Programming with $n$ variables and $d$ constraints with maximum absolute
    integer value $m$ can be solved in $\Os(2^{n/2}(\log(mn)n)^d)$ time and
    $\Os(2^{0.249999n})$ space.
\end{corollary}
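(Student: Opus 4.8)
The plan is to obtain Corollary~\ref{cor:knbip} by feeding Theorem~\ref{main-thm} into the black-box reductions of~\cite{nederlof-mfcs2012}. The first step is therefore to recall (and, where needed, lightly re-derive) the relevant statement from that work: any algorithm solving Subset Sum on $n$ integers in time $T(n)$ and space $S(n)$ can be turned into an algorithm solving Knapsack on $n$ items in time $\Os(T(n))$ and space $\Os(S(n))$, and into an algorithm solving Binary Integer Programming with $n$ variables and $d$ constraints of maximum absolute coefficient $m$ in time $T(n)\cdot(\log(mn)\,n)^{d}$ and space $\Os(S(n))$. Once this is in place, instantiating with $T(n)=\Os(2^{n/2})$ and $S(n)=\Os(2^{0.249999n})$ from Theorem~\ref{main-thm} yields precisely the two claimed bounds.

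For Knapsack the reduction proceeds as follows. We may assume all weights and profits are positive integers. For the \emph{exact} variant (find $S$ with $w(S)=c$ and $p(S)=p$) one interleaves the two coordinates into single integers $a_i := w_i\cdot M + p_i$ with $M := n\cdot p_{\max}+1$ and target $c\cdot M + p$; since the profit parts never carry into the weight parts, $\sum_{i\in S} a_i$ equals the target iff both original equalities hold, and this preserves the number of items. The \emph{inequality} variant (weight $\le c$, profit $\ge p$) is handled by the slack/threshold machinery of~\cite{nederlof-mfcs2012}: one reduces to a polynomial (or at worst $2^{o(n)}$) number of exact queries, each answered by one invocation of the algorithm of Theorem~\ref{main-thm}. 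A constant or sub-exponential number of calls keeps the total running time $\Os(2^{n/2})$ and the space $\Os(2^{0.249999n})$, and — after an $O(\log(\text{number of calls}))$-fold amplification of the Monte Carlo subroutine, which is absorbed into $\Os(\cdot)$ — keeps the overall error probability constant.

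For Binary Integer Programming the same idea is applied constraint by constraint. Each constraint $\sum_j a_{ij}x_j \le b_i$ with $|a_{ij}|,|b_i|\le m$ is folded in by enumerating over $O(\log(mn)\cdot n)$ candidate residues/thresholds that provably suffice (via a CRT/scaling argument), and the $d$ constraints are combined into a single Subset-Sum-type instance on the same $n$ variables, solved by Theorem~\ref{main-thm}. Multiplying the $(\log(mn)\,n)$ factor over the $d$ constraints gives time $\Os\!\big(2^{n/2}(\log(mn)\,n)^{d}\big)$ and space $\Os(2^{0.249999n})$, exactly as stated.

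The substantive content of this corollary lies entirely in~\cite{nederlof-mfcs2012}; the only points I would verify carefully are (i) that those reductions are genuinely \emph{space}-preserving — using $\Os(S(n))$ space rather than $\Os(T(n))$ — and that they increase the number of Subset Sum items by at most an additive $O(\log(\cdot))$ term, so that both $\Os(2^{n/2})$ and $\Os(2^{0.249999n})$ survive unchanged; and (ii) that the (sub-)polynomial number of oracle calls is compatible with the randomized, constant-success-probability guarantee of Theorem~\ref{main-thm}, which is immediate by the standard amplification mentioned above. I expect (i) to be the only place where care is really needed, since a naive meet-in-the-middle reduction for Knapsack or BIP stores $\Os(2^{n/2})$ intermediate pairs and would destroy the space bound, so the correctness of Corollary~\ref{cor:knbip} really hinges on the fact that the reductions of~\cite{nederlof-mfcs2012} avoid this.
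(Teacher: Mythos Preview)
Your proposal is correct and matches the paper's approach exactly: the paper does not give a standalone proof of this corollary but simply states that it follows by combining Theorem~\ref{main-thm} with the known reductions from~\cite{nederlof-mfcs2012}. Your write-up in fact supplies more detail than the paper itself, correctly flagging the space-preservation of those reductions and the Monte Carlo amplification as the only points requiring care.
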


\subsection{Related Work}

It was shown in~\cite{schroeppel} that Subset Sum admits a
time-space tradeoff, i.e. an algorithm using $\mathcal{S}$ space and $2^{n}/\mathcal{S}^2$ time for any $\mathcal{S} \le
\Os(2^{n/4})$. This tradeoff was improved by
\cite{subsetsum-tradeoff} for almost all tradeoff parameters (see also
\cite{dinur-tradeoff}).
We mention in the passing that as direct consequence of Theorem~\ref{main-thm}, 
the Subset Sum admits a time-space tradeoff using
$2^{n}/\mathcal{S}^{0.5/0.249999}$ time and $\mathcal{S}$ space, for any
$\mathcal{S} \leq \Os(2^{0.249999n})$, but the obtained parameters are only better than the previous works for $\mathcal{S}$ chosen closely to its maximum. See Appendix~\ref{proof-of-lem:sumsetenum} for a proof.

In~\cite{stacs2015}, the authors considered \ssum parametrized by the parameter
$\beta$ (which is defined as the largest number of subsets of the input integers
that yield the same sum) and obtained an algorithm running in time $\Os(2^{0.3399n}\beta^4)$. Subsequently,
\cite{stacs2016} showed that one can get a faster algorithm for \ssum than
meet-in-the-middle if $\beta \le 2^{(0.5 - \eps)n}$ or $\beta \ge 2^{0.661n}$.
Recently, \cite{polyspace-stoc2017} gave an algorithm for \ssum running in
$\Os(2^{0.86n})$ time and polynomial space, assuming random access to a random oracle.


From the pseudopolynomial algorithms perspective, Subset Sum has also been
subject of recent stimulative
research~\cite{subset-sum-lower2,bringmann-soda,subset-sum-lower,sosa-wu,
koiliaris-soda,simplified-koiliaris,pseudopolynomial-polyspace}. These
algorithms use plethora of ingenious algorithmic techniques, e.g., dynamic
programming, color-coding and the Fast Fourier Transform.

\paragraph{Average case complexity and representation technique}

In a breakthrough paper, \cite{generic-knapsack1} introduce the
\emph{representation technique} and showed $\Os(2^{0.337n})$ time and
$\Os(2^{0.256n})$ space algorithm for an
\ssum in average-case setting. It was improved by \cite{generic-knapsack2} who
gave an algorithm running in $\Os(2^{0.291n})$ time and space.

The representation technique already found several applications in the worst-case setting for other problems (see \cite{esa19,nederlof-setcover,bin-packing}).


\paragraph{Orthogonal Vectors}

Naively, the Orthogonal Vectors problem can be solved in $\Oh(d N^2)$ time. For large $d$, only a slightly faster algorithm that runs
in time $N^{2-1/\Oh(\log(d/\log{N}))}$ time is
known~\cite{abboud-williams-yu-15,chan-williams-16}. The assumption that for $d
= \omega(\log{N})$ there is no $\Oh(N^{2-\eps})$ time algorithm any $\eps > 0$
is a central conjecture of fine-grained complexity (see~\cite{icm-survey} for an
overview).

In this paper, we are mainly interested in linear (in $N$) time algorithms for OV.
It was shown in~\cite{ov-seth-hard} that OV cannot
be solved in $\Oh(N^{2-\eps} \cdot 2^{o(d)})$ time for any $\eps > 0$ assuming SETH. In~\cite{bjorklund-ov} an algorithm for OV was given that runs in $\Ot(D)$ time, where $D$ is the total number of vectors whose support is a subset of the support of an input vector.


\subsection{Organization}

In this paper we heavily build upon previous literature, and in particular the
representation technique as developed in~\cite{stacs2016, generic-knapsack2}.
Therefore, we introduce the reader to this technique in
Section~\ref{sec:ss-via-ov}. At the end of Section~\ref{sec:ss-via-ov}, we also use the introduced terminology of the representation technique to explain the new steps towards proving Theorem~\ref{main-thm}.

The remainder of the paper is devoted to formally support all claims made. Necessary preliminaries are provided in Section~\ref{sec:prel}; in Section~\ref{main-proof} we present the
proof of Theorem~\ref{main-thm}, and the reduction to Exact Node Weighted $P_4$
from Theorem~\ref{thm:weightedp4} is given in Section~\ref{sec:p4}.
Section~\ref{sec:ov} contains the proof of (a generalization of)
Theorem~\ref{ov-simplified}.  In Appendices~\ref{proof-of-lem:sumsetenum}
to~\ref{sec:problems} we provide, respectively, various short omitted proofs,
an inequality relevant for the runtime of our algorithms and a list of problem statements.

\section{Introduction to the Representation Technique, and its Extensions}
\label{sec:ss-via-ov}

This section is devoted to explain the 
representation technique (and its extensions from \cite{stacs2016}) and
will serve towards a warm up towards the formal proof of Theorem~\ref{main-thm}.

\subsection{The Representation Technique with a Simplified Assumption}
We fix an instance $w_1,\ldots,w_n,t$ of Subset Sum. A \emph{perfect mixer} is a subset $M \subseteq [n]$ such that for every distinct subsets $A_1,A_2 \subseteq M$ we have $w(A_1) \neq w(A_2)$.\footnote{This notion will be generalized to the notion of an $\eps$-mixer in Definition~\ref{def:epsmix}.} To simplify the explanation in this introductory section, we will make the following assumption about the Subset Sum instance:

\begin{assumption}\label{assumption}
	If $w_1,\ldots,w_n,t$ is a YES-instance of Subset Sum, then there is a perfect mixer $M \subseteq [n]$ and a set $S$ with $w(S)=t$ such that $|M \cap S|=|M|/2$.
\end{assumption}

A mild variant of Assumption~\ref{assumption} can be made without loss of generality since relatively standard extensions of the method by~\cite{schroeppel} can be used to solve the instance more efficiently if it does not hold. We discuss the justification of Assumption~\ref{assumption} more later.
We now illustrate the representation technique by outlining proof of the following statement:
\begin{theorem}\label{simple-secondmain-thm}
An instance of Subset Sum satisfying Assumption~\ref{assumption} can be reduced
to an equivalent instance of $\wOV\left(2^{n/2}\binom{|M|}{|M|/4}/2^{|M|},|M|, |M|/4\right)$ in the linear (in the
size of the output) randomized time.
\end{theorem}

\begin{algorithm}
	\DontPrintSemicolon
	\SetKwInOut{Input}{Algorithm}
	\SetKwInOut{Output}{Output}
	\Input{$\mathtt{RepTechnique}$($w_1,\ldots,w_n,t,M$)}
	\Output{Instance of weighted orthogonal vectors }
    Arbitrarily partition $[n] \setminus M$ into $L$ and $R$ such that $|L|=|R|=(n-|M|)/2$  \\
	Pick a random prime $p$ of order $2^{|M|/2}$\\
	Pick a random $x \in \mathbb{Z}_p$ \\
	Construct the following sets:\label{line:construct}
    \vspace*{-0.2cm}
	\begin{align*}
	\Ll := &\Big\{(A_1 \cap M,w(A_1)) &\Big|&  &A_1 \in 2^{L \cup M} &:& |A_1 \cap M| = |M|/4\ \  &\text{ and}& w(A_1) \equiv_p x \Big\}&\\
	\Rr := &\Big\{ (A_2 \cap M,w(A_2)) &\Big|& &A_2 \in 2^{R \cup M} &:&  |A_2 \cap M| = |M|/4\ \  &\text{ and }& w(A_2) \equiv_p t - x \Big\} &
	\end{align*}\\
    \vspace*{-0.2cm}
	\Return the instance $(\Ll,\Rr, t)$ of weighted orthogonal vectors
	\caption{Pseudocode of Theorem~\ref{simple-secondmain-thm}}
	\label{secondmain-alg}
\end{algorithm}

The reduction from Theorem~\ref{simple-secondmain-thm} is described in Algorithm~\ref{secondmain-alg}, and uses the standard notation $\equiv_p$ to denote equivalence modulo $p$.
We now describe the intuition of the algorithm.  For partition of $[n]$ into
$L,M,R$, it \emph{expands} the search space by looking for pairs $(A_1,A_2)$ where
$A_1 \in 2^{L \cup M}$, $A_2 \in 2^{R \cup M}$ and both $A_1$ and $A_2$ use $|M|/4$ elements of $M$. This is
useful since the assumed solution $S$ is \emph{represented} by the $\binom{|M
\cap S|}{|M|/4}\approx 2^{|M|/2}$ partitions $(A_1,A_2)$ of $S$ that are in the
expanded search space. Together with Assumption~\ref{assumption}, this allows us in turn to narrow down the search space by restricting the search to look only for pairs $(A_1,A_2)$ satisfying $w(A_1)\equiv_p x$, and thus $w(A_2)\equiv_p t- x$. Thus, the algorithm enumerates all candidates for $A_1$ and $A_2$ in respectively $\Ll$ and $\Rr$ and the instance of weighted orthogonal vectors detects a disjoint pair of candidates with weights summing to $t$.

One direction of the correctness of the algorithm follows directly: If the produced instance of weighted orthogonal vectors is a YES-instance, the union of the two found sets is a solution to the Subset Sum instance. 

Conversely, we claim that if the instance of Subset Sum is a YES-instance and Assumption~\ref{assumption} holds, then with good probability the output instance of Weighted Orthogonal Vectors is a YES-instance. Let $S$ be the solution of the Subset Sum instance, so $w(S)=t$ and $|S \cap M|=|M/2|$ by Assumption~\ref{assumption}. Note that 
\[
    W := \big|\big\{ w(\tilde{A}_1 \cup (L \cap S)) : \tilde{A}_1 \in \binom{M \cap S}{|M \cap S|/2}  \big\}\big| = \binom{|M|/2}{|M|/4},
\]
because there are $\binom{|M|/2}{|M|/4}$ possibilities for $\tilde{A}_1$ and
$w(\tilde{A}_1)$ is different for each different $\tilde{A}_1$ by the perfect mixer property of $M$.
Therefore, there are $\binom{|M|/2}{|M/4}$ possibilities for $A_1 := A'_1 \cup (L
\cap S)$ and each $w(A_1)$ is different.

By standard properties of hashing modulo random prime numbers (see e.g.
Lemma~\ref{lem:epsmixer} for a general statement), we have that the expected
size of $\{ y \mod p : y \in W \}$ is also approximately of cardinality
$\binom{|M|/2}{|M|/4}$.\footnote{This uses that all numbers are
single-exponential in $n$, but this can be assured with a standard hashing
argument.} Therefore the probability that $x$ is chosen such that $x \equiv_ p
y$ for some $y \in W$ is $\binom{|M|/2}{|M|/4}/2^{|M|/2}\geq
\Omega(\frac{1}{|M|})$. If we let $A_1 \in \Ll$ be the set with $w(A_1)=y$ then
since $w(S\setminus A_1)\equiv_p t - y$, $S \setminus A_1 \in \Rr$ and the pair
$(A_1,S\setminus A_2)$ is a solution to the weighted orthogonal vectors problem. In general this happens with probability at least $1/n$.

\begin{figure}[ht!]
	\centering
	\includegraphics[width=0.7\textwidth]{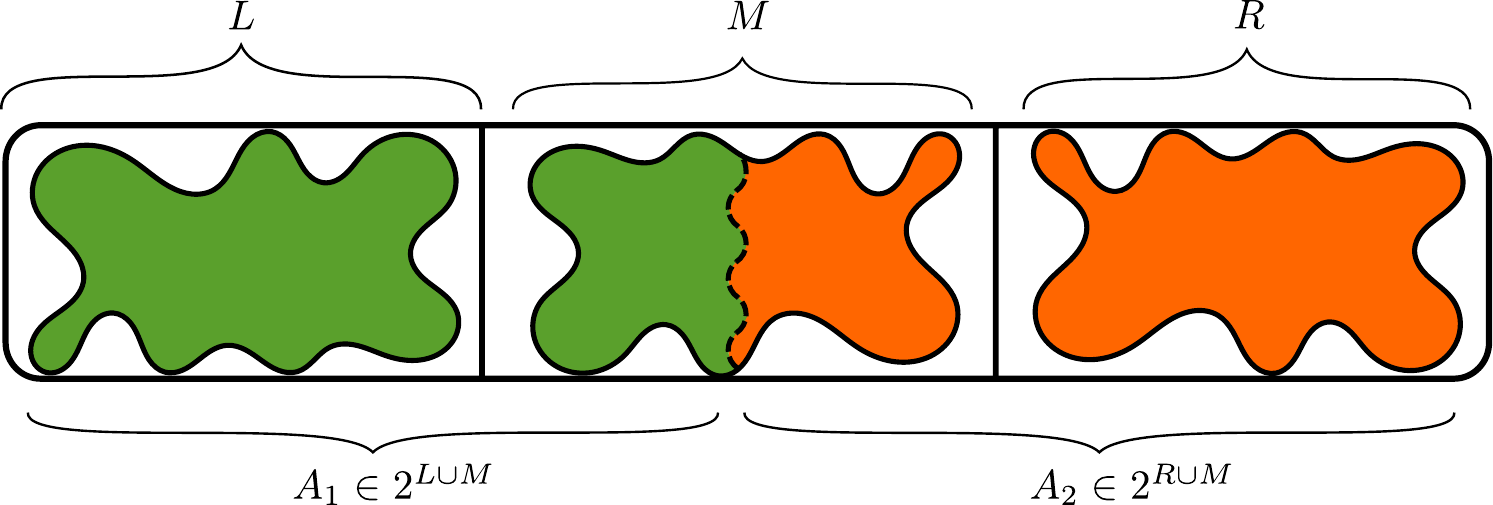}
	\caption{The green and orange regions represent a
		solution, i.e., a set $S$ such that $w(S) = t$.
		There are $\binom{|M \cap S|}{|M \cap S|/2}$ pairs $A_1 \in 2^{L \cup M}$ and $A_2 \in 2^{R \cup M}$, such that
		$A_1 \cup A_2 = S$ and $A_1 \cap A_2 = \emptyset$. Because $M$ is a
    perfect mixer, the number of possible values that $w(A_1 \cap M)$ can take is also $\binom{|M\cap S|}{|M \cap S|/2}$.}
	\label{fig:proof-obs}
\end{figure}

Now we discuss the runtime and output size.
At Line~\ref{line:construct} we construct $\Ll$ and $\Rr$.
Since the number of possibilities of $A_1$ is $2^{|L|}\binom{|M|}{|M|/4}$ and
each such $A_1$ satisfies $w(A_1)\equiv_p x$ with probability $p$ (taken over
the random choices of $x$), we have that the expected sizes of $\Ll$ (and
similarly, of $\Rr$) is $2^{n/2}\binom{|M|}{|M|/4}/2^{|M|}$, as claimed. By
standard pseudo-polynomial dynamic programming techniques (see
e.g.~\cite{stacs2016}), Line~\ref{line:construct} can be performed in $\Oh(p+|\Ll|+|\Rr|)$ time, and thus the claimed run time follows.

\subsection{Representation Technique with Non-Simplified Assumption} \label{subsec:assump}
Assumption~\ref{assumption} is oversimplifying our actual assumptions, and actually only a weaker assumption is needed to apply the representation method. We call any set $M$ that satisfies $|w(2^{M})| = 2^{(1-\eps)|M|}$ an \emph{$\eps$-mixer} (see also Definition~\ref{def:epsmix}).
Denoting $w(\Ff)$ for $\{w(X) : X \in \Ff \}$, the assumption is
\begin{assumption}\label{assumptionreal}
	If $w_1,\ldots,w_n,t$ is a YES-instance of Subset Sum, then there is an $\eps$-mixer $M$, and a set $S$ with $w(S)=t$ such that
$|(\tfrac{1}{2}-\eps' )|M| \leq |M\cap S| \leq (\tfrac{1}{2}+\eps' )|M|$, for some small positive  $\eps,\eps'$.
\end{assumption}
To note that the representation technique introduced above still works with
these relaxed assumptions, we remark that it can be shown that if, $M$ is an
$\eps$-mixer, then $w(\binom{M \cap S}{i}) \geq 2^{(1-f(\eps,\eps'))|M \cap S|}$ for some $\eps,\eps'$ and unknown $i$. Thus in the representation technique we can split the solution in sets $A_1,A_2$ where $|A_1 \cap M|=i$ and $|A_2 \cap M|= |S \cap M|-i$ and use a prime $p$ of order $2^{(1-f(\eps,\eps'))|M \cap S|}$ for some function $f$ that tends to $0$ when $\eps$ and $\eps'$ tend to $0$.

The advantage of the relaxed assumptions in Assumption~\ref{assumptionreal} is that the methods from~\cite{horowitz-sahni,schroeppel} can be extended such that it solves any instance that does \emph{not} satisfy the assumptions exponentially better in terms of time and space than in the worst-case. This allows us to make these assumptions without loss of generality when aiming for general exponential improvements in the run time (or space bound).

For example, in the approach by~\cite{schroeppel}, in some steps of the
algorithm we only need to enumerate subsets of cardinality bounded away from
half of the underlying universe; or in some other steps of the algorithm we can
maintain smaller lists with sums generated by subsets. While these extensions
are not entirely direct, we skip a detailed explanation of them in this
introductory section (see Section~\ref{sec:prepro} for formal statements).

\subsection{Our Extensions of the Representation Technique Towards Theorem~\ref{main-thm}}\label{subsec:intui}

Having described the representation technique, we now explain our approach in more detail.

\paragraph{Setting up the representation technique to reduce the space usage.} 
Now, we present the intuition behind the space reduction of Theorem~\ref{main-thm}. In the previous
subsection we constructed an instance $\Ll,\Rr$ of weighted orthogonal vectors of expected size
$\Os(2^{n/2-\Omega(|M|)})$ such that with good probability there exist $A_1 \in \Ll$ and
$A_2 \in \Rr$ with $w(A_1 \cup A_2) = t$ and $A_1 \cap A_2 = \emptyset$ (if
the answer to Subset Sum is \emph{yes}). We combine this approach with the
approach from~\cite{schroeppel} and aim to efficiently enumerate this instance
of weighted orthogonal vectors instance. To do so, we apply the representation
method two times more, and are able to construct $4$ sets $\Ll_1,\Ll_2,\Rr_1,\Rr_2 \subseteq 2^{[n]}$ with the following
properties:

\begin{enumerate}[align=left, font=\normalfont, label=(\roman*)]
    \item With good probability there exist pairwise disjoint $A_1 \in \Ll_1, A_2 \in \Ll_2,
        A_3 \in \Rr_2, A_4 \in \Rr_1$, such that $w(A_1 \cup A_2 \cup A_3 \cup A_4) = t$, if the Subset Sum instance is a YES-instance,
    \item The expected size of each $\Ll_1,\Ll_2,\Rr_1,\Rr_2$ is $\Os(2^{n/4-\Omega(|M|)})$.
\end{enumerate}

The sets will have the property that elements in $\Ll$ are formed by pairs from
$\Ll_1 \times \Ll_2$ and elements in $\Rr$ are formed by pairs from $\Rr_1
\times \Rr_2$. But in contrast to the technique from~\cite{schroeppel},  the lists $\Ll$ and $\Rr$ can not be easily decomposed into a Cartesian product of two sets of size $\sqrt{|\Ll|}$ and $\sqrt{|\Rr|}$. To overcome this issue, we apply the representation method \emph{again} to enumerate the elements of $\Ll$ and $\Rr$ quickly.
\begin{figure}
	\center
	\includegraphics[width=0.75\textwidth]{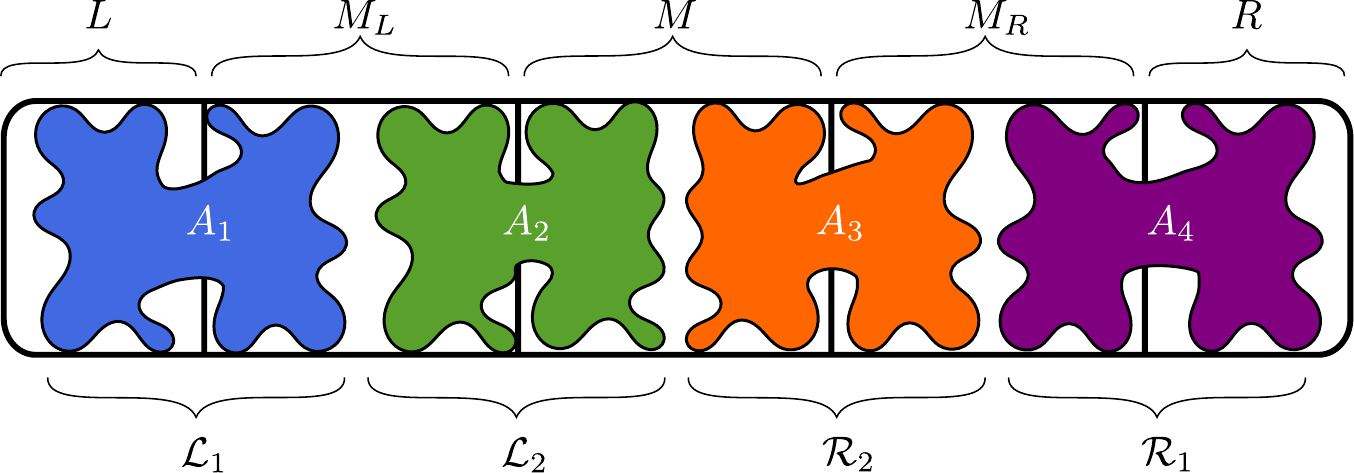}
	\caption{\label{fig:decomposition} The decomposition of the instance into
        $[n] = L \uplus M_L \uplus M \uplus M_R \uplus R$ and decomposition of the solution $S=A_1
    \uplus A_2 \uplus A_3 \uplus A_4$.}
	\label{fig:part}
\end{figure}
In particular, to construct the sets $\Ll_1,\Ll_2,\Rr_1,\Rr_2$, we partition the instance into $L, M_L , M , M_R , R$. See also Figure~\ref{fig:part}.  Here $M$ is assumed to be an 
$\eps$-mixer that is used for the representation technique on `first level': at this level we check whether a pair $\Ll \times \Rr$ forms a solution.
The set $M_L$ is assumed to be an $\eps_L$-mixer and the set $M_R$ is assumed to be an $\eps_R$-mixer, and these sets are used for two applications of the representation technique on the `second level': At this level we check whether a pair in $\Ll_1 \times \Ll_2$ forms an element of $\Ll$ (and similarly, whether a pair in $\Rr_1 \times \Rr_2$ forms an element of $\Rr$). If any of the assumptions fail, relatively direct extensions of the methods from~\cite{schroeppel} can again solve the instance more efficiently in a way similar to how we justified Assumption~\ref{assumptionreal}, so these assumptions are without loss of generality.

\paragraph{Maintaining the $\Os(2^{n/2})$ time bound}

After we construct sets $\Ll_1,\Ll_2,\Rr_2,\Rr_1$ as claimed in property (i) we
combine the approach of~\cite{schroeppel} with our Orthogonal Vectors algorithm
to obtain the $\Os(2^{n/2})$ running time.
In particular, we store the elements of $\Ll_1,\Ll_2,\Rr_1,\Rr_2$ in priority queues ordered by the weight in order to enumerate the elements of $\Ll$ and $\Rr$ in the correct order.
By our applications of the representation technique, we only need to assure
disjointness between pairs between $\Ll_1 \times \Ll_2$, $\Rr_1 \times \Rr_2$
and $\Ll_2 \times \Rr_2$ and therefore the time to check disjointness is the same as in the
normal application of the representation technique as described at the beginning of this section.

Unfortunately, the relaxed assumptions from Assumption~\ref{assumptionreal} cause issues here because we need to consider unbalanced partitions of $M \cap S$, $M_L \cap S$, $M_R \cap S$ and the constants $\eps,\eps_L,\eps_M$ give rise to different primes in our application of the representation technique.
Without additional care, the overhead in the runtime implied by these issues would lead to an undesired time bound of $\Os(2^{(0.5+\eps)n})$ for arbitrarily small constant $\eps >0$.

To address these complications, we analyse our algorithm in such a way that if
$|w(2^{M_L})|$ or $|w(2^{M_R})|$ is significantly smaller than $|w(2^M)|$, then we get an improved runtime. Note this can be assumed by switching the roles of $M_L,M_R,M$. Additionally, we provide a general runtime for solving Weighted Orthogonal Vectors instances with vectors with general support size.

\section{Preliminaries}
\label{sec:prel}

Throughout the paper we use the $\Os(\cdot)$ notation to hide factors polynomial in the
input size and the $\Ot(\cdot)$ notation to hide polylogarithmic factors in the input
size; which input this refers to will always be clear from the context. We also use $[n]$ to denote the set $\{1,\ldots,n\}$.  We use the binomial
coefficient notation for sets, i.e., for a set $S$ the symbol ${S \choose k}$ denotes the set of all subsets of the set $S$ of
size exactly $k$. For a modulus $m \in \mathbb{Z}_{\ge 1}$ and $x,y \in
\mathbb{Z}$ we write $x \equiv_m y$ to
indicate that $m$ divides $x-y$.
If $X \subseteq [n]$, we denote $w(X) := \sum_{i \in X}w_i$, which is extended
to set families $\Ff \subseteq 2^{[n]}$ by denoting $w(\Ff) := \{ w(X) : X \in \Ff\}$. We use $A \uplus B = C$ to denote that $A,B$ form a partition of $C$.

\paragraph{Prime numbers and hashing}

We use the following folklore theorem on prime numbers:

\begin{lemma}[Folklore]
	\label{lem:pnt}
	For every sufficiently large integer $r$ the following holds. If $p$ is a prime between $r$ and $2r$
	selected uniformly at random and x is a nonzero integer, then p divides x with probability at most $(\log_2 x)/r$.
\end{lemma}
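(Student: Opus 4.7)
The plan is to use a standard two-sided estimate: a lower bound on the number of primes in the sample space $[r,2r]$, and an upper bound on the number of those primes that can possibly divide $x$. Dividing the two gives the desired probability bound, and the arithmetic works out with a little slack thanks to the conversion between natural and binary logarithms.

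First I would invoke the Prime Number Theorem (or any of its elementary refinements due to Chebyshev/Rosser) to say that for all sufficiently large $r$, the count $\pi(2r)-\pi(r)$ of primes in $[r,2r]$ is at least $c \cdot r/\ln r$ for some absolute constant $c$. For concreteness one can use the classical estimate $\pi(2r)-\pi(r) \geq r/\ln(2r)$ valid for large $r$, which is more than enough. This gives the denominator of the probability we are computing.

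Next I would bound the numerator, namely the number of primes $p \in [r,2r]$ that divide $x$. Any such prime satisfies $p \geq r$, and the product of all distinct prime divisors of $x$ that are at least $r$ divides $|x|$. Hence the number of distinct primes $p \geq r$ that divide $x$ is at most $\log_r |x| = (\log_2 |x|)/\log_2 r$. In particular, the same bound holds for primes in $[r,2r]$ dividing $x$. This is the one place where we use $x \neq 0$ (so that $|x|$ is finite and $\log_r |x|$ is well defined).

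Combining the two bounds, the probability in question is at most
\[
\frac{(\log_2 |x|)/\log_2 r}{c \cdot r / \ln r} \;=\; \frac{\ln r}{c\log_2 r}\cdot\frac{\log_2|x|}{r} \;=\; \frac{\ln 2}{c}\cdot\frac{\log_2|x|}{r}.
\]
Since $\ln 2 < 1$, by choosing the effective constant in the prime-counting estimate appropriately (which is possible for all sufficiently large $r$), the coefficient $\ln 2/c$ is at most $1$, yielding the stated bound $(\log_2 x)/r$. The only mild subtlety, and the main thing to get right, is to line up the constant in the prime-counting lower bound with the factor $\ln 2$ so that the final bound is cleanly $(\log_2 x)/r$ rather than a constant multiple thereof; this is where the hypothesis "sufficiently large $r$" is used.
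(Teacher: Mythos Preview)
The paper does not prove this lemma at all; it is stated as folklore and left without argument. Your proof is the standard one and is correct: bound the number of primes in $[r,2r]$ from below via the Prime Number Theorem (or Chebyshev), bound the number of primes $\geq r$ dividing $x$ from above by $\log_r|x|$, and divide. The only point worth tightening is the constant juggling at the end: rather than speaking vaguely of ``choosing the effective constant appropriately,'' you can simply note that by PNT $\pi(2r)-\pi(r)\geq (1-o(1))r/\ln r$, so the coefficient $\ln 2/c$ tends to $\ln 2<1$ and is therefore below $1$ for all sufficiently large $r$.
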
 

The following Lemma already appeared in~\cite{stacs2016}, but
since we need slightly different parameters we repeat its proof.

\begin{lemma}[cf., Proposition 3.5 in \cite{stacs2016}]
    \label{lem:epsmixer}

    Let $w_1,\ldots,w_n$ be $n$ integers bounded by $2^{\Oh(n)}$.
    Suppose $Q \subseteq [n]$ with $|Q| = \Theta(n)$. Let $W_1,\ldots,W_c$ be integers and let $W=\prod_{i=1}^c W_i$ such that $W \le |w(2^Q)|$. For $i=1,\ldots,c$, let
    $p_i$ be prime numbers selected uniformly at random from
    $[W_i/2,W_i]$.

	Let $s_0$ be the smallest integer such that $\binom{|Q|}{s_0}
    \geq |w(2^Q)|/|Q|$. Denoting $p:=\prod_{i=1}^c p_i$, we have
	\[
        \Pr\Big[ \Big|\Big\{ a\; \mathrm{ mod }\; p \; : \; X \subseteq Q,\; |X| \in
                [s_0,|Q|/2],\; w(X) = a \Big\}\Big| \geq
                \Omega\left(\frac{p}{n^c}\right)
                 \Big] \geq  9/10.
	\]
\end{lemma}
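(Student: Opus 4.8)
The statement is a concentration/lower-bound result on the number of distinct residues modulo the product $p=\prod_i p_i$, attained by subset sums $w(X)$ for $X\subseteq Q$ with cardinality restricted to $[s_0,|Q|/2]$. My plan is to combine a counting argument on the number of \emph{distinct} subset sums arising from cardinality-restricted subsets with a collision argument showing that hashing by a random product of primes does not collapse too many of them.

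\textbf{Step 1: Produce many distinct subset sums from cardinality-bounded subsets.} First I would argue that the set $V:=\{w(X) : X\subseteq Q,\ |X|\in[s_0,|Q|/2]\}$ is large, namely $|V|\ge |w(2^Q)|/\poly(n)$, and in fact $|V|\ge \Omega(|w(2^Q)|/n)$. The point is that $w(2^Q)=\bigcup_{s=0}^{|Q|}\{w(X):|X|=s\}$ is a union of $|Q|+1=\Theta(n)$ cardinality-slices, so by pigeonhole one slice contributes at least $|w(2^Q)|/(|Q|+1)$ distinct values; by a symmetry/complementation argument ($w(X)\mapsto w(Q)-w(X)$ maps the size-$s$ slice bijectively onto the size-$(|Q|-s)$ slice) one may assume this heavy slice has size $s\le |Q|/2$, and then $\binom{|Q|}{s}\ge |w(2^Q)|/|Q|$ forces $s\ge s_0$ by definition of $s_0$. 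Hence the single slice of size $s$ already gives $\ge |w(2^Q)|/|Q|$ distinct values of $w(X)$ with $|X|\in[s_0,|Q|/2]$. Also note $W\le |w(2^Q)|$ by hypothesis, so $|V|\ge W/|Q| = \Omega(W/n)$.

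\textbf{Step 2: Hashing by the random prime product preserves most distinctness.} Fix the set $V$ from Step 1 with $|V|=:m=\Omega(W/n)$; note every element of $V$ is bounded by $2^{\Oh(n)}$. For a fixed pair $a\ne b$ in $V$, the difference $a-b$ is a nonzero integer of absolute value $2^{\Oh(n)}$, so by Lemma~\ref{lem:pnt} applied to the prime $p_i\in[W_i/2,W_i]$ we get $\Pr[p_i \mid a-b]\le \Oh(n)/W_i$, and since the $p_i$ are independent, $\Pr[p\mid a-b]=\prod_i\Pr[p_i\mid a-b]\le \Oh(n)^c/W = \Oh(n)^c/\prod_i W_i$. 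The expected number of colliding ordered pairs is therefore at most $m^2\cdot \Oh(n)^c/W$. I would now either clip $V$ to a subset $V'\subseteq V$ of size $\Theta(W/n^{c+1})$ (or an appropriate polynomial in $n$) so that $|V'|^2\cdot \Oh(n)^c/W = \Oh(|V'|/n)$, say at most $|V'|/20$ in expectation; then Markov gives that with probability $\ge 9/10$ the number of colliding pairs inside $V'$ is at most $|V'|/2$, so after removing one endpoint of each colliding pair we retain $\ge |V'|/2 = \Omega(W/n^{c+1})=\Omega(p/n^{c+1})$ distinct residues (using $p=\Theta(W)$ since $p_i\in[W_i/2,W_i]$). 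To match the stated bound $\Omega(p/n^c)$ one should be slightly more careful about the exact polynomial losses; I expect the intended reading is that $\Omega(\cdot)$ hides a fixed polynomial in $n$ and the constant $c$, so choosing the clip size as $\Theta(W/n^{O(1)})$ and tracking exponents gives the claim. All retained residues come from subsets $X$ with $|X|\in[s_0,|Q|/2]$, which is exactly the index set in the probability statement.

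\textbf{Main obstacle.} The delicate point is Step 1 — getting the cardinality restriction $|X|\in[s_0,|Q|/2]$ for free. The pigeonhole picks \emph{some} heavy slice, and the complementation trick moves it below $|Q|/2$, but one must check that the heavy slice size is at least $s_0$: this is exactly where the definition of $s_0$ as the \emph{smallest} $s$ with $\binom{|Q|}{s}\ge |w(2^Q)|/|Q|$ is used, together with the fact that $\binom{|Q|}{s}$ is increasing for $s\le|Q|/2$. A secondary nuisance is bookkeeping the precise power of $n$ (and dependence on $c$) so the final bound reads $\Omega(p/n^c)$ rather than $\Omega(p/n^{O(c)})$; this likely requires either choosing $V'$ of size proportional to $p/n^{c}$ directly and absorbing the $\Oh(n)^c$ factor into the failure probability via a sharper count of pairs, or simply accepting that the hidden constant in $\Omega$ depends on $c$ — which is harmless for all downstream applications where $c$ is a fixed constant.
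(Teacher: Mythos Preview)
Your Step 1 is exactly what the paper does: pigeonhole over the $|Q|+1$ cardinality slices, complementation to force the heavy slice below $|Q|/2$, and the definition of $s_0$ to force it at or above $s_0$. No issues there.

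Your Step 2 is correct but differs from the paper's argument. The paper does not clip; instead it takes the full family $\Ff$ of distinct-sum subsets, lets $c_i$ be the number of elements of $\Ff$ landing in residue class $i$, bounds $\Ex{\sum_i c_i^2}=\Oh(|\Ff|+n^c|\Ff|^2/p)$ exactly as you do for colliding pairs, applies Markov, and then uses Cauchy--Schwarz: the number of nonzero $c_i$ is at least $(\sum_i c_i)^2/\sum_i c_i^2 = |\Ff|^2/\sum_i c_i^2 \ge \Omega(p/n^c)$. This sidesteps the clipping entirely and delivers the stated exponent $n^c$ with no fuss. Your clip-and-remove argument also works and can in fact hit $\Omega(p/n^c)$ (not $n^{c+1}$) if you clip to $|V'|=\Theta(W/n^c)$ rather than $\Theta(W/n^{c+1})$: since $|V|\ge W/|Q|=\Omega(W/n)$ and $c\ge 1$, such a $V'$ exists, and then the expected number of colliding pairs is $\Oh((W/n^c)^2\cdot n^c/W)=\Oh(W/n^c)=\Oh(|V'|)$, small enough after Markov to leave $\Omega(|V'|)=\Omega(p/n^c)$ survivors. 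So your ``secondary nuisance'' is just a suboptimal choice of clip size, not a real gap. The Cauchy--Schwarz route is cleaner because it never needs $|V|$ to be small and automatically handles the case where $|\Ff|$ is much larger than $p$.
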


\begin{proof}[Proof of Lemma~\ref{lem:epsmixer}]
    By the pigeonhole principle, there exists an integer $s_1$ such that
    $\left|w\left(\binom{Q}{s_1}\right)\right| \geq |w(2^Q)|/|Q|$. By the
    minimality of $s_0$ we know that $s_1 \geq s_0$.  We may also assume
    $s_1 \leq |Q|/2$ because $w\left(\binom{Q}{s}\right)=w\left(\binom{Q}{|Q|-s}\right)$ by subset
    complementation.
	
    Let $\Ff \subseteq \binom{Q}{s_1}$ be a maximal injective subset, i.e.,
    satisfying $|\Ff| = |w(\Ff)| = |w\left(\binom{Q}{s_1}\right)| \ge
    |w(2^Q)|/|Q|$.  Let $c_i = |\left\{ Y \in \Ff: w(Y) \equiv_p
    i\right\}|$ be the number of sets from $\Ff$ with a sum in the $i$'th congruence class modulo $p$.  Our
    goal is to lower bound the probability that $c_{i} > 0$ for a random $i \in \mathbb{Z}_p$.  We can bound the
    expected $\ell^2$ norm (e.g., the number of collisions) by

	\begin{equation}
	\label{eqn:alg1 collision prob}
        \Ex{ \sum_{i} c^2_i } = \sum_{Y,Z \in \Ff}\prob{p \text{ divides } w(Y)-w(Z)}{} \leq |\Ff| + \Oh(n^c|\Ff|^2 /p).
	\end{equation}

    The last inequality follows by Lemma~\ref{lem:pnt} and the assumption that
    $w_1,\ldots,w_n \le 2^{\Oh(n)}$ and $|Q| = \Theta(n)$.
    Namely, note that if $Y \neq Z$, then $w(Y)\neq w(Z)$. Hence $\Pr[p \text{
    divides } w(Y)-w(Z)]$ is at most $\Oh(n^c/W)$ by applying Lemma~\ref{lem:pnt} $i$ times with each $p_i$.

    By Markov's inequality, $\sum_{i} c^2_i \le \Oh(|\Ff| + n^c|\Ff|^2/p)$ with
    constant probability over the choice of $p$. We assumed that $p \le \Oh(w(2^Q))$,
    hence $|\Ff| \le \Oh(n^c |\Ff|^2/p)$. So $\sum_i c^2_i \le \Oh(n^c|\Ff|^2/p)$.

    Conditioned on this, the Cauchy-Schwarz inequality implies that the number of
    non-zero $c_i$'s is at least $|\Ff|^2\big/\sum_{i} c_i^2 \ge
    \Omega(\min\{|\Ff|, p/n^c\}) = \Omega(p/n^c)$ as desired.
\end{proof}

\paragraph{Shroeppel-Shamir's sumset enumeration}
We recall some of the basic building blocks of previous work on Subset
Sum. In~\cite{schroeppel} the authors used the following data structure to obtain an $\Ot(n^2)$ time and $\Ot(n)$ space algorithm for 4-SUM.

\begin{restatable}{lemma}{incdatastructure}
    \label{lem:sumsetenum}

    Let $A,B \subseteq \mathbb{Z}$ be two sets of integers, and let $C := A+B :=
    \{a+b : a\in A, b\in B \}$ be their sumset. Let $c_1,\ldots,c_m$ be
    elements of $C$ in increasing order. There is a data structure $\inc:=
    \inc(A,B)$ that takes $\Ot(|A|+|B|)$ preprocessing time and supports a query
    $\inc.\nextel$ that in the $i$'th (for $1 \leq i \le m$) call outputs
    $(P^l_{c_i},c_i)$, and in the subsequent calls outputs $\mathtt{EMPTY}$. Here
    $P^l_{c_i}$ is the set $\{ (a,b) : a \in A, b \in B, a+b = c_i \}$.

    Moreover, the total time needed to execute all $m$ calls to $\inc.\nextel$ is
    $\Ot(|A||B|)$ and the maximum space usage of the data structure is $\Ot(|A| + |B|)$.

    Similarly, there is a data structure $\dec:= \dec(A,B)$ that outputs pairs of
    elements of $A$ and $B$ in order of their decreasing sum.
\end{restatable}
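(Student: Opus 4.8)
\textbf{Proof plan for Lemma~\ref{lem:sumsetenum}.}

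The plan is to implement $\inc(A,B)$ as a standard priority-queue (min-heap) traversal of the sumset $A+B$, mimicking the classic "merge of sorted lists" idea of Schroeppel and Shamir. First I would sort $A$ as $a_1 \le a_2 \le \dots \le a_{|A|}$ and $B$ as $b_1 \le b_2 \le \dots \le b_{|B|}$ in $\Ot(|A|+|B|)$ preprocessing time. Conceptually, think of the $|A|$ sorted lists $\Lambda_i = (a_i + b_1, a_i + b_2, \dots)$; each $\Lambda_i$ is increasing, and $A+B$ (as a multiset) is their merge. I would maintain a min-heap containing one "frontier" entry per row $i$, namely a triple $(a_i + b_{j_i},\, i,\, j_i)$ where $j_i$ is the index of the first element of row $i$ not yet extracted; initially every $j_i = 1$, so the heap has $|A|$ entries, giving $\Ot(|A|+|B|)$ space. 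Each call to $\inc.\nextel$ would repeatedly extract the minimum key $c$ from the heap, and keep extracting while the heap-top key equals $c$, thereby collecting exactly the set $P^l_c = \{(a,b) : a\in A, b\in B, a+b=c\}$; for every extracted entry $(a_i+b_{j_i},i,j_i)$ we push the successor $(a_i + b_{j_i+1},\, i,\, j_i+1)$ provided $j_i+1 \le |B|$. When the heap is empty we output $\mathtt{EMPTY}$. The $\dec(A,B)$ data structure is symmetric: sort decreasingly (or negate), and use a max-heap starting from the right ends of the rows.

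The correctness argument has two parts. For the \emph{ordering}: a simple invariant shows that at the start of the call that will output $c_i$, the heap contains, for each row, its smallest not-yet-output element, and all elements $\le c_{i-1}$ have already been output; hence the minimum heap key is exactly $c_i$, and pulling all entries with that key collects precisely $P^l_{c_i}$. For \emph{completeness}: every pair $(a_i,b_j)$ is eventually pushed, because row $i$ is pushed from index $1$ and each pop of row $i$ at index $j$ pushes index $j+1$; so no pair is missed and none is double-counted. For the \emph{timing}: there are $|A||B|$ pairs total, each causing exactly one heap push and one heap pop at cost $\Ot(1)$ each (heap size never exceeds $|A|$), so the total cost over all $m$ calls is $\Ot(|A||B|)$; the preprocessing sort is $\Ot(|A|+|B|)$; and the space is $\Ot(|A|+|B|)$ since we store the two sorted arrays plus a heap of size at most $|A|$. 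Polylogarithmic factors absorb the comparisons on $\Oh(n)$-bit integers and the heap operations.

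The only mild subtlety — and the step I would be most careful about — is making sure a single call to $\nextel$ returns \emph{all} pairs summing to the current value $c_i$ (so that the $i$-th call really outputs the whole set $P^l_{c_i}$ and not just one representative), while still charging each heap operation to a distinct pair so the amortized bound $\Ot(|A||B|)$ across all calls is honest; this is handled by the "keep popping while the top key equals $c$" loop together with the one-push-per-pop accounting. A second minor point is that keys may coincide across different rows as well as within a row, but since we index heap entries by $(i,j_i)$ rather than by key value, duplicates are handled transparently. There is no real obstacle here: this is essentially the textbook $k$-way merge / Schroeppel–Shamir enumeration, and the lemma is a clean restatement of it; the work is just in stating the invariants precisely.
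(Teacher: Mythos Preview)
Your proposal is correct and is essentially the same priority-queue (Schroeppel--Shamir) enumeration as the paper's proof; the only cosmetic difference is that the paper initializes one heap entry per element of $B$ and advances along $A$, whereas you initialize one entry per element of $A$ and advance along $B$, which is the symmetric variant and changes nothing in the analysis.
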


The data structure crucially relies on priority queues. We included
the proof of this Lemma in Appendix~\ref{proof-of-lem:sumsetenum} for
completeness.

\paragraph{Binomial Coefficients}
We will frequently use the binary entropy function $h(p):= -p \log_2(p)-(1-p)\log_2(1-p)$. Its main use is via the following estimate of binomial coefficients: 
\begin{equation}\label{eq:binent}
	\Omega(d^{-1/2}) 2^{dh(\alpha)} \le \binom{d}{\alpha d} \le 2^{dh(\alpha)}.
\end{equation}

We also consider the inverse of the binary entropy. Since $h(\alpha)$ is
strictly increasing in $[0,0.5]$ we can define $h^{-1} : [0,1] \rightarrow [0,0.5]$,
with condition that $h^{-1}(\alpha) = \beta$ iff $h(\beta) = \alpha$.

For every $\alpha \in [0,0.5]$ we have the following inequality on the entropy function:
\begin{equation}
\label{ineq:entropy}
1 - 4\alpha^2 \leq h(1/2 - \alpha) \le 1-2\alpha^2/\ln{2}
\end{equation}
We can also compute the derivative of the entropy function on $1/4$ to bound its value, i.e., for every $\alpha>0$:
\begin{equation}
\label{ineq:entropy2}
h(1/4 + \alpha) \le h(1/4) + \alpha \log_2{3}
\end{equation}
Moreover by the concavity of binary entropy we know that for all $\alpha,x,y \in
[0,1]$:
\begin{equation}
\label{ineq:convexity}
\alpha h(x) + (1-\alpha) h(y) \le h(\alpha x + (1-\alpha) y)
\end{equation}
In particular it means that $h(\sigma\lambda) + h((1-\sigma)\lambda) \le
2h(\lambda/2)$ for any $0\leq \sigma \leq 1$.

The following standard concentration lemma will be useful to control the intersection of the solution with certain subsets of the weights of the subset sum instance:
\begin{lemma}\label{lem:conc}
Let $A \subseteq [d]$ be any set with $|A| =
\alpha d$, and let $B \subseteq
[d]$ be uniformly sampled over all subsets with $|B| = \beta d$ and $\alpha
\beta d$ be an integer. Then the following holds:
	\begin{displaymath}
	\prob{|A \cap B| = \alpha\beta d}{} \ge \Oms(1).
	\end{displaymath}
\end{lemma}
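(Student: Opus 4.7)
The plan is to compute the probability via the standard hypergeometric formula and then cancel the exponential factors by the binomial estimate~\eqref{eq:binent}.

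First I would observe that sampling $B$ uniformly from $\binom{[d]}{\beta d}$ is equivalent to sampling $|A \cap B|$ elements from $A$ and the remaining elements from $[d] \setminus A$. Hence
\[
\Pr\!\big[|A \cap B| = \alpha\beta d\big] \;=\; \frac{\binom{\alpha d}{\alpha \beta d}\,\binom{(1-\alpha)d}{(1-\alpha)\beta d}}{\binom{d}{\beta d}},
\]
using that $\beta d - \alpha\beta d = (1-\alpha)\beta d$. We may assume without loss of generality that $\alpha d$, $\beta d$, and $\alpha\beta d$ are integers, since the lemma is only meaningful in this case.

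Next I would apply~\eqref{eq:binent} to each of the three binomials. The lower bound $\binom{m}{\gamma m} \ge \Omega(m^{-1/2}) 2^{m h(\gamma)}$ yields
\[
\binom{\alpha d}{\alpha \beta d} \ge \Omega(d^{-1/2})\, 2^{\alpha d \cdot h(\beta)}, \qquad \binom{(1-\alpha)d}{(1-\alpha)\beta d} \ge \Omega(d^{-1/2})\, 2^{(1-\alpha) d \cdot h(\beta)},
\]
while the upper bound gives $\binom{d}{\beta d} \le 2^{d\cdot h(\beta)}$. The exponents in the numerator sum to exactly $d\cdot h(\beta)$, matching the denominator, so all exponential factors cancel and we are left with
\[
\Pr\!\big[|A \cap B| = \alpha\beta d\big] \;\ge\; \Omega(d^{-1}) \;=\; \Omega^\star(1),
\]
as desired.

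There is essentially no obstacle here: the entire argument is a one-line calculation once the hypergeometric formula is written down, and the only mildly delicate issue is the integrality convention on $\alpha d, \beta d, \alpha\beta d$, which I would dispatch with a one-line remark at the start of the proof.
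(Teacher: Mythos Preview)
Your proposal is correct and essentially identical to the paper's own proof: both write the hypergeometric probability $\binom{\alpha d}{\alpha\beta d}\binom{(1-\alpha)d}{(1-\alpha)\beta d}/\binom{d}{\beta d}$ and then apply the estimate~\eqref{eq:binent} to cancel the $2^{dh(\beta)}$ factors, leaving $\Omega(1/d)=\Oms(1)$.
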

\begin{proof}
There are $\binom{d}{|B|}$ possibilities of selecting a random $B$.
There are $\binom{|A|}{\alpha\beta d} \binom{d-|A|}{|B| -
	\alpha\beta d}$ many possibilities of
selecting $B$, such that $|A \cap B| = \alpha \beta d$. 
Hence for a random $B$, the probability that $|A \cap B| = \alpha \beta d $
is:

\begin{displaymath}
\frac{\binom{\alpha d}{\alpha \beta d}
	\binom{(1-\alpha)d}{\beta(1-\alpha)d}}{\binom{d}{\beta d}} \ge 
\Omega\left(\frac{2^{d(\alpha h(\beta) + (1-\alpha)h(\beta))})}{d
	2^{dh(\beta)}}\right)
= \Omega\left(\frac{1}{d}\right),
\end{displaymath}
because of~\eqref{eq:binent}.
\end{proof}

\paragraph{Preprocessing Algorithms}
\label{sec:prepro}

We now present several simple procedures that allow us to make
assumptions about the given Subset Sum instance in the proof of Theorem~\ref{main-thm}.
Throughout this paper $w_1,\ldots,w_n,t$ denotes an instance of Subset Sum. We
can assume that the integers $w_1,\ldots,w_n,t$ are positive and $w_1+ \ldots  + w_n
+ t \leq 2^{10 n}$ (see \cite[Lemma 2.1]{stacs2016}). Throughout the paper we will introduce certain constants close to $0$ and assume that $n$ is big enough, so the product
of $n$ with these constants is an integer.

The following notion that was already discussed in Section~\ref{sec:ss-via-ov}
corresponds to the number of distinct sums of the subsets of a given set.

\begin{definition}[$\eps$-mixer] \label{def:epsmix}
    A set $M \subseteq [n]$ is an \emph{$\eps$-mixer} if $|w(2^M)| = 2^{(1-\eps)|M|}$.
\end{definition}

\begin{lemma}
	\label{mixation-parameter}
	Given a set $M$, one can in $\Os(2^{|M|})$ time and $\Os(2^{|M|})$ space determine the $\eps$ such that $M$ is an $\eps$-mixer.
\end{lemma}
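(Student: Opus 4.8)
The plan is to reduce the whole task to computing a single integer, namely $S := |w(2^M)|$, the number of \emph{distinct} subset sums of $M$. By Definition~\ref{def:epsmix} the identity $|w(2^M)| = 2^{(1-\eps)|M|}$ completely pins down the relevant exponent, so the smallest $\eps$ for which $M$ is an $\eps$-mixer is exactly $\eps = 1 - \tfrac{\log_2 S}{|M|}$ (this is also the smallest value under the weaker reading $|w(2^M)| \ge 2^{(1-\eps)|M|}$). Hence, once $S$ is known, the answer follows from a single arithmetic computation, and the entire content of the lemma is that $S$ can be computed within the stated resource bounds.

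To compute $S$, I would simply enumerate all $2^{|M|}$ subsets $X \subseteq M$. Recall the standing preprocessing assumption that $w_1 + \dots + w_n + t \le 2^{10n}$, so every partial sum $w(X)$ is a nonnegative integer of $O(n)$ bits, and adding or subtracting a single weight costs $\poly(n)$ time. Enumerate the subsets of $M$ so that consecutive subsets differ in one element (e.g.\ in Gray-code order, or equivalently by a recursion over the elements of $M$ that carries the running sum); maintaining the current value of $w(X)$ then costs one $O(n)$-bit addition or subtraction per step. Insert each sum encountered into a balanced binary search tree (or a hash table keyed by the $O(n)$-bit value); after all subsets have been processed, $S$ equals the number of elements stored. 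Summing over the $2^{|M|}$ steps gives $\Os(2^{|M|})$ time, and the data structure holds at most $2^{|M|}$ integers of $\poly(n)$ bits each, hence $\Os(2^{|M|})$ space. Outputting $1 - (\log_2 S)/|M|$ then completes the procedure.

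There is essentially no real obstacle here beyond bookkeeping: the only points worth checking are that arithmetic on subset sums is polynomial-time — guaranteed by the bound $w_1+\dots+w_n+t \le 2^{10n}$ — and that ``the smallest $\eps$'' is well-defined, which holds because $S$ is a fixed positive integer determined by $M$ alone. If one wishes to avoid irrational outputs, one may instead report the integer $S$ itself; every later use of an $\eps$-mixer only requires the inequality $|w(2^M)| \ge 2^{(1-\eps)|M|}$, for which $S$ is an exact certificate.
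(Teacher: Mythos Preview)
Your proposal is correct and follows essentially the same approach as the paper: enumerate all $2^{|M|}$ subset sums, count the number $S$ of distinct values, and output $\eps = 1 - (\log_2 S)/|M|$. The only cosmetic differences are that the paper sorts the list of sums rather than inserting into a dictionary, and does not bother with the Gray-code trick or the bit-complexity remarks (all of which are absorbed by the $\Os(\cdot)$ anyway).
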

\begin{proof}
	Iterate over every possible subset of $M$ and store $w(2^{M})$. Afterwards sort $w(2^{M})$, determine the size of $M$ and output $\eps := (1-\log_2 (|w(2^{M})|)/|M|)$.
\end{proof}

\begin{lemma} 
    \label{win-win}
    For any constants $\eps_0 > 0$ and $\mu \in (0,1/4)$, there is an algorithm
    that, given a Subset Sum instance
    $w_1,\ldots,w_n,t$ and an $\eps$-mixer $M$ satisfying $|M| = \mu n$ and $\eps > \eps_0$, solves the instance in time $\Os(2^{(1 - \eps_0\mu)n/2})$ and $\Os(2^{(1 -
    \eps_0\mu)n/4})$ space.
\end{lemma}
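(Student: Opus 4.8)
The plan is to exploit that, since $\eps>\eps_0$, the set $M$ has only $|w(2^M)|=2^{(1-\eps)|M|}\le 2^{(1-\eps_0)\mu n}$ distinct subset sums, a factor $2^{\eps_0\mu n}$ fewer than a generic set of $\mu n$ elements. Morally, the instance then behaves like a Subset Sum instance on $(1-\eps_0\mu)n$ elements, on which a suitable implementation of the Schroeppel--Shamir algorithm runs in $\Os(2^{(1-\eps_0\mu)n/2})$ time and $\Os(2^{(1-\eps_0\mu)n/4})$ space. To turn this into an algorithm I would first compute the \emph{deduplicated} set $W:=w(2^M)$ together with, for each element, a witness subset of $M$. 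This is done by a Schroeppel--Shamir-style enumeration of $w(2^M)$: split $M$ into two halves of size $\mu n/2$, build their two explicit weight lists, feed them to the data structure $\inc$ of Lemma~\ref{lem:sumsetenum} to enumerate $w(2^M)$ in increasing order, and append each \emph{newly seen} value (with a witness) to a sorted array $W$. Using $\mu<1/4$ one checks $\mu n/2\le (1-\eps_0\mu)n/4$ and $(1-\eps_0)\mu\le(1-\eps_0\mu)/4$ (the latter being $\mu(4-3\eps_0)\le 1$), so both the scratch space $2^{\mu n/2}$ and the output size $|W|\le 2^{(1-\eps_0)\mu n}$ stay within the $\Os(2^{(1-\eps_0\mu)n/4})$ budget, while the running time $\Ot(2^{\mu n})$ is within $\Os(2^{(1-\eps_0\mu)n/2})$.

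Next, set $R:=[n]\setminus M$ and partition $R$ into four blocks $R_0,P_2,P_3,P_4$ with $|R_0|:=(1-\eps_0\mu)n/4-(1-\eps_0)\mu n$ (nonnegative because $\mu<1/4<1/(4-3\eps_0)$) and $|P_2|=|P_3|=|P_4|:=(1-\eps_0\mu)n/4$; a short computation shows these sizes sum to $|R|$, and we round to integers at the cost of the $\Os$-slack. Let $P_1:=M\uplus R_0$ be the first ``virtual part'' and compute the deduplicated sumset $A:=w(2^{P_1})=W+w(2^{R_0})$ with witnesses, via $\inc(W,w(2^{R_0}))$; by the choice of $|R_0|$ we have $|A|\le|W|\cdot 2^{|R_0|}=2^{(1-\eps_0\mu)n/4}$ and the construction costs $\Ot(2^{(1-\eps_0\mu)n/4})$ time and space. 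Analogously compute, for $i=2,3,4$, the set $B_i:=w(2^{P_i})$ with witnesses; each satisfies $|B_i|\le 2^{|P_i|}=2^{(1-\eps_0\mu)n/4}$ and is built in $\Ot(2^{(1-\eps_0\mu)n/4})$ time and space. Now run Schroeppel--Shamir on the four lists $A,B_2,B_3,B_4$: enumerate the left sums $\{a+b:a\in A,\,b\in B_2\}$ in increasing order with $\inc(A,B_2)$ and the right sums $\{a+b:a\in B_3,\,b\in B_4\}$ in decreasing order with $\dec(B_3,B_4)$, and sweep the two streams with the standard two-pointer merge to detect a pair summing to $t$. From a matching pair the stored witnesses reconstruct a subset of $M\uplus R_0\uplus P_2\uplus P_3\uplus P_4=[n]$ of weight exactly $t$, and conversely any solution $X$ gives $w(X\cap P_1)\in A$ and $w(X\cap P_i)\in B_i$ with the four weights summing to $t$, so the merge finds it; the algorithm is thus correct and deterministic. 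The space is $\Ot(|A|+|B_2|+|B_3|+|B_4|)=\Os(2^{(1-\eps_0\mu)n/4})$ (a single pair-set returned by $\inc$/$\dec$ has size at most $\min_i|B_i|$, also within budget), and the time is dominated by $\Ot(|A|\cdot|B_2|+|B_3|\cdot|B_4|)=\Os(2^{(1-\eps_0\mu)n/2})$.

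The main obstacle is not a new idea but the bookkeeping: one must feed the \emph{deduplicated} sumsets $W$ and $A$ (rather than $2^{M}$ or $2^{P_1}$) into $\inc$/$\dec$, so that the running time is governed by the number of distinct sums and not the number of pairs, and one must verify that \emph{every} intermediate list --- $W$, the two scratch half-lists for $M$, $A$, and the $B_i$ --- has size at most $2^{(1-\eps_0\mu)n/4}$. All of these size checks reduce to the single inequality $\mu(4-3\eps_0)\le 1$, which is precisely where the hypothesis $\mu<1/4$ is used: it is exactly the condition under which $M$ (together with a thin slice $R_0$ of $R$) fits into one of the four virtual parts of the ``effective'' Schroeppel--Shamir split, so that the remaining three parts can be carved out of $R$ with size exactly $(1-\eps_0\mu)n/4$ each.
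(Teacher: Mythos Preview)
Your proof is correct and follows essentially the same approach as the paper: your blocks $R_0,P_2,P_3,P_4$ are exactly the paper's $L_1,L_2,R_1,R_2$ (with the same sizes, as one checks from $|R_0|=(1-\eps_0\mu)n/4-(1-\eps_0)\mu n=n/4-\mu n(1-3\eps_0/4)$), and your list $A=w(2^{M\cup R_0})$ is the paper's $D=w(2^{L_1\cup M})$, after which both proofs invoke the $4$-SUM algorithm of Lemma~\ref{ref:4sum}. The only difference is that you spell out how to build $D$ within the space budget---first deduplicating $W=w(2^M)$ via an $\inc$-enumeration on two halves of $M$, then forming $W+w(2^{R_0})$---whereas the paper simply asserts $|D|\le 2^{(1-\eps_0\mu)n/4}$ without detailing the construction; your extra care here is a useful clarification.
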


\begin{proof}[Proof of Lemma~\ref{win-win}]
    Arbitrarily partition $[n]\setminus M = L_1 \uplus  L_2 \uplus R_1 \uplus R_2$,
    such that:
    \begin{align*}
        |L_1| = & \frac{n}{4} - \mu n\left(1-\frac{3}{4}\eps_0\right), &&& |L_2| = |R_1| = |R_2| = & \frac{(1 - \eps_0\mu)n}{4}.
    \end{align*} 
    Observe that $|L_1| > 0$, because $\mu < 1/4$ and $\eps_0 > 0$.
    Then construct the sets $A: = w(2^{L_2}),B:= w(2^{R_1}),C :=w(2^{R_2})$. Observe that the space
    needed to do this is exactly $2^{|L_2|}$, which is within the boundaries of
    our algorithm.  Next construct $D: = w(2^{L_1 \cup M})$. Observe that 
    \begin{displaymath}
        |D| = |w(2^{L_1 \cup M})| \le |w(2^{L_1})| |w(2^M)| \le 2^{(1 - \eps_0\mu)n/4}.
    \end{displaymath}
    Finally, observe that the Subset Sum instance is equivalent to the 4-SUM instance $A,B,C,D,t$, which we can solve in
    $\Os(|A||B| + |C||D|)$ time and $\Os(|A|+|B|+|C|+|D|)$ space using Lemma~\ref{ref:4sum}.
\end{proof}

\begin{lemma} 
    \label{small-lambda}

    Suppose a Subset Sum instance $w_1,\ldots,w_n,t$ with promise that there is
    a solution of size $\lambda n$ is given. Then we can find $S \subseteq [n]$
    with $w(S) = t$ in randomized $\Os(2^{h(\lambda)n/2} + 2^{n/4})$ time and
    $\Os(2^{h(\lambda)n/4})$ space.
\end{lemma}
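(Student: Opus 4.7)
The algorithm I have in mind is a size-aware version of Schroeppel and Shamir's four-way split. As a preprocessing step, if $\lambda > 1/2$ I would pass to the complement: $\bar S$ has size $(1-\lambda)n < n/2$ and satisfies $w(\bar S) = w([n]) - t$, and $h(1-\lambda)=h(\lambda)$, so it is enough to treat $\lambda \le 1/2$.

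The core idea is to randomly partition $[n] = L_1 \uplus L_2 \uplus R_1 \uplus R_2$ into four parts of size $n/4$. By Lemma~\ref{lem:conc} applied to each part separately (with $\alpha = \lambda$ and $\beta = 1/4$) and a union bound, the event that $|S \cap L_i| = |S \cap R_i| = \lambda n/4$ holds for all four parts with probability $\Oms(1)$; repeating $\poly(n)$ times boosts the success probability to a constant. Conditioned on this balanced event, I would then enumerate the four families $\mathcal{A}_i := \binom{L_i}{\lambda n/4}$ and $\mathcal{B}_i := \binom{R_i}{\lambda n/4}$; each has cardinality $\binom{n/4}{\lambda n/4} \le 2^{h(\lambda)n/4}$ by~\eqref{eq:binent}, which is exactly the target space bound.

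To locate the witness I would build, using Lemma~\ref{lem:sumsetenum}, the data structures $\inc(w(\mathcal{A}_1),w(\mathcal{A}_2))$ and $\dec(w(\mathcal{B}_1),w(\mathcal{B}_2))$, and perform the standard two-pointer sweep of Schroeppel and Shamir: pull the next sum $u$ from $\inc$ and the next sum $v$ from $\dec$, return a witness if $u+v=t$, otherwise advance the $\inc$ side if $u+v<t$ and the $\dec$ side if $u+v>t$. Because the four underlying index sets $L_1,L_2,R_1,R_2$ are pairwise disjoint, any recovered quadruple $(A_1,A_2,B_1,B_2)$ is automatically disjoint, hence yields a valid subset $S := A_1 \cup A_2 \cup B_1 \cup B_2$ with $w(S)=t$. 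By Lemma~\ref{lem:sumsetenum} the total sweep cost is $\Ot(|\mathcal{A}_1|\cdot|\mathcal{A}_2|) = \Ot(2^{h(\lambda)n/2})$, and the maintained space is $\Ot(|\mathcal{A}_i|+|\mathcal{B}_i|) = \Ot(2^{h(\lambda)n/4})$, matching the claimed bounds.

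The additive $2^{n/4}$ term in the runtime absorbs unavoidable overhead that is independent of $\lambda$. In particular, since a single random partition may concentrate $|S \cap L_i|$ slightly off the ideal value $\lambda n/4$, I would robustify by iterating over all admissible cardinality splits $(a_1,a_2,b_1,b_2)$ within $O(\sqrt{n})$ of $\lambda n/4$ and constructing $\inc$/$\dec$ for each; the setup cost summed over all such splits is at most $\sum_{a \le n/4}\binom{n/4}{a} = 2^{n/4}$, while the sweep cost across splits is controlled by Vandermonde's identity $\sum_{a_1+a_2=\lambda n/2}\binom{n/4}{a_1}\binom{n/4}{a_2} = \binom{n/2}{\lambda n/2} \le 2^{h(\lambda)n/2}$ together with convexity of $h$ (inequality~\eqref{ineq:convexity}). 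The main technical point to watch is keeping this per-split bookkeeping inside the target space bound, which is handled by discarding each split's priority queues before moving on to the next.
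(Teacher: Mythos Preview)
Your approach is essentially the same as the paper's: a random four-way split of $[n]$ into blocks of size $n/4$, Lemma~\ref{lem:conc} for balanced intersection with the solution, restriction to size-$\lambda n/4$ subsets within each block, and then the Schroeppel--Shamir 4-SUM sweep via Lemma~\ref{lem:sumsetenum} (equivalently Lemma~\ref{ref:4sum}).

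One small correction worth noting: your robustification paragraph is unnecessary and misattributes the $2^{n/4}$ term. Lemma~\ref{lem:conc} gives $|S\cap L_i|=\lambda n/4$ \emph{exactly} with probability $\Oms(1)$, not merely approximately, so there is no need to iterate over nearby cardinality splits. In the paper the additive $2^{n/4}$ simply comes from constructing each list $\mathcal{A}_i$ by scanning all $2^{|A_i|}=2^{n/4}$ subsets of the block and retaining only those of size $\lambda n/4$; this costs $\Os(2^{n/4})$ time but only $\Os(\binom{n/4}{\lambda n/4})=\Os(2^{h(\lambda)n/4})$ space.
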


\begin{proof}[Proof of Lemma~\ref{small-lambda}]
    Let $S$ be the solution to the \ssum instance such that $|S| = \lambda n$.  Randomly partition
    $[n] = A_1 \uplus A_2 \uplus A_3 \uplus A_4$, each of size $n/4$.  By
    Lemma~\ref{lem:conc} with probability $\Oms(1)$ we have that $|A_i \cap X| = \frac{\lambda n}{4}$
    for all $i \in [4]$.
    Next for all $i \in [4]$ we enumerate sets:
    \begin{displaymath}
        \mathcal{A}_i = \left\{ w(B) \; \Big| \; B \subseteq
        \binom{A_i}{\lambda n/4} \right\}.
    \end{displaymath} 
    We can construct $\mathcal{A}_i$ in $\Os(2^{n/4})$ time and
    $\Os(|\mathcal{A}_i|)$ space by testing all possible subsets of $A_i$.
    Finally, we invoke 4-SUM algorithm from Lemma~\ref{ref:4sum} on instance
    $\mathcal{A}_1,\ldots,\mathcal{A}_4,t$. It runs in $\Os\left(
    { {n/4} \choose {\ell/4} }^2 \right)$ time and $\Os\left( { {n/4} \choose
    {\ell/4} } \right)$ space. For correctness, observe that $|\mathcal{A}_i| = \Os({{n/4} \choose {\lambda
    n /4}})$ and with $\Oms(1)$ probability $w(S\cap A_i) \in \mathcal{A}_i$.
\end{proof}

\section{Improving Schroeppel and Shamir: Proof of Theorem~\ref{main-thm}}
\label{main-proof}

This section is devoted to the proof of Theorem~\ref{main-thm}.  The main
technical effort, done in Subsections~\ref{sec:algo} to~\ref{subsec:runtime}, is
to prove the following lemma. 

\begin{lemma}[Main Lemma]\label{main-lem}
    Let $\lambda_0 := 0.495$, $\eps_0 := 0.00002$.
    Let $\lambda \in [\lambda_0,0.5]$, $\eps_R \in [0,\eps_0]$, $\mu \in (0.21,0.25)$ and let $M_L,M,M_R \subseteq [n]$ be disjoint sets such that $|M|=|M_L|=|M_R|=\mu n$.
    Let $0 \leq \eps \leq \eps_L \leq \eps_R$ be such that $M_L$ is an $\eps_L$-mixer, $M$ is an $\eps$-mixer and $M_R$ is an $\eps_R$-mixer.
	Let $S \subseteq [n]$ be such that $w(S)=t$ and $|M_L \cap S| = |M \cap S|=|M_R \cap S| = \lambda\mu n$.
	
	There is a Monte Carlo algorithm for Subset Sum that, given the instance $w_1,\ldots,w_n,t$, the sets $M_L,M,M_R$, and $\lambda,\eps_L,\eps_R$, runs in time $\Os(2^{n/2})$ and space

    \begin{displaymath}
        \Os\left(
            2^{(1/2 - \mu(3/2 + \lambda - h(1/4)))n+ 0.02\mu n} + 2^{\mu n(2h(1/4) - \lambda) + 0.02\mu n} + 2^{\mu n}
        \right).
    \end{displaymath}
\end{lemma}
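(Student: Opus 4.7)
The proof executes a three-level application of the representation technique, in which the mixer $M$ splits the sought solution $S$ into halves $A_{12}:=A_1\uplus A_2$ and $A_{34}:=A_3\uplus A_4$ with $w(A_{12})+w(A_{34})=t$, while $M_L$ (resp.\ $M_R$) further splits $A_{12}$ (resp.\ $A_{34}$) into the pair $A_1,A_2$ (resp.\ $A_3,A_4$), following the schematic of Figure~\ref{fig:part}. Concretely, I partition $[n]\setminus(M_L\cup M\cup M_R)$ arbitrarily into halves $L,R$, draw three independent random primes $p_L,p,p_R$ with three random residues $x_L,x,x_R$, and choose the prime sizes according to Lemma~\ref{lem:epsmixer} so that with probability $\Omega(1/\poly(n))$ the witness $S$ admits an expansion meeting all three mod-prime filters simultaneously. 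Independence of the three mixer-structures is exactly what allows the three applications to be combined.

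Second, I build four lists $\Ll_1,\Ll_2,\Rr_1,\Rr_2$ containing candidates for $A_1,A_2,A_3,A_4$. Each candidate lies in the appropriate part of the partition (e.g.\ $\Ll_1\subseteq 2^{L\cup M_L\cup M}$), has prescribed intersection sizes with the mixers (close to $\mu n/4$ each, by the expansion step of the representation technique), and satisfies the relevant congruence via $p_L$ or $p_R$. The lists are enumerable by standard pseudo-polynomial sumset dynamic programming in time and space proportional to their sizes. Here the imbalance between $|L|$ and the mixer-restricted lists is exploited: when $\lambda<1/2$ the representation at $M_L$ genuinely expands the search space, and the filtering via $p_L$ compensates.

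Third, I enumerate the pairs of $\Ll_1\times\Ll_2$ in order of total weight using the priority-queue structure $\inc$ from Lemma~\ref{lem:sumsetenum}, thereby producing, on the fly, the list $\Ll$ of candidates for $A_{12}$ satisfying $w(A_{12})\equiv_p x$; the list $\Rr$ is produced symmetrically from $\Rr_1\times\Rr_2$. For each weight class of $\Ll$, I invoke the Orthogonal Vectors algorithm from Theorem~\ref{ov-simplified} on the $M$-restrictions of the currently relevant elements of $\Ll$ and $\Rr$ (vectors in $\{0,1\}^{\mu n}$ of support $\mu n/4$) to find a pair disjoint on $M$ whose weights sum to $t$. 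Since the $L/R$ parts and the $M_L/M_R$ parts are disjoint by construction, such a pair reconstructs a valid $S=A_1\uplus A_2\uplus A_3\uplus A_4$.

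The main obstacle is the quantitative analysis. The $2^{\mu n}$ term in the claimed space bound is precisely the $2^d$ working space of the OV subroutine, while the other two terms come from (a) the priority-queue sizes $\max_i |\Ll_i|,|\Rr_i|$ and (b) the per-weight OV input size $|\Ll|,|\Rr|$. I expect the verification to reduce to two checks: first, that the filtering losses by the three primes combined with the expansion factor $\binom{\mu n}{\mu n/4}^{\Theta(1)}=2^{\Theta(\mu n h(1/4))}$ yield expected list sizes matching the two non-$2^{\mu n}$ summands, with the $\eps,\eps_L,\eps_R$-mixer deficits absorbed into the $0.02\mu n$ slack; second, that the total OV runtime, summed over all weight buckets of $\Ll$, is $\Os(2^{n/2})$, since the $\Ot(N\cdot 2^d/\binom{d}{d/4})$ dependence of Theorem~\ref{ov-simplified} cancels exactly one of the $\binom{\mu n}{\mu n/4}$ expansion factors introduced by the representation technique at level $M$. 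The exploitation of the constraint $\lambda\in[\lambda_0,1/2]$ and of the ordering $\eps\le\eps_L\le\eps_R$ is what forces the losses to cancel.
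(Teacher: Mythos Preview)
Your high-level architecture is right (two levels of representation feeding a Schroeppel--Shamir priority-queue enumeration feeding OV), but two structural choices in your plan diverge from the paper in ways that actually break the $\Os(2^{n/2})$ time bound.

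First, the paper does \emph{not} use three independent primes. It draws only $p_R\in\Theta(2^{(\lambda-\eps_R)|M|})$ and $p'\in\Theta(2^{(\eps_R-\eps_L)|M|})$ and sets $p_L:=p'\cdot p_R$, so that $p_R\mid p_L$. The level-1 congruence at $M$ is then expressed as $w(S_1\cup\cdots\cup S_4)\equiv_{p_L} x$, and the divisibility automatically yields $w(S_5\cup\cdots\cup S_8)\equiv_{p_R} t-x$ without a third prime. This is why the ordering $\eps\le\eps_L\le\eps_R$ matters: it makes $p'\ge 1$, and more subtly it lets the time analysis bound the total OV input via $|w(2^{L\cup M_L})|\le 2^{|L|}\cdot 2^{(1-\eps_L)\mu n}$ (using the mixer property of $M_L$, not of $M$), after which the $\eps_L$ exactly cancels the $\eps_L$ in $p_L$. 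Your statement that ``independence of the three mixer-structures is what allows the applications to be combined'' has it backwards; with three independent primes you cannot make the level-1 filter cooperate with the level-2 list constructions without either inflating the lists or introducing an extra $2^{\Theta(\eps)\mu n}$ loss.

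Second, several ingredients needed to make the runtime land exactly at $\Os(2^{n/2})$ are missing from your plan. The partition $L\uplus R$ is not ``arbitrary halves'': the paper sets $|L|=\tfrac{(1-3\mu-\beta\mu)n}{2}$ with a balance parameter $\beta=h(\sigma\lambda)-h((1-\sigma)\lambda)$ to equalize the two OV input sizes $|\Aa|,|\Bb|$ when $\sigma\neq 1/2$. The algorithm must iterate over the unknown split parameters $\sigma,\sigma_L,\sigma_R$ (constrained by Lemma~\ref{lem:epsmixer} to $h(\sigma_\bullet)\ge 1-\eps_\bullet/\lambda-\tfrac{\log_2 n}{n}$). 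The list universes are $\Ll_1\subseteq 2^{L\cup M_L}$, $\Ll_2\subseteq 2^{M_L\cup M}$, $\Rr_2\subseteq 2^{M\cup M_R}$, $\Rr_1\subseteq 2^{M_R\cup R}$; your $\Ll_1\subseteq 2^{L\cup M_L\cup M}$ would force an extra disjointness check on $M$ between $\Ll_1$ and $\Ll_2$. Finally, because $\sigma$ need not be $1/2$, you need the general OV statement (Theorem~\ref{ov-lemma}) handling supports $\sigma\lambda d$ and $(1-\sigma)\lambda d$, not just Theorem~\ref{ov-simplified} for support $d/4$.
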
	

The performance of the algorithm depends on
the parameters $\lambda$, $\mu$, $\eps_L$ and $\eps_R$. It is
instructive to think about $\eps_L =\eps_R= 0$ and $\lambda=1/2$.

First, we prove the main result of the paper assuming Lemma~\ref{main-lem} by using the elementary
preprocessing algorithms provided in Section~\ref{sec:prepro}.

\begin{proof}[Proof Theorem~\ref{main-thm} assuming Lemma~\ref{main-lem}]
    Set $\mu := 0.217$. With polynomial overhead we can guess $|S| = \lambda n$.
    If $\lambda < \lambda_0$ then we use Theorem~\ref{small-lambda} to solve \ssum in
    $\Os(2^{h(\lambda)n/4}) \le \Os(2^{0.249982n})$ space and $\Os(2^{n/2})$ time. Hence, we can assume that
    $\lambda \ge \lambda_0$. We can also assume that $\lambda \leq 1/2$ by looking for $[n] \setminus S$ instead of $S$ by changing $t$ to $w([n])-t$.
	
    Next, randomly select pairwise disjoint sets $M,M_L,M_R \in { {[n]}
    \choose {\mu n}}$. For each of them we use Lemma~\ref{mixation-parameter} to
    determine the $\eps,\eps_L,\eps_R$ such that $M$ is an $\eps$-mixer, $M_L$
    is an $\eps_L$-mixer and $M_R$ is an $\eps_R$-mixer. If at least one of
    $\eps,\eps_L,\eps_R$ is at least $\eps_0$, use Theorem~\ref{win-win} to
    solve the instance in $\Os(2^{(1-\mu \eps_0)n/4}) \le \Os(2^{0.24999892})$
    space and $\Os(2^{n/2})$ time. Hence we can assume $\eps,\eps_L,\eps_R <
    \eps_0$.

    Finally, Lemma~\ref{main-lem} applies and it solves the instance in time
    $\Os(2^{n/2})$. For our choice of the parameters we get that the space is
    at most $\Os(2^{0.2491n})$. 
    
    In total, the space complexity of our algorithm is bounded by
    $\Os(2^{0.249999n})$ as claimed.
\end{proof}

The rest of this section is devoted to the proof of Lemma~\ref{main-lem}.
This lemma is an extension of Theorem~\ref{simple-secondmain-thm} combined with a fast OV algorithm. As mentioned
in Subsection~\ref{subsec:intui}, we apply the representation technique on 2
levels and therefore we need $3$ sets $M_L,M,M_R$. Moreover, the assumption $0<
\eps \leq \eps_L,\eps_R$ is to avoid the aforementioned undesired $\Os(2^{(0.5 +
\Oh(\eps))n})$ running time.

\subsection{The Algorithm for Lemma~\ref{main-lem}}
\label{sec:algo}
\begin{algorithm}
	\DontPrintSemicolon
	\SetKwInOut{Input}{Algorithm}
	\SetKwInOut{Output}{Output}
    \Input{$\mathtt{SubsetSum}(w_1,\ldots,w_n,t,M_L,M,M_R,\lambda,\eps_L,\eps_R$)}
	\Output{Set $S$ with $w(S)=t$ and $|M_L \cap S|,|M \cap S|, |M_R \cap S|=\lambda|M|$, if it exists}

	Arbitrarily partition $[n]\setminus (M_L \cup M \cup M_R)= L \uplus R$ such that $|L|,|R|$ satisfy~\eqref{eq:lr}\label{lin:lr}
	
    Pick random primes $p_R \in \Theta(2^{(\lambda-\eps_R)|M|})$, $p' \in \Theta(2^{(\eps_R-\eps_L)|M|})$; set $p_L=p' \cdot p_R$ \\
    Pick random $x_L \in \mathbb{Z}_{p_L}, x \in \mathbb{Z}_{p_L}, x_R \in \mathbb{Z}_{p_R}$ \\
    \ForEach{$\sigma,\sigma_L,\sigma_R$ s.t. $h(\sigma),h(\sigma_L) \geq 1-\eps_L/\lambda-\frac{\log_2
n}{n}$ \textnormal{and} $h(\sigma_R) \geq  1-\eps_R/\lambda-\frac{\log_2
n}{n}$ }{\label{lin:sigmaiter}
		Construct $\Ll_1,\Ll_2,\Rr_1,\Rr_2$ as defined in \Crefrange{eq:l1}{eq:r2}\\
        \If{$\mathtt{WeightedOV}(\Ll_1,\Ll_2,\Rr_1,\Rr_2,M,t)$}{\label{lin:retsol}
            \textbf{return} true
        }
		
	}
    \textbf{return} false
	\caption{Pseudocode of the algorithm for Lemma~\ref{main-lem}}
	\label{main-alg}
\end{algorithm}

Algorithm~\ref{main-alg} presents the pseudocode of
Lemma~\ref{main-lem}.  The $\mathtt{WeightedOV}$ subroutine decides whether
there exists $(A_1,\ldots,A_4) \in \Ll_1 \times \Ll_2 \times \Rr_1 \times \Rr_2$
with $w(A_1\cup \ldots\cup A_4) = t$ and $A_i \cap A_j =\emptyset$ for all $i
\neq j$. This subroutine will be provided and analysed later in the
Section~\ref{subsec:runtime}.

On a high level, Algorithm~\ref{main-alg}  has the same structure as
Algorithm~\ref{secondmain-alg}, with one major difference: The sets $\Ll$ and
$\Rr$ are generated implicitly. To generate these lists we combine the technique
from~\cite{schroeppel} as summarized in Lemma~\ref{lem:sumsetenum} with two
more applications of the representation technique used to generate $\Ll$ and
$\Rr$.\footnote{Note that, formally speaking, the list $\Ll$ from
Algorithm~\ref{secondmain-alg} is not the same as the set of elements of list
$\Ll$ of Algorithm~\ref{main-alg}, but since the two are almost
identical we kept the same notation. }

The algorithm iterates over every possible choice of parameters
$\sigma,\sigma_L,\sigma_R\in [0,1]$, such that
$h(\sigma),h(\sigma_L) \ge 1-\eps_L/\lambda-\frac{\log_2 n}{n}$ and $h(\sigma_R) \ge 1-\eps_R/\lambda-\frac{\log_2 n}{n}$ in
Line~\ref{lin:sigmaiter}. The precision of $\sigma,\sigma_R,\sigma_L$ is
polynomial, since these parameter describe the size of possible subsets of
$M,M_R,M_L$. The purpose of one iteration of this loop is
summarized in the following lemma, which is also illustrated in
Figure~\ref{fig:realpart}:

\begin{lemma}\label{lem:detect}
	Consider an iteration of the loop at Line~\ref{lin:sigmaiter} of Algorithm~\ref{main-alg} with parameters $\sigma,\sigma_L,\sigma_R$. Suppose there exists a set $S \in \binom{[n]}{\lambda n}$ with $w(S)=t$ that has a partition $S= S_1 \uplus S_2 \uplus \cdots \uplus S_8$ satisfying the following properties:
	\begin{gather}\label{eq:conditions}
		\begin{align}
			S_1 &\subseteq L, 					& S_2 &\in \binom{M_L}{\sigma_L\lambda|M|}, &  S_3 &\in \binom{M_L}{(1-\sigma_L)\lambda|M|}, & S_4 &\in \binom{M}{\sigma\lambda|M|},\nonumber\\
			 S_8 &\subseteq R, 					& S_7 &\in \binom{M_R}{\sigma_R\lambda|M|}, &  S_6 &\in \binom{M_R}{(1-\sigma_R)\lambda|M|}, & S_5 &\in \binom{M}{(1-\sigma)\lambda|M|},
		\end{align}\\
        w(S_1 \cup S_2 \cup S_3 \cup S_4) \equiv_{p_L} x, \quad w(S_5 \cup S_6 \cup S_7 \cup S_8) \equiv_{p_R} t-x, \nonumber\\
        w(S_1 \cup S_2) \equiv_{p_L} x_L, \quad   w(S_3 \cup S_4) \equiv_{p_L} x - x_L, \nonumber\\
        w(S_5 \cup S_6) \equiv_{p_R} x_R, \quad   w(S_7 \cup S_8) \equiv_{p_R} t -x - x_R. \nonumber
	\end{gather}
    Then during this iteration the Algorithm~\ref{main-alg} returns true.
\end{lemma}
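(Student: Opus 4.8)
The plan is to turn the assumed partition $S = S_1 \uplus \cdots \uplus S_8$ into an explicit certificate for the call $\mathtt{WeightedOV}(\Ll_1,\Ll_2,\Rr_1,\Rr_2,M,t)$ made in the iteration with parameters $\sigma,\sigma_L,\sigma_R$ (which the statement presupposes is an iteration actually executed, so the range in Line~\ref{lin:sigmaiter} is not a concern here), and then to invoke the contract of $\mathtt{WeightedOV}$ established in Section~\ref{subsec:runtime} to conclude that this call, and hence Algorithm~\ref{main-alg}, returns true. Concretely, I would set $B_1 := S_1 \cup S_2$, $B_2 := S_3 \cup S_4$, $B_3 := S_7 \cup S_8$ and $B_4 := S_5 \cup S_6$; the grouping is forced by which of the blocks $L,M_L,M,M_R,R$ each $S_j$ sits in, so that $B_1$ is supported on $L\cup M_L$, $B_2$ on $M_L\cup M$, $B_4$ on $M\cup M_R$ and $B_3$ on $M_R\cup R$. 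The bulk of the proof is then to verify that $(B_1,B_2,B_3,B_4) \in \Ll_1\times\Ll_2\times\Rr_1\times\Rr_2$, by unwinding the definitions in \Crefrange{eq:l1}{eq:r2}.

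This membership check splits into two parts. First, the block and support-size constraints — that $B_1\subseteq L\cup M_L$ with $|B_1\cap M_L| = \sigma_L\lambda|M|$, that $B_2\subseteq M_L\cup M$ with $|B_2\cap M_L| = (1-\sigma_L)\lambda|M|$ and $|B_2\cap M| = \sigma\lambda|M|$, and the symmetric statements for $B_3$ and $B_4$ — are read off directly from the first two lines of \eqref{eq:conditions}. Second, the congruence constraints $w(B_1)\equiv_{p_L} x_L$, $w(B_2)\equiv_{p_L} x-x_L$, $w(B_3)\equiv_{p_R} t-x-x_R$ and $w(B_4)\equiv_{p_R} x_R$ are exactly the last two lines of \eqref{eq:conditions} rewritten for the $B_i$; the two remaining, \emph{first-level} congruences of \eqref{eq:conditions} are consistent with these because $p_L = p' p_R$ is a multiple of $p_R$, and they guarantee that $B_1\cup B_2$ and $B_3\cup B_4$ are legal elements of the implicit lists $\Ll$ and $\Rr$.

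It remains to check the two global certificate properties. Disjointness: since $S_1,\ldots,S_8$ are pairwise disjoint (they partition $S$), so are $B_1,B_2,B_3,B_4$, so in particular every disjointness test $\mathtt{WeightedOV}$ performs — whichever of the pairs $\Ll_1\times\Ll_2$, $\Rr_1\times\Rr_2$, $\Ll_2\times\Rr_2$ it tests — succeeds, while the remaining pairs are disjoint for free since they are supported on disjoint blocks of $[n]=L\uplus M_L\uplus M\uplus M_R\uplus R$. Correct sum: $B_1\cup B_2\cup B_3\cup B_4 = S$, so $w(B_1\cup\cdots\cup B_4)=w(S)=t$. Hence $(B_1,B_2,B_3,B_4)$ is a valid witness, the call at Line~\ref{lin:retsol} in this iteration returns true, and so does the algorithm. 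I expect no real obstacle here beyond careful bookkeeping: the one point that needs attention is lining up the grouping $(S_1\cup S_2,\ S_3\cup S_4,\ S_7\cup S_8,\ S_5\cup S_6)$ with the precise definitions in \Crefrange{eq:l1}{eq:r2} — in particular which of $B_i\cap M_L$, $B_i\cap M$, $B_i\cap M_R$ and $w(B_i)$ each list actually stores, since disjointness only ever needs to be enforced on the $M_L$, $M$ and $M_R$ blocks — and keeping the two primes $p_L$ and $p_R$ straight.
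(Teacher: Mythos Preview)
Your proposal is correct and follows essentially the same approach as the paper: define $A_1:=S_1\cup S_2\in\Ll_1$, $A_2:=S_3\cup S_4\in\Ll_2$, $A_3:=S_5\cup S_6\in\Rr_2$, $A_4:=S_7\cup S_8\in\Rr_1$, observe that these are pairwise disjoint with $\sum_i w(A_i)=t$, and conclude that $\mathtt{WeightedOV}$ returns true. Your write-up is more explicit about the bookkeeping (support sizes, which congruence goes with which list, which disjointness tests are nontrivial) than the paper's two-sentence proof, but the argument is the same.
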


\begin{figure}
	\center
	\includegraphics[width=0.75\textwidth]{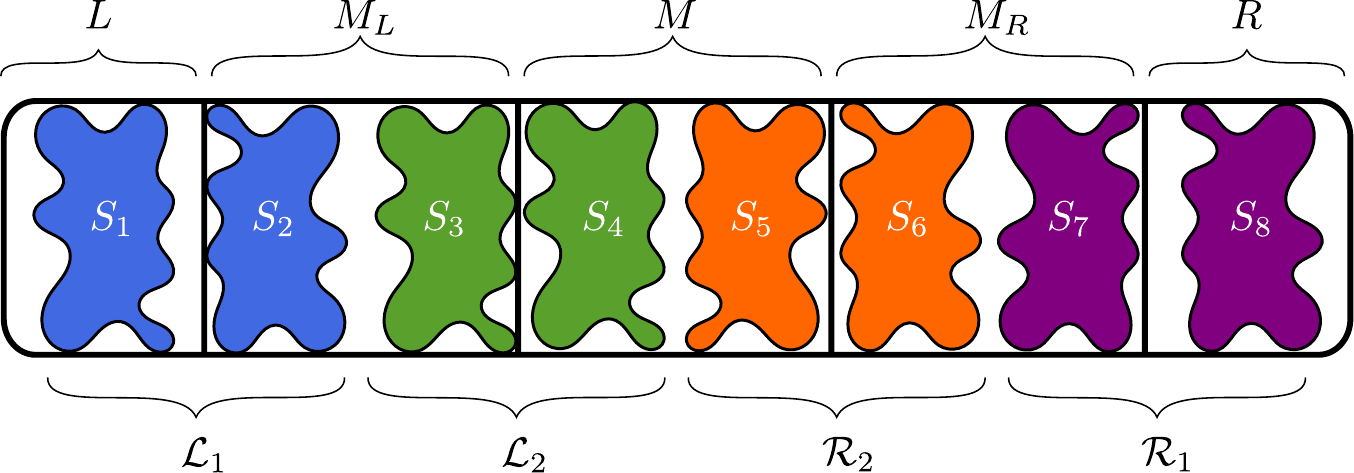}
	\caption{ The decomposition of the solution $S=S_1\uplus\ldots \uplus S_8$ as formalized in Lemma~\ref{lem:detect}.}
	\label{fig:realpart}
\end{figure}

The (relatively straightforward) proof of Lemma~\ref{lem:detect} will be given in Subsection~\ref{subsec:correctmainalg} where we prove the correctness of the algorithm.
To obtain a relatively fast algorithm in the case that $\eps$ is bounded
away from $0$ or $\lambda$ is bounded away from $1/2$, we need to carefully
define the lists $\Ll_1,\Ll_2,\Rr_1,\Rr_2$ in order to not slow down the run
time to beyond $\Os(2^{n/2})$. To do so, we use the following balance parameter 
\[
	\beta = \beta(\lambda,\sigma) := h(\sigma\lambda) - h((1-\sigma)\lambda).
\]
Intuitively, $\beta$ expresses the difference of the expected list sizes
$\{\Ll(a)\}_{a}$ and $\{\Rr(b)\}_{b}$ (see
Line~\ref{lin:constructll} and Line~\ref{lin:constructrr} of
Algorithm~\ref{main-weighted-OV}) when we would have set $|L|=|R|$.
Observe that if $\eps=0$ and $\lambda=1/2$, then $\sigma_L,\sigma,\sigma_R=1/2$ and indeed $\beta=0$.

All elements of $[n]$ not in $M_L \cup M \cup M_R$ are arbitrarily partitioned into $L$ and $R$ on Line~\ref{lin:lr} where $|L|$ and $|R|$ are chosen to compensate for imbalance caused by $\eps,\sigma,\lambda$ as follows:
\begin{align}
\label{eq:lr}
    |L| & = \frac{(1 - 3\mu - \beta\mu) n}{2}, & & |R|  = \frac{(1 - 3\mu +
    \beta\mu) n}{2}.
\end{align}
Observe that $|\beta| \le 1$, and since $\mu \leq 1/4$ we have that $|L|,|R| > 0$.

Now we define the four lists that play a similar role in our algorithm as the four lists in the original algorithm of~\cite{schroeppel}.
\begin{align}
\label{eq:l1}
    \Ll_1 & := \bigg\{ S_1 \cup S_2 \; \bigg| \; w(S_1 \cup S_2) \equiv_{p_L} x_L,&& S_1 \subseteq L,&& S_2 \in \binom{M_L}{\sigma_L \lambda|M|} \bigg\}, \\
   \label{eq:r1}
    \Rr_1 & := \bigg\{ S_7 \cup S_8 \; \bigg| \; w(S_7 \cup S_8) \equiv_{p_R} t-x-x_R,&& S_8 \subseteq R,&& S_7 \in\binom{M_R}{\sigma_R\lambda|M|}\bigg\}, 
\end{align}

\begin{align}
\label{eq:l2}
\Ll_2 &:= \bigg\{ S_3 \cup S_4 \; \bigg| \; w(S_3 \cup S_4) \equiv_{p_L} x-x_L,  
&&S_3 \in \binom{M_L}{(1-\sigma_L)\lambda|M|},
&&S_4 \in \binom{M}{\sigma\lambda|M|} \bigg\},\\
\label{eq:r2}
\Rr_2 &:= \bigg\{ S_5 \cup S_6 \; \bigg| \;  w(S_5 \cup S_6) \equiv_{p_R} x_R, 
&&S_5 \in \binom{M}{(1-\sigma)\lambda|M|}, 
&&S_6 \in \binom{M_R}{(1-\sigma_R)\lambda|M|} \bigg\}.
\end{align}

Using a straightforward algorithm, we can construct each list using
$\Ot(|\Ll_1|+|\Ll_2|+|\Rr_1|+|\Rr_2|+2^{\mu n})$ time and space (see
Lemma~\ref{enumeration-lemma} in Appendix~\ref{enum-appendix}).

\subsection{The Weighted Orthogonal Vectors Subroutine}
\label{subsec:runtime}

\begin{algorithm}
	\DontPrintSemicolon
	\SetKwInOut{Input}{Algorithm}
	\SetKwInOut{Output}{Output}
	\Input{$\mathtt{WeightedOV}$($\Ll_1,\Ll_2,\Rr_1,\Rr_2,M,t$)}
	\Output{If $\exists$ disjoint $(A_1,\ldots,A_4) \in \Ll_1 \times\Ll_2 \times \Rr_2 \times \Rr_1$ with $w(A_1 \cup \ldots \cup A_4) = t$ }
	
	Initialize $\inc= \inc(w(\Ll_1),w(\Ll_2))$ \tcp*{Lemma~\ref{lem:sumsetenum}}
	Initialize $\dec = \dec(w(\Rr_1),w(\Rr_2))$ \tcp*{Lemma~\ref{lem:sumsetenum}}
	Let $(P^r_b,b) = \dec.\nextel$\\
	\ForEach(\tcp*[f]{Integers $a$ are increasing}){$(P^l_a,a) = \inc.\nextel$}{\label{lin:linsearch}
		\While {$a+b > t$} {
			Let $(P^r_b,b) = \dec.\nextel$ \tcp*{Integers $b$ are decreasing}
		}
		\If {$a+b = t$}{
			Construct $\mathcal{L}(a):= \Big\{ Y \cap M \ \Big|\  \exists X \in \Ll_1, Y \in \Ll_2, X \cap Y = \emptyset, (w(X),w(Y)) \in P^l_a \Big\}$\label{lin:constructll}\\
			Construct $\mathcal{R}(b):= \Big\{ Y \cap M \ \Big|\  \exists X \in \Rr_1, Y \in \Rr_2,  X \cap Y = \emptyset, (w(X),w(Y)) \in P^r_b \Big\}$\label{lin:constructrr}\\
			\LineIf{$\mathtt{OV}(\Ll(a),\Rr(b))\neq \emptyset$}{\textbf{return } true\label{lin:ov}} \tcp*{Theorem~\ref{ov-lemma}}
		}
	}
	\textbf{return} false
	\caption{Weighted Orthogonal Vectors algorithm}
	\label{main-weighted-OV}
\end{algorithm}

Now we describe the $\mathtt{WeightedOV}$ subroutine (see pseudo-code in Algorithm~\ref{main-weighted-OV}). The
algorithm is heavily based on the data structures from~\cite{schroeppel} as
described in Lemma~\ref{lem:sumsetenum}. First we initialize the queue $\inc$
for enumerating $w(\Ll_1) + w(\Ll_2)$ in the increasing order and the queue $\dec$ for enumerating 
$w(\Rr_1) + w(\Rr_2)$ in the decreasing order. 
With these queues, we enumerate all groups $\Ll(a) \subseteq M$ with the property that if $S_4 \in
\Ll(a)$ then there exist $X \in \Ll_1$ and $Y \in \Ll_2$ with $Y\cap M = S_4$, $X \cap Y = \emptyset$ and $w(X) +
w(Y) = a$.
Similarly, we enumerate all groups $\Rr(a) \subseteq M$ with the property that if $S_5 \in
\Rr(b)$ then there exist $X \in \Rr_1$ and $Y \in \Rr_2$ with $Y\cap M = S_5$, $X \cap Y = \emptyset$ and $w(X) +
w(Y) = b$.
In the end we
execute a Monte-Carlo algorithm $\mathtt{OV}$ that solves the unweighted orthogonal vectors problem that will be described in  Theorem~\ref{ov-lemma}.

We now analyse the correctness and space usage of this algorithm. The time analysis will be intertwined with the time analysis of Algorithm~\ref{main-alg} and is therefore postponed to Subsection~\ref{mainalg:time}.

\begin{lemma}
 Algorithm $\mathtt{WeightedOV}$ is a correct Monte-Carlo algorithm for the Weighted Orthogonal Vectors Problem.
\end{lemma}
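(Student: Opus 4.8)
The plan is to verify the two standard directions—soundness of a \textbf{true} answer, and, for a YES-instance, a \textbf{true} answer with constant probability—observing that the only source of randomness (hence of Monte Carlo error) is the $\mathtt{OV}$ subroutine of Theorem~\ref{ov-lemma}; a constant number of independent repetitions then boosts the success probability.

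\emph{Soundness.} A \textbf{true} output is produced on Line~\ref{lin:ov} only when, for a pair $a,b$ with $a+b=t$ reached by the merge, $\mathtt{OV}(\Ll(a),\Rr(b))$ returns a disjoint pair $S_4\in\Ll(a)$, $S_5\in\Rr(b)$ (here we use that a nonempty output of $\mathtt{OV}$ certifies a genuine orthogonal pair, which can in any case be verified in polynomial time, so there are no false positives). Unwinding Lines~\ref{lin:constructll}--\ref{lin:constructrr}, this yields $(X_1,Y_2)\in\Ll_1\times\Ll_2$ with $X_1\cap Y_2=\emptyset$, $Y_2\cap M=S_4$ and $w(X_1)+w(Y_2)=a$, together with $(X_4,Y_3)\in\Rr_1\times\Rr_2$ with $X_4\cap Y_3=\emptyset$, $Y_3\cap M=S_5$ and $w(X_4)+w(Y_3)=b$. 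Then $w(X_1)+w(Y_2)+w(Y_3)+w(X_4)=t$, and the four sets are pairwise disjoint: $\{X_1,Y_2\}$ and $\{X_4,Y_3\}$ are disjoint by the constraints enforced when building $\Ll(a)$ and $\Rr(b)$; since $Y_2\subseteq M_L\cup M$ and $Y_3\subseteq M\cup M_R$, their only possible overlap lies in $M$, where $Y_2\cap M=S_4$ and $Y_3\cap M=S_5$ are disjoint by the $\mathtt{OV}$ call, so $Y_2\cap Y_3=\emptyset$; and each of $\{X_1,Y_3\}$, $\{X_1,X_4\}$, $\{Y_2,X_4\}$ lives in disjoint blocks of $[n]=L\uplus M_L\uplus M\uplus M_R\uplus R$ (recall $\Ll_1\subseteq 2^{L\cup M_L}$, $\Ll_2\subseteq 2^{M_L\cup M}$, $\Rr_2\subseteq 2^{M\cup M_R}$, $\Rr_1\subseteq 2^{M_R\cup R}$ by \eqref{eq:l1}--\eqref{eq:r2}). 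Hence $(X_1,Y_2,Y_3,X_4)\in\Ll_1\times\Ll_2\times\Rr_2\times\Rr_1$ is a valid solution.

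\emph{Completeness.} Let $(A_1,A_2,A_3,A_4)\in\Ll_1\times\Ll_2\times\Rr_2\times\Rr_1$ be pairwise disjoint with $w(A_1\cup\dots\cup A_4)=t$, and set $a^\star:=w(A_1)+w(A_2)\in w(\Ll_1)+w(\Ll_2)$, $b^\star:=w(A_3)+w(A_4)\in w(\Rr_1)+w(\Rr_2)$, so $a^\star+b^\star=t$. The crucial step is a two-pointer invariant for the loop at Line~\ref{lin:linsearch}: as \dec enumerates $w(\Rr_1)+w(\Rr_2)$ in decreasing order it must reach $b^\star$, and it can drop strictly below $b^\star$ only during an iteration whose current \inc-value already exceeds $a^\star$ (passing the value $b^\star$ in the \textbf{while} loop requires $a+b^\star>t$, i.e.\ $a>a^\star$); since \inc-values only increase, when \inc reaches $a^\star$ the \dec-pointer is still $\geq b^\star$, and the \textbf{while} loop then drives it exactly to $b=b^\star$, so the test $a+b=t$ fires. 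By Lemma~\ref{lem:sumsetenum} the group associated with the sum $a^\star$ records \emph{every} realization of $a^\star$, so $(w(A_1),w(A_2))$ belongs to it and, as $A_1\cap A_2=\emptyset$, $A_2\cap M\in\Ll(a^\star)$; symmetrically $A_3\cap M\in\Rr(b^\star)$; and $A_2\cap A_3=\emptyset$ forces $A_2\cap M$ and $A_3\cap M$ disjoint. Thus $\Ll(a^\star)\times\Rr(b^\star)$ contains an orthogonal pair, and by Theorem~\ref{ov-lemma} the call $\mathtt{OV}(\Ll(a^\star),\Rr(b^\star))$ is nonempty with constant probability, whence the algorithm returns \textbf{true}; if instead it had already returned \textbf{true} at an earlier pair, that answer is correct by soundness.

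The step I expect to be the main obstacle is making the two-pointer argument fully rigorous: one must argue that no pair $(a,b)$ with $a+b=t$ is skipped by the interleaved advancement of \inc and \dec, and that equal sums appearing in the two sumsets cause no trouble—which is exactly why Lemma~\ref{lem:sumsetenum} collapses all pairs realizing a given sum into a single group. Everything else is bookkeeping over which of $L,M_L,M,M_R,R$ each coordinate belongs to. Finally, the space usage is bounded by the $\Ot(|\Ll_1|+|\Ll_2|+|\Rr_1|+|\Rr_2|)$ space of the two data structures of Lemma~\ref{lem:sumsetenum}, the $\Ot(|\Ll(a)|+|\Rr(b)|)$ needed to hold the current groups, and the space of the $\mathtt{OV}$ call of Theorem~\ref{ov-lemma}.
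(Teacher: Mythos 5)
Your proof is correct and follows the same two-part structure as the paper: soundness by unwinding the definitions of $\Ll(a),\Rr(b)$ and using the block decomposition $[n]=L\uplus M_L\uplus M\uplus M_R\uplus R$ to certify pairwise disjointness, and completeness via the two-pointer merge argument guaranteeing the pair $(a^\star,b^\star)$ is reached, at which point the $\mathtt{OV}$ subroutine succeeds with constant probability. Your soundness bookkeeping and two-pointer invariant are in fact spelled out a bit more explicitly than in the paper, but the argument is the same.
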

\begin{proof}
    If the algorithm outputs true at Line~$\ref{lin:ov}$, there exist $A_1 \in
    \Ll_1,A_2 \in \Ll_2 ,A_3 \in \Rr_2, A_4 \in \Rr_1$ such that
    $w(A_1)+w(A_2)+w(A_3)+w(A_4)=t$. 
    
    First, note that by the construction of sets $\Ll_1,\Ll_2,\Rr_1,\Rr_2$ it has
    to be that $A_2 \cap A_3 \subseteq M$.  Since the $\mathtt{OV}$ algorithm checks
    for disjointness on $M$ we have that $A_2 \cap A_3 \cap M = \emptyset$,
    hence $A_2 \cap A_3 = \emptyset$.  Also, $A_1 \cap A_2 = \emptyset$ because $(X,Y) \in \Ll(a)$ means $X \cap Y =
    \emptyset$. Similarly $A_3 \cap A_4 = \emptyset$ because $(X,Y) \in \Rr(b)$
    means that $X \cap Y = \emptyset$ . By the construction of
    the lists $\Ll_1,\Ll_2,\Rr_1,\Rr_2$ the sets $A_1,\ldots,A_4$ are thus
    mutually disjoint and indeed the instance of Weighted Orthogonal Vectors is
    a YES-instance.
	
    For the other direction, assume the desired $A_1,\ldots,A_4$ quadruple
    exists. Let $t_L := w(A_1 \cup A_2)$.  Then $t_L \in w(\Ll_1)+w(\Ll_2)$ and
    $t-t_L = w(A_3 \cup A_4) \in w(\Rr_1)+w(\Rr_2)$.  By
    Lemma~\ref{lem:sumsetenum} $\inc$ enumerates $w(\Ll_1)+w(\Ll_2)$, and $\dec$
    enumerates $w(\Rr_1)+w(\Rr_2)$ in decreasing order.  Therefore, since the loop
    starting at Line~\ref{lin:linsearch} is a basic linear search routine, it
    sets $a$ to $t_L$ and $b$ to $t-t_L$ in some iteration: If $a$ is set to
    $t_L$ before $b$ is set to $t-t_L$, then $b$ is in this iteration larger
    than $t-t_L$ and it will be decreased in the next iterations until it is set
    to $t-t_L$.  Similarly, if $b$ is set to $t-t_L$ before $a$ is set to $t_L$,
    in this iteration $a$ is smaller than $t_L$ and it will be increased in the
    next iterations until it is set to $t_L$.
	
    In the iteration with $a = t_L$ and $b = t - t_L$ we have that $P^l_a$
    contains the pair $(w(A_1), w(A_2))$ and $P^r_b$ contains the pair $(w(A_4),
    w(A_3))$.  Therefore $\Ll(a)$ contains $A_2 \cap M=S_4$ and $\Ll(b)$
    contains $A_4 \cap M=S_5$. Since $S_4$ and $S_5$ are disjoint a
    solution will be detected by the $\mathtt{OV}$ subroutine with at least
    constant probability on Line~\ref{lin:ov}. 
\end{proof}

\begin{lemma}
	 Algorithm $\mathtt{WeightedOV}$ uses at most $\Os(|\Ll_1| + |\Ll_2| + |\Rr_1| + |\Rr_2| + 2^{|M|})$ space.
\end{lemma}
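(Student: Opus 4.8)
The plan is to walk down Algorithm~\ref{main-weighted-OV} and bound the space of every object simultaneously held in memory, checking each against the claimed budget $\Os(|\Ll_1| + |\Ll_2| + |\Rr_1| + |\Rr_2| + 2^{|M|})$. There are essentially five kinds of objects: (a) the two sumset-enumeration data structures $\inc$ and $\dec$; (b) the four input families $\Ll_1,\Ll_2,\Rr_1,\Rr_2$ themselves, kept indexed by the value $w(\cdot)$; (c) the current weight-pair sets $P^l_a$ and $P^r_b$; (d) the current groups $\Ll(a),\Rr(b)\subseteq 2^M$; and (e) the workspace of the recursive $\mathtt{OV}$ call.

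For (a), Lemma~\ref{lem:sumsetenum} states that $\inc(w(\Ll_1),w(\Ll_2))$ has maximum space usage $\Ot(|w(\Ll_1)| + |w(\Ll_2)|)$, and since $w(\Ff) = \{w(X) : X \in \Ff\}$ satisfies $|w(\Ff)| \le |\Ff|$, this is $\Ot(|\Ll_1| + |\Ll_2|)$; symmetrically $\dec(w(\Rr_1),w(\Rr_2))$ costs $\Ot(|\Rr_1| + |\Rr_2|)$. For (b), storing the families together with a hash table mapping each weight to the list of members attaining it costs $\Os(|\Ll_1| + |\Ll_2| + |\Rr_1| + |\Rr_2|)$. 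For (c) the key observation is that $P^l_a = \{(v,v') : v \in w(\Ll_1),\, v' \in w(\Ll_2),\, v+v' = a\}$ contains, for each $v$, at most one pair (namely with $v' = a - v$), so $|P^l_a| \le |w(\Ll_1)| \le |\Ll_1|$, and symmetrically $|P^l_a| \le |\Ll_2|$; hence $P^l_a$ and, by the same argument, $P^r_b$ fit within $\Os(|\Ll_1| + |\Ll_2| + |\Rr_1| + |\Rr_2|)$. Note that the bodies of the \textbf{while} and \textbf{foreach} loops allocate nothing persistent beyond overwriting $(P^l_a,a)$ and $(P^r_b,b)$.

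For (d), each of $\Ll(a)$ and $\Rr(b)$ is a subfamily of $2^M$, hence has at most $2^{|M|}$ members, each describable in $\poly(n)$ bits, so each occupies $\Os(2^{|M|})$ space. The point that needs care — and the only non-mechanical step — is that $\Ll(a)$ must be \emph{built by streaming}: for each $(v,v') \in P^l_a$ we use the hash index of (b) to iterate over all $X \in \Ll_1$ with $w(X)=v$ and all $Y \in \Ll_2$ with $w(Y)=v'$, test $X \cap Y = \emptyset$, and insert $Y \cap M$ into $\Ll(a)$; this keeps in memory only $\Ll(a)$ plus $\Os(1)$ pointers, whereas a naive implementation that first materializes the (up to $|\Ll_1|\cdot|\Ll_2|$) weight-matched pairs could blow the budget. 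The same applies to $\Rr(b)$. For (e), the call $\mathtt{OV}(\Ll(a),\Rr(b))$ is an instance with $N := |\Ll(a)| + |\Rr(b)| \le 2\cdot 2^{|M|}$ vectors of dimension $d := |M|$, so by the space guarantee of our Orthogonal Vectors subroutine (Theorem~\ref{ov-lemma}, cf.\ Theorem~\ref{ov-simplified}) it runs in $\Ot(N + 2^{d}) = \Ot(2^{|M|})$ space.

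It then remains to observe that only a bounded number of these objects are alive at once: $\inc$, $\dec$, the four indexed input families, one $P^l_a$, one $P^r_b$, at most one $\Ll(a)$ and one $\Rr(b)$, and one $\mathtt{OV}$ workspace, where $\Ll(a),\Rr(b)$ and the $\mathtt{OV}$ workspace are released before the next iteration of the \textbf{foreach} loop. Summing the bounds above yields the claimed $\Os(|\Ll_1| + |\Ll_2| + |\Rr_1| + |\Rr_2| + 2^{|M|})$. I expect no real obstacle here; the one place to be careful is the streaming construction of $\Ll(a),\Rr(b)$ and the accompanying bound $|P^l_a|\le\min\{|\Ll_1|,|\Ll_2|\}$, since a careless treatment would introduce a spurious $\Theta(|\Ll_1|\cdot|\Ll_2|)$ term.
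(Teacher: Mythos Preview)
Your proposal is correct and follows the same approach as the paper: invoke Lemma~\ref{lem:sumsetenum} for the $\inc/\dec$ structures, bound $|\Ll(a)|,|\Rr(b)|\le 2^{|M|}$ since they are subfamilies of $2^M$, and quote the $\Ot(|\Aa|+|\Bb|+2^{|M|})$ space guarantee of Theorem~\ref{ov-lemma} for the $\mathtt{OV}$ call. The paper's own proof is a terse three-line version of exactly these three points; your additional care with $|P^l_a|\le\min\{|\Ll_1|,|\Ll_2|\}$ and the streaming construction of $\Ll(a),\Rr(b)$ is sound and fills in details the paper leaves implicit.
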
	
\begin{proof}
	The datastructures $\inc$ and $\dec$ use at most $\Ot(|\Ll_1| + |\Ll_2| +
    |\Rr_1| + |\Rr_2|)$ space by Lemma~\ref{lem:sumsetenum}, and the sets $\Ll(a)$ and $\Rr(b)$ are of cardinality at most $2^{|M|}$. 
	The statement follows since, as we will show in Theorem~\ref{ov-lemma}, the subroutine $\mathtt{OV}(\Aa,\Bb)$ uses at most $\Ot(|\Aa|+|\Bb|+2^{|M|})$ space.
\end{proof}

\subsection{Correctness of Algorithm~\ref{main-alg}}
\label{subsec:correctmainalg}

We now focus on the correctness of the entire algorithm.
First notice that if the algorithm finds a solution on Line~\ref{lin:retsol}, it is always correct since it found pairwise disjoint sets $A_1,A_2,A_3,A_4$ satisfying $w(A_1 \cup A_2 \cup A_3 \cup A_4)=t$. Thus $S:= A_1 \cup A_2 \cup A_3 \cup A_4$ is a valid solution. The proof of the reverse implication is less easy and its proof is therefore split in two parts with the help of Lemma~\ref{lem:detect}.

Note that because the partition $[n] = L \uplus M_L \uplus M \uplus M_R \uplus R$ is selected
at random, the solution is well-balanced in sets $L,M_L,M,M_R,R$.
The following is a direct consequence of Lemma~\ref{lem:conc}:
\begin{observation}
	\label{good-split}
    Let $S$ be the solution to the \ssum instance with $|S| = \lambda n$. Then,
	with $\Oms(1)$ probability, the following holds:
	$|S \cap M_L| = \lambda |M_L|, \;|S \cap M| =  \lambda |M|, \; |S \cap M_R| = \lambda |M_R|. $
\end{observation}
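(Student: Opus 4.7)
The plan is to apply Lemma~\ref{lem:conc} three times in succession, once for each of $M_L$, $M$, $M_R$, and combine the resulting bounds via the chain rule. Recall from the proof of Theorem~\ref{main-thm} that the sets $M_L,M,M_R$ are drawn as pairwise disjoint uniformly random subsets of $[n]$ of size $\mu n$ each. Equivalently, we may generate them sequentially: first $M_L$, then $M$ uniformly from $\binom{[n]\setminus M_L}{\mu n}$, and finally $M_R$ uniformly from $\binom{[n]\setminus(M_L \cup M)}{\mu n}$. Let $\mathcal{E}_L$, $\mathcal{E}_M$, $\mathcal{E}_R$ denote the three events in the statement of the observation.

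For the first step, apply Lemma~\ref{lem:conc} with $d = n$, $A = S$ (so $\alpha = \lambda$), and $B = M_L$ (so $\beta = \mu$). The lemma yields $\Pr[\mathcal{E}_L] = \Pr[|S \cap M_L| = \lambda\mu n] \ge \Oms(1)$. Next, condition on any outcome of $M_L$ for which $\mathcal{E}_L$ holds. Then $M$ is a uniformly random $\mu n$-subset of $[n]\setminus M_L$, whose size is $(1-\mu)n$, and $S\setminus M_L$ has size $\lambda(1-\mu)n$. Applying Lemma~\ref{lem:conc} with $d = (1-\mu)n$, $\alpha = \lambda$, $\beta = \mu/(1-\mu)$ (so that $\beta d = \mu n$ and $\alpha\beta d = \lambda\mu n$), we obtain $\Pr[\mathcal{E}_M \mid \mathcal{E}_L] \ge \Oms(1)$. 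The argument for $M_R$ is identical: conditioning on $M_L$ and $M$ consistent with $\mathcal{E}_L\cap\mathcal{E}_M$, the set $M_R$ is uniform in $\binom{[n]\setminus(M_L\cup M)}{\mu n}$, where $[n]\setminus(M_L\cup M)$ has size $(1-2\mu)n$ and contains $\lambda(1-2\mu)n$ elements of $S$, so Lemma~\ref{lem:conc} with $d = (1-2\mu)n$, $\alpha = \lambda$, $\beta = \mu/(1-2\mu)$ gives $\Pr[\mathcal{E}_R \mid \mathcal{E}_L\cap\mathcal{E}_M] \ge \Oms(1)$.

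Multiplying by the chain rule,
\[
\Pr[\mathcal{E}_L \cap \mathcal{E}_M \cap \mathcal{E}_R] \;=\; \Pr[\mathcal{E}_L]\cdot\Pr[\mathcal{E}_M\mid\mathcal{E}_L]\cdot\Pr[\mathcal{E}_R\mid\mathcal{E}_L\cap\mathcal{E}_M] \;\ge\; \Oms(1),
\]
since a product of polynomially many $\Oms(1)$ factors is still $\Oms(1)$. There is no real obstacle here; the only point requiring minor care is to verify that conditioning on the previous events preserves uniform randomness on the remaining universe and to bookkeep the parameters $\alpha,\beta$ so that $\alpha\beta d = \lambda\mu n$ at each step, which is the calculation performed above.
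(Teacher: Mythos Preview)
Your proposal is correct and is exactly the natural unpacking of what the paper does: the paper simply states that the observation ``is a direct consequence of Lemma~\ref{lem:conc}'' without spelling out the three sequential applications and the chain-rule combination. Your bookkeeping of the parameters $d,\alpha,\beta$ at each conditioning step is accurate, and the pointwise application of Lemma~\ref{lem:conc} to each fixed outcome of the previously sampled sets is the right way to handle the conditioning.
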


Now we show that if the above event was successful, the conditions of Lemma~\ref{lem:detect} apply with good probability:

\begin{lemma}\label{lem:corprobbound}
	Suppose there exists a solution $S \subseteq [n]$ be such that $w(S)=t$ and $|M_L \cap S| = |M \cap S|=|M_R \cap S| = \lambda\mu n$. Then with probability $\Oms(1)$, there exists a partition $S= S_1 \uplus \cdots \uplus S_8$ satisfying all conditions in~\eqref{eq:conditions}.
\end{lemma}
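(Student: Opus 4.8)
The plan is to build the partition $S = S_1 \uplus \cdots \uplus S_8$ by first splitting $S$ according to the geography $[n] = L \uplus M_L \uplus M \uplus M_R \uplus R$, and then, inside each of the three mixers, using the representation-technique hashing facts (Lemma \ref{lem:epsmixer} and Lemma \ref{lem:pnt}) to argue that the random choices of the primes and shift values $x_L, x, x_R$ land in a ``good'' congruence class with polynomial probability. Concretely, set $S_1 := S \cap L$ and $S_8 := S \cap R$; inside $M_L$ we must choose the split $S_2 \in \binom{M_L \cap S}{\sigma_L \lambda \mu n}$, $S_3 \in \binom{M_L \cap S}{(1-\sigma_L)\lambda\mu n}$, and analogously $S_4, S_5$ inside $M \cap S$ (with parameter $\sigma$) and $S_6, S_7$ inside $M_R \cap S$ (with parameter $\sigma_R$). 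Since $M_L$ is an $\eps_L$-mixer with $\eps_L \le \eps_0$, Lemma \ref{lem:epsmixer} guarantees that among the $\binom{\lambda\mu n}{\sigma_L \lambda\mu n} \ge 2^{(1-f(\eps_0))\lambda\mu n}$ choices of $S_2$, the set of values $\{w(S_1 \cup S_2) \bmod p_L\}$ has size $\Omega(p_L / \mathrm{poly}(n))$, and likewise for the inner $M$ and $M_R$ splits; this is precisely why the iteration range of $\sigma, \sigma_L, \sigma_R$ in Line \ref{lin:sigmaiter} starts at $h^{-1}(1 - \eps_R/\lambda - \tfrac{\log_2 n}{n})$, so that some admissible integer split point exists.

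The key steps, in order, are: (1) condition on the event of Observation \ref{good-split}, which holds with probability $\Oms(1)$, so that $|S \cap M_L| = |S \cap M| = |S \cap M_R| = \lambda \mu n$; (2) fix, for each mixer, the split-point $\sigma_L, \sigma, \sigma_R$ for which the corresponding injective subfamily from Lemma \ref{lem:epsmixer} is large — this is the $(\sigma, \sigma_L, \sigma_R)$ iteration the lemma statement refers to; (3) for the prime $p_R$ and shift $x_R$, argue via Lemma \ref{lem:epsmixer} applied to $M_R \cap S$ that with probability $\Omega(1/\mathrm{poly}(n))$ over $x_R$ there is an $S_6 \in \binom{M_R \cap S}{(1-\sigma_R)\lambda\mu n}$ with $w(S_5 \cup S_6) \equiv_{p_R} x_R$, which by complementation forces $w(S_7 \cup S_8) \equiv_{p_R} t - x - x_R$ once $x$ is fixed; (4) do the symmetric argument for $M_L \cap S$, $p_L$, $x_L$ to pin down $S_2, S_3$ so that $w(S_1 \cup S_2) \equiv_{p_L} x_L$ and $w(S_3 \cup S_4) \equiv_{p_L} x - x_L$; (5) do the argument for $M \cap S$, using that $p_L = p' \cdot p_R$ to split the ``first-level'' constraint $w(S_1 \cup \cdots \cup S_4) \equiv_{p_L} x$ and $w(S_5 \cup \cdots \cup S_8) \equiv_{p_R} t - x$; (6) multiply the $\Omega(1/\mathrm{poly}(n))$ success probabilities of the independent random choices $x_L, x, x_R$ (and of the initial balanced split), obtaining an overall $\Oms(1)$ bound, since there are only $O(1)$ many hashing events each succeeding with inverse-polynomial probability.

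The main obstacle I expect is bookkeeping the \emph{nested} hashing consistently: the constraint $w(S_1 \cup \cdots \cup S_4) \equiv_{p_L} x$ must be compatible with the finer constraints $w(S_1 \cup S_2) \equiv_{p_L} x_L$ and $w(S_3 \cup S_4) \equiv_{p_L} x - x_L$, and simultaneously with $w(S_5 \cup \cdots \cup S_8) \equiv_{p_R} t - x$ where $p_R \mid p_L$, so that reductions modulo $p_R$ of the left-side sets are determined by their reductions modulo $p_L$. Handling this requires applying Lemma \ref{lem:epsmixer} with the composite modulus $p = p' \cdot p_R$ (the lemma is stated for products $\prod p_i$ precisely for this reason) and then reading off the consequence for the event that a uniformly random $x \in \mathbb{Z}_{p_L}$ hits the image of the relevant sum-set; one must check that the $\eps$-mixer hypotheses on $M$, $M_L$, $M_R$ together with $\eps \le \eps_L \le \eps_R \le \eps_0$ make each of these images dense enough. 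A secondary subtlety is that the split points $\sigma_L\lambda\mu n$ etc.\ must be integers, which is why the paper assumes $n$ large enough that products with the fixed constants are integral and why the $\sigma$'s range over a polynomially-fine grid; one simply picks the nearest admissible grid point to the one guaranteed by the pigeonhole step inside Lemma \ref{lem:epsmixer}, losing only another $\mathrm{poly}(n)$ factor that is absorbed into the $\Oms(1)$.
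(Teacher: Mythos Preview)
Your approach is essentially the same as the paper's: fix $S_1 = S\cap L$, $S_8 = S\cap R$, then use Lemma~\ref{lem:epsmixer} inside each of $M_L\cap S$, $M_R\cap S$, $M\cap S$ to hit the required congruence classes, exploiting $p_R \mid p_L$ to make the first-level and second-level constraints consistent.

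One ingredient you only gesture at (``one must check that the $\eps$-mixer hypotheses \ldots\ make each of these images dense enough'') is made explicit in the paper as a short claim: the mixer property transfers from $M$ to $M\cap S$. Since $|w(2^M)| \le |w(2^{M\cap S})|\cdot |w(2^{M\setminus S})|$ and $|M\setminus S|=(1-\lambda)|M|$, one gets $|w(2^{M\cap S})| \ge 2^{(\lambda-\eps)|M|} = 2^{(1-\eps/\lambda)|M\cap S|}$, so $M\cap S$ is an $(\eps/\lambda)$-mixer (and likewise for $M_L,M_R$). This is exactly what lets you apply Lemma~\ref{lem:epsmixer} with a modulus of order $p_L \approx 2^{(\lambda-\eps_L)|M|}$ and is the reason the $\sigma$-iteration starts at $h^{-1}(1-\eps_L/\lambda - \tfrac{\log_2 n}{n})$. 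Two minor points: your step~(1) is superfluous, since the balanced-intersection condition is already in the lemma's hypothesis; and in your step~(3) you refer to $S_5$ before it exists --- the paper handles $M_L$ and $M_R$ first and only then splits $M\cap S$ into $S_4\uplus S_5$, after which the $p_R$-congruence for $S_5\cup S_6\cup S_7\cup S_8$ follows automatically from $p_R\mid p_L$.
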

\begin{proof}

    We select $S_1 = L \cap S$, $S_8 = R \cap S$, and $a,b$ be such that let $a \equiv_{p_L} w(S_1)$ and $b \equiv_{p_R} w(S_8)$. 
    Next we prove that, because the subsets of $M$ generate many distinct sums, the same holds for the solution intersected with $M$:
    \begin{claim}
        \label{mix-claim}
        The set $M \cap S$ is an $\eps'$-mixer for some $\eps'\leq \eps/\lambda$. Similarly, $M_L \cap S$ is
        an $\eps'_L$-mixer for $\eps'_L \leq \eps_L/\lambda$, and $M_R \cap S$ is an $\eps'_R$-mixer for some $\eps'_R \leq \eps_R/\lambda$.
    \end{claim}
    \begin{proof}[Proof of Claim~\ref{mix-claim}]
        Let us focus on $M \cap S$ (the result for $M_L$ and $M_R$ is analogous).
        Because $M$ is an $\eps$-mixer, we know that $2^{(1-\eps_L)|M|} \le |w(2^M)| \le |w(2^{M\cap
        S})| |w(2^{M\setminus S})|$. Since $|w(2^{M\setminus S})| \le
        2^{(1-\lambda)|M|}$ we have that $|w(2^{M\cap S})| \ge 2^{(\lambda -
        \eps)|M|} = 2^{(1-\eps/\lambda)|M \cap S|}$. 
    \end{proof}
    Now we know that $Q = M_L \cap S$ is a good mixer.  We can use
    Lemma~\ref{lem:epsmixer} for $Q = M_L \cap S$ and $p=p_L\cdot p'$, since $|w(2^{|M_L \cap S}|)|\geq 2^{(1-\eps_L/\lambda)|M_L \cap S|}= 2^{(\lambda-\eps_L)|M_L|}$. Because
    $x_L$ was chosen randomly, Lemma~\ref{lem:epsmixer} guarantees that with
    $\Oms(1)$ probability, there exists $S_2 \subseteq M_L \cap S$, such that
    $w(S_2) \equiv_{p_L} x_L - a$. Moreover Lemma~\ref{lem:epsmixer} guarantees that $|S_2| \in [s_0,\lambda \mu n/2]$, where
    $s_0$ is the smallest integer such that $\binom{Q}{s_0} \ge w(2^Q)/|Q|$.
    If we take the logarithm of both sides this is equivalent to
    \begin{displaymath}
        \lambda \mu n \cdot h\left( \frac{s_0}{|Q|} \right) \ge \log_2\left(
         \left|w(2^{(M_L
            \cap S)})\right|\right) - \frac{\log_2 n}{n} \ge (1-\eps_L/\lambda) \lambda \mu n
            - \frac{\log_2 n}{n}.
    \end{displaymath}

	Because we have checked all $\sigma_L$ that satisfy $h(\sigma_L) \ge
    (1-\eps_L/\lambda) - \frac{\log_2 n}{n}$ the algorithm will eventually guess the correct $s_0$ (and the same
    reasoning holds for $\sigma$ and $\sigma_R$). We select $S_3 = (M_L \cap S)
    \setminus S_2$ with $|S_3| = (1-\sigma_L)\mu n$.

    In a similar manner we can prove that with $\Oms(1)$ probability there exists $S_7 \subseteq M_R \cap
    S$, such that $w(S_7) \equiv_{p_R} (t-x-x_R) - b$ with $|S_7| = \sigma_R \mu n$ and
    $h(\sigma_R) \ge 1-\eps_R/\lambda - \frac{\log_2 n}{n}$ (we need to apply
    Lemma~\ref{lem:epsmixer} with $Q = M_R \cap S$ and prime $p_R$).
    Moreover, this probability only depends on $x_R$ which is independent of all other random variables and events.
    If this happens, we select $S_6 = (M_R \cap S) \setminus S_7$ with $|S_6| = (1-\sigma_R)\mu
    n$.

    Conditioned on the existence of $S_1,S_2,S_3,S_6,S_7,S_8$, now we prove there exist $S_4$ and $S_5$ with $\Oms(1)$ probability. Let $c = w(S_1\cup S_2 \cup S_3)$ and $d
    = w(S_6\cup S_7\cup S_8)$. We again use Lemma~\ref{lem:epsmixer}, but this time with $Q = M
    \cap S$ and $p=p_R\cdot p'$. It assures that with high probability there exist
    $S_4 \subseteq M \cap S$, with  $w(S_4) \equiv_{p_L} x-c$ and $|S_4| =
    \sigma \mu n$ with $h(\sigma) \ge 1-\eps_L/\lambda - \frac{\log_2 n}{n}$.
    And indeed, again this probability only depends on $x_R$ which is independent of all other random variables and events.
    If this event happens, we select $S_5 = (M\cap S) \setminus S_4$. 

    Now we use the fact that $p_R$ divides $p_L$: If $x \equiv_{p_L} a$ then $x \equiv_{p_R} a$ because $(x-a) = k\cdot p' \cdot p_R$ for some $k \in \mathbb{Z}$.
    Hence $w(S_5) + d \equiv_{p_R} w(S)-x$, which means that $w(S_5
    \cup S_6 \cup S_7 \cup S_8) \equiv_{p_R} t-x$. Moreover it holds that $|S_5| =
    (1-\sigma)\mu n$, thus $S_5$ also satisfies the desired conditions.

    To conclude observe that the constructed sets $S_1,\ldots,S_8$ are disjoint.
\end{proof}

Finally, we prove the Lemma~\ref{lem:detect}. Namely, we show that the existence of the tuple $(S_1,\ldots,S_8)$ implies that a solution is detected.

\begin{proof}[Proof of Lemma~\ref{lem:detect}]
	By the construction of $\Ll_1,\Ll_2,\Rr_1, \Rr_2$ and the assumed properties of the lemma, we have that $A_1:= S_1 \cup S_2 \in \Ll_1$, $A_2 := S_3 \cup S_4 \in \Ll_2$, $A_3:= S_5 \cup S_6 \in \Rr_2$, and $A_4:= S_7 \cup S_8 \in \Rr_1$. 
	Since the sets $S_1,\ldots,S_8$ are pairwise disjoint and satisfy
    $\sum_{i=1}^8 w(S_i)=t$, the sets $A_1,\ldots,A_4$ certify that
    $\mathtt{WeightedOV}(\Ll_1,\Ll_2,\Rr_1,\Rr_2,M,t)$ outputs true.
\end{proof}

The correctness of Algorithm~\ref{main-alg} directly follows by
combining Lemma~\ref{lem:corprobbound} and Lemma~\ref{lem:detect}.

\subsection{Space Usage of Algorithm~\ref{main-alg}}
\label{subsec:space}
The bulk of the analysis of the space usage consists of computing the expected
sizes of the lists $\Ll_1,\Ll_2,\Rr_1,\Rr_2$. This requires us to look closely
into the setting of the parameters.

\paragraph{Useful bounds on parameters}
Recall, that we defined the following constants
$\lambda_0 := 0.495$ and $\eps_0 := 0.00002$. Then, we assumed that
$\eps,\eps_L,\eps_R \le \eps_0$ and $\lambda \in [\lambda_0,0.5]$.
Moreover, we have chosen $\sigma,\sigma_L,\sigma_R$, such that:
\begin{align*}
    0.99995 < 1-\eps_0/\lambda_0 - \frac{\log_2{n}}{n} &\le h(\sigma),h(\sigma_L), h(\sigma_R)
\end{align*}

Which means that (for our choice of $\eps_0$ and $\lambda_0$ and large enough $n$):
\begin{align}
    \label{ineq:sigma}
    \sigma,\sigma_L,\sigma_R \in [0.495,0.505].
\end{align}

because $h(0.495) = h(0.505) \approx 0.999928$. Next, observe that
\begin{equation}
   \label{ineq:hsiglam}
   h(\sigma\lambda), h((1-\sigma)\lambda) \le h(1/4) + 0.004.
\end{equation}

because the entropy function is increasing in $[0,0.5]$ and $h(0.5 \cdot 0.505) - h(1/4) < 0.004$.
For the next inequality, recall that $\beta(\sigma,\lambda) = h(\sigma\lambda) -
h((1-\sigma)\lambda)$.
\begin{equation}\label{ineq:hbar}
	\begin{aligned}
        -0.012 \le \beta(\sigma,\lambda)  \le 0.012
	\end{aligned}
\end{equation}

because $|\beta| < h(0.505 \cdot 0.5) - h(0.495 \cdot \lambda_0) < 0.012$.

\paragraph{Bounds on the list sizes}
\begin{claim}
    \label{sizell1}
    $\Ex{|\Ll_1|}  \le \Os\left(2^{\left(1/2 - \mu(3/2 + \lambda - h(1/4) - 0.02) \right)n}\right)$.
\end{claim}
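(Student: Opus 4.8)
The plan is to compute $\Ex{|\Ll_1|}$ directly from the definition in~\eqref{eq:l1}. The list $\Ll_1$ consists of sets $S_1 \cup S_2$ where $S_1$ ranges over all subsets of $L$, $S_2$ ranges over $\binom{M_L}{\sigma_L\lambda|M|}$, subject to the single congruence constraint $w(S_1\cup S_2)\equiv_{p_L} x_L$. The total number of candidate pairs $(S_1,S_2)$, ignoring the congruence, is $2^{|L|}\binom{|M_L|}{\sigma_L\lambda|M|}$. Since $x_L$ is chosen uniformly at random in $\Z_{p_L}$, each fixed pair satisfies the congruence with probability exactly $1/p_L$, so by linearity of expectation $\Ex{|\Ll_1|} = 2^{|L|}\binom{|M_L|}{\sigma_L\lambda|M|}/p_L$.

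**Key steps.** First I would substitute the known quantities: $|L| = (1-3\mu-\beta\mu)n/2$ from~\eqref{eq:lr}, $|M_L| = \mu n$, and $p_L = p'\cdot p_R \in \Theta(2^{(\eps_R-\eps_L)\mu n}\cdot 2^{(\lambda-\eps_R)\mu n}) = \Theta(2^{(\lambda-\eps_L)\mu n})$. Using the entropy estimate~\eqref{eq:binent}, $\binom{\mu n}{\sigma_L\lambda\mu n} \le 2^{\mu n\, h(\sigma_L\lambda)}$. Collecting exponents, the exponent of $2$ in $\Ex{|\Ll_1|}$ is at most
\[
\frac{(1-3\mu-\beta\mu)n}{2} + \mu n\, h(\sigma_L\lambda) - (\lambda-\eps_L)\mu n + O(\log n).
\]
Now I would recall that $\beta = \beta(\sigma,\lambda) = h(\sigma\lambda)-h((1-\sigma)\lambda)$, so that $-\beta\mu n/2 + \mu n\, h(\sigma_L\lambda)$ needs to be compared against the target expression, which has $-\mu(3/2+\lambda-h(1/4))n$ as its $\mu$-dependent part. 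Rewriting the target exponent as $n/2 - \mu n(3/2+\lambda-h(1/4)) + 0.02\mu n = n/2 - 3\mu n/2 + \mu n\, h(1/4) - \lambda\mu n + 0.02\mu n$, I would match term by term: the $n/2$ and $-3\mu n/2$ match directly; $-\lambda\mu n$ matches up to the $\eps_L\mu n$ slack (which is $\le \eps_0\mu n$, absorbed into $0.02\mu n$); and it remains to show
\[
\mu n\, h(\sigma_L\lambda) - \tfrac{1}{2}\beta\mu n \le \mu n\, h(1/4) + (\text{small})\cdot\mu n.
\]
Since $\sigma_L,\sigma \in [0.495,0.505]$ by~\eqref{ineq:sigma} and $\lambda\in[\lambda_0,0.5]$, I would bound $h(\sigma_L\lambda) \le h(1/4)+0.004$ via~\eqref{ineq:hsiglam} and $|\beta| \le 0.012$ via~\eqref{ineq:hbar}, and conclude that the left side exceeds $\mu n\, h(1/4)$ by at most $(0.004 + 0.006)\mu n = 0.01\mu n$, comfortably within the $0.02\mu n$ budget after also accounting for the $\eps_L\mu n \le 0.00002\mu n$ slack and the $O(\log n)$ polynomial factor hidden by $\Os(\cdot)$.

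**Main obstacle.** The only delicate point is bookkeeping the several small additive constants ($0.004$ from~\eqref{ineq:hsiglam}, the $0.012$ bound on $|\beta|$ from~\eqref{ineq:hbar}, the $\eps_0 = 0.00002$ slack, and the $\log_2 n / n$ term in the definition of the $\sigma$-range) and verifying that their total contribution to the exponent is below $0.02\mu n$; this is elementary but must be done carefully since the final space bound in Theorem~\ref{main-thm} depends on getting the constant $0.02$ honest. A secondary subtlety is making sure the factor-of-$2$ fluctuation in $p_L \in \Theta(2^{(\lambda-\eps_L)\mu n})$ (the prime lies in an interval $[W/2,W]$, not exactly at $W$) only costs an additive constant in the exponent, which is absorbed by the $\Os$ notation. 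I do not expect any conceptual difficulty beyond this arithmetic.
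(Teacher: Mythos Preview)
Your proposal is correct and follows essentially the same argument as the paper: compute $\Ex{|\Ll_1|} = 2^{|L|}\binom{\mu n}{\sigma_L\lambda\mu n}/p_L$, substitute $|L|$ and $p_L$, and then absorb the error terms $h(\sigma_L\lambda)-h(1/4)\le 0.004$, $|\beta|/2\le 0.006$, and $\eps_L\le\eps_0$ into the $0.02\mu n$ slack. The bookkeeping you outline matches the paper's line by line.
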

\begin{proof}
    Let $W_L$ be the number of possible different elements from $\Ll_1$. It is 
    \begin{displaymath}
        W_L := 2^{|L|} \binom{\mu n}{\lambda\sigma_L \mu n}
    \end{displaymath}
    The expected size of $\Ll_1$ over the random choices of $x_L$ is 
    \begin{displaymath}
        \Ex{|\Ll_1|} \le \frac{W_L}{p_L}.
    \end{displaymath}
    If we plug in the definition of $|L|$, we have:
    \begin{displaymath}
        (\log_2(\Ex{|\Ll_1|})/n) \le 1/2 - \mu (3/2 + \lambda -
            h(\lambda\sigma_L)) + \mu (\eps_L - \beta/2)
    \end{displaymath}
	By \eqref{ineq:hsiglam} we have that
    $h(\sigma_L\lambda) \le h(1/4) + 0.004$. By \eqref{ineq:hbar} we
    have that $|\beta| \le 0.012$ and $\eps_L < 0.01$. Hence:
    \begin{displaymath}
        (\log_2(\Ex{|\Ll_1|})/n) \le 1/2 - \mu (3/2 + \lambda - h(1/4)) + 0.02 \cdot \mu
        .
    \end{displaymath}
\end{proof}

By symmetry\footnote{The only difference being that $\beta$ shows up positively rather than negatively, but this does not matter since we bound its absolute value.} the same bound holds for $\Ex{|\Rr_1|}$.

\begin{claim}
    \label{sizerr1}
    $\Ex{|\Rr_1|}  \le \Os\left(2^{(1/2 - \mu(3/2 + \lambda - h(1/4) - 0.02))n}\right)$.
\end{claim}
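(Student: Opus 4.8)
The plan is to mirror the proof of Claim~\ref{sizell1} almost verbatim, using the dictionary $R\leftrightarrow L$, $M_R\leftrightarrow M_L$, $\sigma_R\leftrightarrow\sigma_L$, $p_R\leftrightarrow p_L$ between the definition of $\Rr_1$ in~\eqref{eq:r1} and that of $\Ll_1$ in~\eqref{eq:l1}. Concretely, an element of $\Rr_1$ is determined by a choice of $S_8\subseteq R$ together with $S_7\in\binom{M_R}{\sigma_R\lambda\mu n}$, so the number of candidate elements is $W_R := 2^{|R|}\binom{\mu n}{\sigma_R\lambda\mu n}$. Since $p_R\mid p_L$ (so $x\bmod p_R$ is well defined) and $x_R$ is drawn uniformly from $\mathbb{Z}_{p_R}$ independently of $x$ and $t$, for each fixed candidate the congruence $w(S_7\cup S_8)\equiv_{p_R} t-x-x_R$ holds with probability exactly $1/p_R$. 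Hence $\Ex{|\Rr_1|}\le W_R/p_R$.

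Next I would plug in the values: $|R| = (1-3\mu+\beta\mu)n/2$ from~\eqref{eq:lr}, the estimate $\binom{\mu n}{\sigma_R\lambda\mu n}\le 2^{\mu n\,h(\sigma_R\lambda)}$ from~\eqref{eq:binent}, and $p_R = \Theta(2^{(\lambda-\eps_R)\mu n})$. Taking base-$2$ logarithms and dividing by $n$ yields
\[
    \frac{\log_2(\Ex{|\Rr_1|})}{n}\ \le\ \frac12 - \mu\bigl(\tfrac32 + \lambda - h(\sigma_R\lambda)\bigr) + \mu\bigl(\eps_R + \tfrac{\beta}{2}\bigr).
\]
To finish, I would absorb the two error contributions exactly as in Claim~\ref{sizell1}. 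By the same computation underlying~\eqref{ineq:hsiglam} — using $\sigma_R\le 0.505$ from~\eqref{ineq:sigma}, $\lambda\le 1/2$, and monotonicity of $h$ on $[0,1/2]$ — we get $h(\sigma_R\lambda)\le h(1/4)+0.004$; by~\eqref{ineq:hbar} we have $|\beta/2|\le 0.006$; and $\eps_R\le\eps_0<0.01$. Summing these three slacks gives $\mu\bigl(h(\sigma_R\lambda)-h(1/4)\bigr)+\mu(\eps_R+\beta/2)\le (0.004+0.01+0.006)\mu = 0.02\mu$, so $\log_2(\Ex{|\Rr_1|})/n \le 1/2 - \mu\bigl(3/2+\lambda-h(1/4)\bigr)+0.02\mu$, which is precisely the claimed bound.

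There is essentially no genuine obstacle here: the statement is the exact symmetric counterpart of Claim~\ref{sizell1}, and the argument is a routine calculation. The one point worth stating explicitly (and the content of the footnote attached to the claim) is that $\beta$ now enters with a $+$ sign rather than a $-$ sign, because $|R|$ carries $+\beta\mu n/2$ while $|L|$ carries $-\beta\mu n/2$; since the final step only invokes the two-sided bound $|\beta|\le 0.012$ from~\eqref{ineq:hbar}, this sign change is immaterial and the same numeric slack of $0.02\mu$ is obtained. The analogous (and already-used) bound for $\Ex{|\Rr_2|}$ would then come from the mirror of the computation for $\Ll_2$ in the same way.
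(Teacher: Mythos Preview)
Your proposal is correct and is exactly the symmetry argument the paper invokes: the paper's own ``proof'' is the single sentence ``By symmetry the same bound holds for $\Ex{|\Rr_1|}$'' together with the footnote that $\beta$ appears with the opposite sign but is controlled via $|\beta|\le 0.012$. You have simply written out that symmetric computation in full, and your handling of the sign change and the $0.02$ slack matches the paper precisely.
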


Next we bound $|\Ll_2|$ and $|\Rr_2|$:

\begin{claim}
    \label{sizell2}
    $\Ex{|\Ll_2|}  \le \Os(2^{\mu n (2h(1/4) - \lambda) + 0.02 \mu n })$.
\end{claim}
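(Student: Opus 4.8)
The plan is to bound $\Ex{|\Ll_2|}$ as a product of two quantities: the number $W_L'$ of pairs $(S_3,S_4)$ that could conceivably produce an element of $\Ll_2$, and the probability that a fixed such pair survives the congruence filter $w(S_3\cup S_4)\equiv_{p_L}x-x_L$. This mirrors exactly the argument used for Claim~\ref{sizell1}.

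First I would count $W_L'$. By definition~\eqref{eq:l2}, $S_3$ ranges over $\binom{M_L}{\sigma\lambda\mu n}$ and $S_4$ over $\binom{M}{(1-\sigma_L)\lambda\mu n}$, and since $M_L\cap M=\emptyset$ distinct pairs $(S_3,S_4)$ yield distinct unions $S_3\cup S_4$; hence $W_L'=\binom{\mu n}{\sigma\lambda\mu n}\binom{\mu n}{(1-\sigma_L)\lambda\mu n}$. For the survival probability, note that for any fixed pair $(S_3,S_4)$, since $x_L$ is drawn uniformly from $\mathbb{Z}_{p_L}$ independently of $x$, the residue $x-x_L \bmod p_L$ is uniform over $\mathbb{Z}_{p_L}$, so $\Pr[w(S_3\cup S_4)\equiv_{p_L}x-x_L]=1/p_L$. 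By linearity of expectation, $\Ex{|\Ll_2|}\le W_L'/p_L$.

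Next I would lower bound $p_L$. By construction $p_L=p'\cdot p_R$ with $p_R\in\Theta(2^{(\lambda-\eps_R)\mu n})$ and $p'\in\Theta(2^{(\eps_R-\eps_L)\mu n})$ (a valid range because the hypothesis of Lemma~\ref{main-lem} gives $\eps_L\le\eps_R$), so $p_L\in\Theta(2^{(\lambda-\eps_L)\mu n})$. Taking logarithms of $\Ex{|\Ll_2|}\le W_L'/p_L$, dividing by $n$, and applying the entropy estimate~\eqref{eq:binent} yields
\[
\frac{\log_2\Ex{|\Ll_2|}}{n}\le \mu\,h(\sigma\lambda)+\mu\,h\big((1-\sigma_L)\lambda\big)-(\lambda-\eps_L)\mu+o(1).
\]
Since $\sigma,\sigma_L\in[0.495,0.505]$ by~\eqref{ineq:sigma}, we have $1-\sigma_L\in[0.495,0.505]$ as well, so~\eqref{ineq:hsiglam} gives $h(\sigma\lambda)\le h(1/4)+0.004$ and $h((1-\sigma_L)\lambda)\le h(1/4)+0.004$. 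Combined with $\eps_L\le\eps_0=0.00002$, the right-hand side is at most $\mu(2h(1/4)-\lambda)+\mu(0.008+0.00002)\le\mu(2h(1/4)-\lambda)+0.02\mu$, which is the claimed bound.

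I do not expect a real obstacle here: this is a routine counting-plus-hashing estimate identical in structure to Claim~\ref{sizell1}. The only points that need a sentence of care are (i) that $M_L\cap M=\emptyset$ makes $W_L'$ exactly the product of the two binomial coefficients rather than merely an upper bound up to multiplicity, and (ii) that the uniformity of $x-x_L\bmod p_L$, together with $x_L$ being independent of the fixed pair, gives the per-pair survival probability exactly $1/p_L$; the rest is the same arithmetic as before.
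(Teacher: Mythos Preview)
Your proposal is correct and follows essentially the same approach as the paper: count the candidate pairs $(S_3,S_4)$ via the product of the two binomial coefficients, divide by $p_L$ for the congruence filter, take logarithms, and bound the entropy terms using~\eqref{ineq:hsiglam} together with $\eps_L\le\eps_0$. Your version is in fact slightly more careful than the paper's, since you explicitly justify why the survival probability is exactly $1/p_L$ and note that~\eqref{ineq:hsiglam} applies to $(1-\sigma_L)\lambda$ because $\sigma_L\in[0.495,0.505]$ just as $\sigma$ does.
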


\begin{proof}
    Let $W_L$ be the number of possibilities of selecting $S$. It is 
    \begin{displaymath}
        W_L := \binom{\mu n}{\sigma\lambda \mu n} \binom{\mu n}{(1-\sigma_L)\lambda \mu n}
    \end{displaymath}
    The expected size of $|\Ll_2|$ over the random choices of $x_L$ and $p_L$ is 
    \begin{displaymath}
        \Ex{|\Ll_2|} \le \frac{W_L}{p_L}.
    \end{displaymath}
    Hence,
    \begin{displaymath}
        (\log_2(\Ex{|\Ll_2|}))/n \le \mu(h(\lambda\sigma) + h(\lambda(1-\sigma_L)) - \lambda + \eps_L)
    \end{displaymath}
    We use Inequality~\ref{ineq:hsiglam} and have $h((1-\sigma)\lambda),h(\sigma\lambda) \le h(1/4) + 0.004$. Hence we
    can roughly bound:
    \begin{displaymath}
        (\log_2(\Ex{|\Ll_2|}))/n \le \mu (2h(1/4) - \lambda) + 0.02 \cdot \mu 
    \end{displaymath}
\end{proof}
By symmetry, the same bound holds for $|\Rr_2|$:
\begin{claim}
    \label{sizerr2}
    $\Ex{|\Rr_2|}  \le \Os(2^{\mu n (2h(1/4) - \lambda) + 0.02 \mu n })$.
\end{claim}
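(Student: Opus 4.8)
The plan is to mirror the proof of Claim~\ref{sizell2} with the "left" data replaced by the corresponding "right" data; this case is in fact slightly cleaner than the $\Rr_1$ case, since the balancing parameter $\beta$ and the quantities $|w(2^L)|,|w(2^R)|$ play no role in the count of $\Rr_2$ — they only affect the sizes of $\Ll_1$ and $\Rr_1$.

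First I would count the number $W_R$ of distinct set-unions $S_5\cup S_6$ that can appear as an element of $\Rr_2$ according to its definition in~\eqref{eq:r2}: $S_5$ ranges over $\binom{M}{(1-\sigma)\lambda\mu n}$ and $S_6$ over $\binom{M_R}{(1-\sigma_R)\lambda\mu n}$, so $W_R = \binom{\mu n}{(1-\sigma)\lambda\mu n}\binom{\mu n}{(1-\sigma_R)\lambda\mu n}$. Next I would observe that the congruence filter $w(S_5\cup S_6)\equiv_{p_R} x_R$ retains each such candidate with probability exactly $1/p_R$ over the uniformly random choice of $x_R\in\mathbb{Z}_{p_R}$, and this choice is independent of everything defining $W_R$; hence by linearity of expectation $\Ex{|\Rr_2|}\le W_R/p_R$.

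Then I would plug in $p_R\in\Theta(2^{(\lambda-\eps_R)\mu n})$ and the upper bound $\binom{d}{\alpha d}\le 2^{dh(\alpha)}$ from~\eqref{eq:binent}, giving
\[
  \frac{\log_2\Ex{|\Rr_2|}}{n}\ \le\ \mu\Big(h((1-\sigma)\lambda)+h((1-\sigma_R)\lambda)-\lambda+\eps_R\Big).
\]
Finally, since $\sigma,\sigma_R\in[0.495,0.505]$ by~\eqref{ineq:sigma}, inequality~\eqref{ineq:hsiglam} gives $h((1-\sigma)\lambda),h((1-\sigma_R)\lambda)\le h(1/4)+0.004$, and $\eps_R\le\eps_0\ll 0.01$; thus the right-hand side is at most $\mu(2h(1/4)-\lambda)+0.02\mu$, which after exponentiating and absorbing polynomial factors into $\Os(\cdot)$ is exactly the claimed bound.

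I do not expect a genuine obstacle here: the content is symmetric to Claim~\ref{sizell2}. The only thing to be careful about is bookkeeping — using $p_R$ rather than $p_L$, the index $\sigma_R$ rather than $\sigma_L$, and the additive term $\eps_R$ rather than $\eps_L$ — but since $\eps,\eps_L,\eps_R\le\eps_0$, the same $0.02\mu$ slack absorbs the rounding error in every case, just as in Claims~\ref{sizell1}–\ref{sizell2}.
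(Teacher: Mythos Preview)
Your proposal is correct and is exactly the symmetry argument the paper invokes: the paper's own proof is the single line ``By symmetry, the same bound holds for $|\Rr_2|$'', and you have simply written out that symmetry in full, swapping $(p_L,\sigma_L,\eps_L)$ for $(p_R,\sigma_R,\eps_R)$ in the computation of Claim~\ref{sizell2}. The only cosmetic point is that inequality~\eqref{ineq:hsiglam} is literally stated for $\sigma$ only, but as you note, the bound $\sigma_R\in[0.495,0.505]$ from~\eqref{ineq:sigma} gives the identical estimate $h((1-\sigma_R)\lambda)\le h(1/4)+0.004$.
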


As mentioned in Subsection~\ref{sec:algo}, the subroutine  $\mathtt{WeightedOV}$ uses $\Os(|\Ll_1| + |\Ll_2| + |\Rr_1| + |\Rr_2| + 2^{|M|})$ space. By the above claims, we see that this is at most 

\begin{displaymath}
        \Os\left(
            2^{(1/2 - \mu(3/2 + \lambda - h(1/4)))n+ 0.02\mu n} + 2^{\mu n(2h(1/4) - \lambda) + 0.02\mu n} + 2^{\mu n}
        \right),
\end{displaymath}

as promised.

\begin{remark}\label{remark}
    The constant $0.02$ is based on our choice of $\eps_0$ and $\lambda_0$. When
    $\eps \rightarrow 0$ and $\lambda_0 \rightarrow 1/2$ it goes to $0$.  With
    more complicated inequalities and a tighter choice of parameters we were
    able to get $\Os(2^{0.249936n})$ space usage. We decided to skip the details
    for the simplicity of the presentation.
\end{remark}

\begin{remark}
    In this section we showed that expected sizes of $\Ll_1,\Ll_2,\Rr_1,\Rr_2$
    are bounded by $2^{(0.25-\delta)n}$ for some constant $\delta > 0$. With a
    standard Markov's inequality and union bound one can show that with
    $\Oms(1)$ probability it holds that sizes of $\Ll_1,\Ll_2,\Rr_1,\Rr_2$ are
    bounded by $2^{(0.25-\delta)n}$ for some constant $\delta > 0$.
\end{remark}

\subsection{Runtime of Algorithm~\ref{main-alg}} \label{mainalg:time}
Now, we prove that the runtime of Algorithm~\ref{main-alg} is $\Os(2^{n/2})$.
By Lemma~\ref{lem:sumsetenum}, the total runtime of all queries to \inc.\nextel is
$\Os(|\Ll_1||\Ll_2|)$, and the total runtime of all the
queries to \dec.\nextel is $\Os(|\Rr_1||\Rr_2|)$. This is upper bounded by
$\Os(2^{n/2})$ by the analysis of Subsection~\ref{subsec:space}.

The main bottleneck of the algorithm comes from all the calls to $\mathtt{OV}$ subroutine at Line~\ref{lin:ov} of Algorithm~\ref{main-weighted-OV}. To facilitate the analysis, we define sets $\mathcal{A},\mathcal{B}$ that represent the total input to the
$\mathtt{OV}$ subroutine: For every $a \in \nat$, such that $a \equiv_{p_L} x$ and
each $X \in \Ll(a)$, add the pair $(X \cap M,a)$ to $\mathcal{A}$ (without
repetitions). Similarly, for each $Y \in \Rr(t-a)$, add the pair $(Y \cap M,t-a)$ to
$\mathcal{B}$. Hence the total input for $\mathtt{OV}$ generated by is:
\begin{align*}
    \mathcal{A} &:= \biggl \{
        (X, a) :  X \in \binom{M}{\sigma\lambda\mu n} \text{ and } 
        a - w(X) \in w(2^{L\cup M_L}) \text{ and } a \equiv_{p_L} x
    \biggr\},\\
    \mathcal{B} &:= \biggl \{
        (Y, b) :  Y \in \binom{M}{(1-\sigma)\lambda\mu n} \text{ and }
        b - w(Y) \in w(2^{R\cup M_R}) \text{ and } b \equiv_{p_R} t-x
    \biggr\}
    .
\end{align*}
Now, let us calculate the expected size of $\mathcal{A}$. The number of possibilities of selecting possible elements in $\mathcal{A}$ is the number of
possibilities of selecting $X$ from $M$ and $a$ from $w(2^{L\cup M_L})$. Since the probability that $a \equiv_{p_L} x$ is $1/p_L$, we obtain
\[
    \Ex{|\mathcal{A}|} \le  \binom{M}{\sigma\lambda\mu n} |w(2^{L \cup M_L})| / p_L 
    .
\]
Similarly, the probability that $b \equiv_{p_R} t-x$ is $1/p_R$. To see this
recall that $x$ is chosen uniformly at random from $Z_{p_L}$, but since $p_L$ is
a multiple of $p_R$, integer $x \mod p_R$ is also uniformly distributed in $\mathbb{Z}_{p_R}$.

\[
    \Ex{|\mathcal{B}|} \le  \binom{M}{(1-\sigma)\lambda\mu n} |w(2^{R \cup M_R})| / p_R
    .
\]

Recall that $M_L$ is an $\eps_L$-mixer, hence $|w(2^{L\cup M_L})| \le |w(2^{|L|})|
2^{(1-\eps_L)\mu n}$, and similarly $M_R$ is an $\eps_R$-mixer. Hence:
\begin{align*}
    \log_2\left(\Ex{|\mathcal{A}|}\right) &\le |L| + (1-\eps_L)\mu n +
    h(\sigma\lambda)\mu n  - (\lambda - \eps_L)\mu n &\\
    &= \left( \frac{1-3\mu-\beta\mu}{2} + \mu- \lambda\mu +h(\sigma\lambda)\mu  \right) n, &\\
    &= \left(\frac{1}{2} - \mu \left( \frac{1}{2}+\lambda + \beta/2 - h(\sigma\lambda)  \right)  \right)  n,\\
  \intertext{and similarly:}
    \log_2\left(\Ex{|\mathcal{B}|}\right) &\le  |R| + (1-\eps_R)\mu n +
    h((1-\sigma)\lambda)\mu n  - (\lambda-\eps_R)\mu n\\
    &=\left( \frac{1-3\mu + \beta\mu}{2} + \mu -\lambda\mu + h((1-\sigma)\lambda)\mu  \right)n \\
    &=\left( \frac{1}{2} - \mu \left( \frac{1}{2} + \lambda -\beta /2 - h((1-\sigma)\lambda) \right)  \right)n.
\end{align*}
Now it becomes clear that we have chosen the balancing parameter $\beta$ in the sizes $|L|,|R|$ to match
the sizes of $\mathcal{A},\mathcal{B}$: Observe that
\[
	\beta/2-h(\sigma \lambda) = -\frac{h(\sigma\lambda)+h((1-\sigma)\lambda)}{2} = -\beta/2-h((1-\sigma)\lambda),
\]
and thus we obtain that

\[
 \log_2\left(\Ex{|\mathcal{A}|}\right),  \log_2\left(\Ex{|\mathcal{B}|}\right) \leq 
    \left(\frac{1}{2} - \mu\left(\frac{1}{2} + \lambda -
        \frac{h(\sigma\lambda)+h((1-\sigma)\lambda)}{2}\right) \right)n.
\]

By the concavity of binary entropy function (see~\eqref{ineq:convexity}), we know that $h(\sigma\lambda) +
h((1-\sigma)\lambda) \le 2h(\lambda/2)$. Hence:
\begin{equation}\label{eq:setsbounds}
    \Ex{|\mathcal{A}|} ,\Ex{|\mathcal{B}|} \le \Os(2^{n/2 - \mu n(1/2 + \lambda -
    h(\lambda/2))}).
\end{equation}

The $\mathtt{OV}$ subroutine (see Theorem~\ref{ov-lemma}) takes
$\mathcal{A}$ and $\mathcal{B}$ as an input with dimension $d = \mu n$. Note
that the condition $\lambda \in [0.4,0.5]$ in Theorem~\ref{ov-lemma} is satisfied by
the assumption in the Lemma~\ref{main-lem} and $\sigma \in [0.4,0.6]$ is
satisfied because for our choice of parameters $\sigma \in [0.495,0.505]$ (see~
\eqref{ineq:sigma}). 
Since the run time of the $\mathtt{OV}$ subroutine is linear in the input size, all calls to the $\mathtt{OV}$ algorithms jointly take the following total run time:

\begin{align*}
    \Os\left( 
        \left(|\mathcal{A}| + |\mathcal{B}|\right)
        2^{\mu n (1/2 + \lambda - h(\lambda/2))}
    \right).
\end{align*}

Thus the algorithm runs in $\Os(2^{n/2})$ time by~\eqref{eq:setsbounds}.

\begin{remark}
    Observe that the Inequality in Lemma~\ref{ovtime-inequality} is tight when $\lambda = 1/2$, which is the
    worst case for the algorithm. In particular any $\Os(2^{\delta d})$ improvement
    to our OV algorithm in the case $\lambda = 1/2$ and $\sigma=1/2$ for some $\delta >
    0$ would give an  $\Os(2^{(1/2 - \delta')n})$ time algorithm for Subset Sum for
    some $\delta' > 0$.
\end{remark}

\section{Reducing From Subset Sum to Exact Node Weighted $P_4$}
\label{sec:p4}

In this section we discuss a new connection between graph
problems and Subset Sum. Recall that in the Exact Node Weighted $P_4$ problem we
are given a node weighted graph $G=(V,E)$, and want to find $4$ vertices that form a path
and their total weight is equal $0$. 
We show that a fast algorithm for this problem would resolve Open Question~\ref{qmain}:

\weightedpfour*

\begin{proof}
	We choose some
    constants $\eps_0>0$ and $\lambda_0<1/2$. By Theorem~\ref{win-win} and
    Theorem~\ref{small-lambda}, we can solve Subset Sum in $\Os(2^{(0.5 -
    \delta')n})$ time for some $\delta'(\eps_0,\delta_0)$. Hence, from now on we assume
    that $\eps < \eps_0$ and $\lambda > \lambda_0$.

    We use the construction from Lemma~\ref{main-lem}. This Lemma,
    gives us the algorithm that constructs $4$ families of sets:
    $\Ll_1,\Ll_2,\Rr_1,\Rr_2 \subseteq 2^{[n]}$, with the following properties
    (see Lemma~\ref{lem:detect} and Lemma~\ref{main-lem}):
    \begin{itemize}
        \item If an answer to Subset Sum is positive, then with $\Oms(1)$ probability, there exist $S_1 \in \Ll_1, S_2 \in \Ll_2,
            S_3 \in \Rr_2, S_4 \in \Rr_1$, that are pairwise disjoint and $w(S_1\cup S_2
            \cup S_3 \cup S_4 ) = t$ (otherwise there is no such quadruple).
        \item For every $S_1 \in \Ll_1$, $S_2 \in \Ll_2$, $S_3 \in \Rr_2$, $S_4
            \in \Rr_1$ we have that $S_1 \cap S_3 = \emptyset$, $S_2 \cap S_4 =
            \emptyset$ and $S_1 \cap S_4 = \emptyset$.
        \item The expected size of these lists is bounded by $\Os( 
            2^{(1/2 - \mu(3/2 + \lambda - h(1/4)))n+ \mu\rho n} +
        2^{\mu n(2h(1/4) - \lambda) + \mu \rho n})$, for some $\rho =
        \rho(\lambda,\eps)$ that goes to $0$ when $\eps \rightarrow 0$ and
        $\lambda \rightarrow 1/2$ (see Remark~\ref{remark}).
    \end{itemize}
    Define constant
    $\kappa(\eps_0,\lambda_0,\mu)$ that goes to $0$ when $\eps_0 \rightarrow 0$
    and $\lambda_0 \rightarrow 1/2$ such that the expected size of the lists is:
    \begin{displaymath}
        \max \left\{ \Ex{|\Ll_1|}, \Ex{|\Ll_2|}, \Ex{|\Rr_1|}, \Ex{|\Rr_2|}
        \right\} \le \Os( 
                2^{(1/2 - \mu(2 - h(1/4)))n+ \kappa n} +
            2^{\mu n(2h(1/4) - 1/2) + \kappa n})      
            .
    \end{displaymath}
    Next we select $\mu := 1/(3 + 2h(1/4)) \approx 0.2164$ to minimize the
    expected size of $\Ll_1,\Ll_2,\Rr_2,\Rr_1$. We have that:
    \begin{displaymath}
        \max \left\{ \Ex{|\Ll_1|}, \Ex{|\Ll_2|}, \Ex{|\Rr_1|},  \Ex{|\Rr_2|}
        \right\} \le \Os(2^{0.2428432n + \kappa n})
        .
    \end{displaymath}
    Now, we proceed with the reduction to Exact Node Weighted $P_4$. First, we
    construct a graph.
    Let $M := 100 \cdot w([n])$ be sufficiently large integer. For every
    set $A \in \Ll_1$ create a vertex $v^1_A$ of weight $w(v^1_A) = M +
    w(A)$, for every set $B \in \Ll_2$ create a vertex $v^2_B$ of weight
    $2M + w(B)$, for every set $C \in \Rr_2$ create a vertex $v^3_C$ of
    weight $4M + w(C)$. Finally, for every set $D \in \Rr_1$ create a vertex
    $v^4_D$ of weight $-7M - t + w(D)$.

    Next for every $i \in \{1,2,3\}$ add an edge
    between vertices $v^i_X$ and $v^{i+1}_Y$ iff $X \cap Y = \emptyset$. This
    concludes the construction. At the end we run our
    hypothetical oracle to an algorithm for Exact Node Weighted $P_4$ and return
    true if the oracle detects a simple path of $4$ vertices with total weight
    $0$. This concludes description of the reduction.

    Now we analyse the correctness. If there exist $S_1 \in \Ll_1, S_2 \in \Ll_2,
    S_3 \in \Rr_2$ and $S_4 \in \Rr_1$ that are disjoint and sum to $t$, then vertices
    $v^1_{S_1},\; v^2_{S_2},\; v^3_{S_3}, \;v^4_{S_4}$ form a path and their
    sum is equal to $0$.
    For the other direction, suppose there exist $4$ vertices that
    form a path and their sum is equal to $0$. Because their sum is
    equal to $0$ and integer $M$ is larger than the rest of the weight, these
    vertices from $4$ distinct groups, i.e. vertices $v^1_{A},v^2_{B},v^3_{C},v^4_{D}$ for some sets $A,B,C,D$. Moreover vertices
    $v^1_A,v^2_B$ have to be connected (since vertices in group 1 are connected
    only to the vertices in group 2), hence $A \cap B = \emptyset$.
    Analogously, it can be checked that the rest of
    the sets $B,C,D$ are disjoint. Observe that $w(v^1_A) + w(v^2_B) + w(v^3_C)
    +w(v^4_D) = 0$ hence $w(A) + w(B) + w(C) + w(D) = t$. By the correctness of
    construction of $\Ll_1,\Ll_2,\Rr_2,\Rr_1$ we conclude that the answer to the
    Subset Sum instance is positive.

    Finally we analyse the runtime of our reduction.  The number of vertices is
    clearly $|V| = \Os(|\Ll_1| + |\Ll_2| + |\Rr_2| + |\Rr_1|) \le
    \Os(2^{0.2428432n + \kappa n})$ The time needed to construct this graph is
    $|E| = \Os(|\Ll_1||\Ll_2| + |\Ll_2||\Rr_2| + |\Ll_1||\Ll_2|) \le
    \Os(2^{0.48569n + 2\kappa n})$ (this is also an upper bound on number of
    edges).  

    Hence if we would have an algorithm that solves Exact Node Weighted $P_4$ in
    time $\Oh(|V|^{2.05894})$, then Subset Sum could be solved in randomized time
    $\Os(2^{2.05894(0.2428432n + \kappa n)}) \le \Os(2^{(0.4999995 + 2.06\kappa)n})$. Note
    that $\kappa(\eps_0,\lambda_0)$ is some constant that can be selected to be
    arbitrarily close to $0$.
\end{proof}

\section{Orthogonal Vectors via Representative Sets}
\label{sec:ov}

In this section we present and discuss our algorithm for Orthogonal Vectors.
As discussed in the introduction it should be noted that the proof strategy is
similar to the one from~\cite{fomin-jacm2016} (which is heavily inspired on
Bollob\'as's Theorem~\cite{bollobas1965generalized}), but we obtain improvements
that are crucial for the main result of this paper. We compare our methods with
existing literature at the end of this section.

\begin{theorem}[OV-algorithm, Generalization of Theorem~\ref{ov-simplified}]
	\label{ov-lemma}
    For any $\lambda \in [0.4,0.5]$ and $\sigma \in [0.4,0.6]$, there is a Monte-Carlo algorithm that is
    given $\mathcal{A} \subseteq \binom{d}{\sigma \lambda d}$ and $\mathcal{B}
    \subseteq \binom{d}{(1-\sigma)\lambda d}$, detects if there exist $A \in
    \mathcal{A}$ and $B \in \mathcal{B}$ with $A \cap B = \emptyset$ in time
    \begin{displaymath}
        \Ot\left( 
            \left( |\mathcal{A}| + |\mathcal{B}| \right)2^{d(1/2 + \lambda - h(\lambda/2))}
        \right)
    \end{displaymath}
    and space $\Ot(|\mathcal{A}| + |\mathcal{B}| + 2^d)$.
\end{theorem}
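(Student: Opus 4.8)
The plan is to detect a disjoint pair $(A,B)\in\mathcal{A}\times\mathcal{B}$ via the \emph{representative families} framework of Fomin, Lokshtanov, Panolan and Saurabh, which in turn is inspired by Bollob\'as's set-pair inequality. Recall that a subfamily $\widehat{\mathcal A}\subseteq\mathcal A$ is \emph{$q$-representative} if whenever some $A\in\mathcal A$ is disjoint from a set $Y$ with $|Y|\leq q$, then there is also some $\widehat A\in\widehat{\mathcal A}$ disjoint from $Y$. Here we only need $q=(1-\sigma)\lambda d$, the support size of sets in $\mathcal B$. So first I would compute a $q$-representative subfamily $\widehat{\mathcal A}$ of $\mathcal A$; then the instance is a YES-instance iff some $B\in\mathcal B$ is disjoint from some $\widehat A\in\widehat{\mathcal A}$, which can be checked by brute force in $\Ot(|\widehat{\mathcal A}|\cdot|\mathcal B|)$ time once $\widehat{\mathcal A}$ is small.

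The heart of the matter is therefore twofold: (1) a good bound on the size of a $q$-representative family, and (2) a procedure computing one within the target running time and, crucially, within space $\Ot(|\mathcal A|+|\mathcal B|+2^d)$. For (1), the Bollob\'as-type bound gives a $q$-representative family of size at most $\binom{\sigma\lambda d + (1-\sigma)\lambda d}{\sigma\lambda d}=\binom{\lambda d}{\sigma\lambda d}\leq 2^{\lambda d\, h(\sigma)}$; since $\sigma\in[0.4,0.6]$ this is roughly $2^{\lambda d}$, but to hit the exponent $d(1/2+\lambda-h(\lambda/2))$ we will actually want the sharper bound $\binom{d}{\lambda d/2}\big/\big(\text{something}\big)$-type accounting — i.e.\ the representative family has size about $\binom{d}{\sigma\lambda d}/\binom{d-\sigma\lambda d}{(1-\sigma)\lambda d}$, which by the identity $\binom{d}{\sigma\lambda d}\binom{d}{(1-\sigma)\lambda d}\big/\binom{d}{\lambda d}$-style manipulation and $h((1-\sigma)\lambda)+h(\sigma\lambda)\leq 2h(\lambda/2)$ (Inequality~\eqref{ineq:convexity}) yields $|\mathcal A|\cdot 2^{d(1/2+\lambda-h(\lambda/2))}$ after multiplying through by $|\mathcal B|$. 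I would set up the computation to interleave the construction of $\widehat{\mathcal A}$ with scanning $\mathcal B$, so that we never store more than $\Ot(|\mathcal A|+|\mathcal B|+2^d)$ sets at once.

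For (2), the standard deterministic representative-set algorithms use matroid machinery or are too space-heavy; instead I would use the randomized polynomial-identity / random-subspace approach: associate to each $A\in\mathcal A$ a random vector indexed by $(1-\sigma)\lambda d$-subsets of the complement $[d]\setminus A$ (equivalently, evaluate a suitable multilinear polynomial at a random point over a large field), and greedily keep $A$ in $\widehat{\mathcal A}$ only if its vector is not in the span of those kept so far. The rank of the relevant space is $\binom{d}{(1-\sigma)\lambda d}$-bounded after the right substitution, and a Gaussian-elimination sweep gives the size bound; choosing the field large enough makes the failure probability (missing a genuine disjoint pair) negligible. Space is dominated by the current basis, of size at most the rank, times $\polylog$; using the estimate $\binom{d}{(1-\sigma)\lambda d}\leq 2^d$ keeps this within $\Ot(|\mathcal A|+2^d)$. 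The main obstacle I anticipate is precisely the \emph{space} accounting: the naive representative-set computation wants to hold a $\binom{d}{\lambda d/2}$-ish basis and the whole of $\mathcal A$ simultaneously, and a priori $\binom{d}{\lambda d/2}$ can exceed $2^d$ only mildly but $|\mathcal A|$ times a basis vector can blow past the budget — so I would need to (i) partition $\mathcal A$ into blocks, compute a representative family blockwise, and recurse/merge so that the intermediate bases never co-exist with a large chunk of $\mathcal A$; and (ii) argue, via the `$1$-cover of low sparsity of the Disjointness Matrix' referred to in the introduction, that the random-vector dimension can be taken small enough that each basis vector fits in $\Ot(1)$ words relative to the stated bounds. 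Verifying that this blocked/streamed version still runs in $\Ot((|\mathcal A|+|\mathcal B|)\,2^{d(1/2+\lambda-h(\lambda/2))})$ time — i.e.\ that the per-block overhead telescopes — is the delicate calculation I would carry out carefully.
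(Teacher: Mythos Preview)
Your plan has a genuine gap at the brute-force step. After computing a $q$-representative subfamily $\widehat{\mathcal A}$ you propose to check all of $\widehat{\mathcal A}\times\mathcal B$, which costs $|\widehat{\mathcal A}|\cdot|\mathcal B|$. But the Bollob\'as bound $|\widehat{\mathcal A}|\le\binom{p+q}{p}=\binom{\lambda d}{\sigma\lambda d}$ is \emph{tight} (take $\mathcal A=\binom{[\lambda d]}{\sigma\lambda d}$), so in the benchmark case $\sigma=\lambda=1/2$ you are stuck with $|\widehat{\mathcal A}|\approx\binom{d/2}{d/4}\approx 2^{d/2}$, whereas the target exponent is $d(1-h(1/4))\approx 0.189d$. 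No amount of interleaving or blocking removes the $|\widehat{\mathcal A}|\cdot|\mathcal B|$ bottleneck. The ``sharper bound'' $\binom{d}{\sigma\lambda d}/\binom{d-\sigma\lambda d}{(1-\sigma)\lambda d}$ you sketch is not a valid upper bound on the size of a $q$-representative family; it is below the Bollob\'as lower bound in the symmetric case.

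The paper does something structurally different: it never computes a representative family at all. Instead it builds a \emph{$1$-cover} $(X_1,Y_1),\ldots,(X_z,Y_z)$ of the disjointness matrix by sampling sets $S\in\binom{[d]}{x}$ for a carefully chosen $x$ and letting $X_i=\binom{S_i}{p}$, $Y_i=\binom{[d]\setminus S_i}{q}$. The relevant parameter is not $z$ but the \emph{sparsity} $\Psi$, the average number of rectangles a fixed row or column lies in, which they show is $\Ot(2^{d(1/2+\lambda-h(\lambda/2))})$. The algorithm then processes each $A\in\mathcal A$ by marking all rectangle indices $i$ with $A\in X_i$ in a table, and processes each $B\in\mathcal B$ by looking up all $i$ with $B\in Y_i$; this is a hash-join, not a pairwise scan, and costs $(|\mathcal A|+|\mathcal B|)\Psi$ rather than $|\widehat{\mathcal A}|\cdot|\mathcal B|$. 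The space budget is met by first randomly partitioning $[d]$ into $c=20$ equal blocks and applying the construction on each block, so the table has $z^c\le 2^d$ entries. The crucial refinement over Fomin et al.\ (which the paper stresses) is sampling $S$ from a fixed-size slice rather than uniformly from $2^{[d]}$; the uniform version gives sparsity only $2^{3d/4-d/4}$ in the symmetric case, which would not suffice.
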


We can assume that
$\lambda \le 0.5$ by a subset complementation trick. The bound
$\sigma \in [0.4,0.6]$ is an artifact of technical methods we used in the proof
of Lemma~\ref{ovtime-inequality}.  
In the proof of this lemma the parameters $\lambda$ and $\sigma$ lost their meaning from Section~\ref{main-proof}. Hence, to simplify, we let $p := \sigma \lambda n$ and $q :=
(1-\sigma)\lambda n$, and let $\mathcal{A} \subseteq
\binom{[d]}{p}$ and $\mathcal{B} \subseteq \binom{[d]}{q}$.
We use the following standard definitions from communication complexity (see for example~\cite{rao-yehudayoff-commbook}):
\begin{definition}[$(p,q,d)$-Disjointness Matrix]
	For integers $p,q,d$ the Disjointness matrix $\disj_{p,q,d}$ has its rows indexed by $\binom{[d]}{p}$ and its columns indexed by $\binom{[d]}{q}$. For $A \in \binom{[d]}{p}$ and $B \in \binom{[d]}{q}$ we define
	\[
	 \disj_{p,q,d}[A,B] =
	 \begin{cases}
	 	 1 & \text{if } A \cap B = \emptyset,\\ 
	 	 0 & \text{otherwise. } 
	 \end{cases}
	 \] 
\end{definition}

\begin{definition}[Monochromatic Rectangle, $1$-Cover]
	A \emph{monochromatic rectangle} of a matrix $M$ is subset $X$ of rows and
    subset $Y$ of the columns such that $M[i,j]=M[i',j']$ for every $i,i'\in X$
    and $j, j'\in Y$. A family of monochromatic rectangles
    $\mathcal{M} = (X_1,Y_1),\ldots,(X_z,Y_z)$ is called a \emph{$1$-cover} if for every $i,j$
    such that $M[i,j]=1$, there exists $k \in [z]$, such that $i \in X_k$ and $j
    \in Y_k$.
\end{definition}

A natural goal in the field of communication complexity is to find `good'
$1$-covers. The natural parameter that quantifies such `goodness' is $z$ (intuitively the smaller $z$ the
better a $1$-cover we have). The parameter $z$ is sometimes called the Boolean rank\footnote{The name `Boolean rank' is used because a $1$-cover of $M$ with $z$ rectangles is equivalent to a factorization $M=L\cdot R$ over the Boolean semi-ring of rank $z$.} and it is known to be equal to
$2^{\mathtt{nc}(M)}$ where $\mathtt{nc}(M)$ is the `non-deterministic
communication complexity' of $M$ (see e.g.~\cite{rao-yehudayoff-commbook}).

Such $1$-covers of the Disjointness matrix can be used in algorithms for the Orthogonal Vectors problem: An orthogonal pair is a $1$ in the submatrix of the Disjointness induced by the rows and columns from the families $\Aa$ and $\Bb$, and we can search for such a $1$ via searching for the associated monochromatic rectangle that covers it (see Lemma~\ref{ov-by-sparsity} for a related approach).
For the case that $p=q$, it is well known that $\disj_{p,q,d}$ admits a
$1$-cover with $\Oh(2^{2p}p \ln d)$ rectangles~\cite[Claim 1.37]{rao-yehudayoff-commbook}.
When applied na\"ively, this $1$-cover would imply an $\Ot((|\Aa|+|\Bb|)2^{d/2})$
time algorithm for the setting of Theorem~\ref{ov-simplified} with $p=q=d/4$. 

In order to get a faster algorithm we introduce the following new parameter of a $1$-cover:

\begin{definition}[Sparsity]
    The \emph{sparsity} of a $1$-cover $\mathcal{M} =
    (X_1,Y_1),\ldots,(X_z,Y_z)$ of an $n \times m$ matrix is defined as $\sum_i
    |X_i|/n + \sum_i |Y_i|/m$.
\end{definition}

A $1$-cover of sparsity $\Psi$ of a matrix can be understood as a factorization
of $M=L\cdot R$ over the Boolean semi-ring such that the average number of $1$'s in a
row $L$ plus the average number of $1$'s in a column of $R$ is at most $\Psi$.
Our notion of sparsity is related to the degree of the data structure called
\emph{$n$-$p$-$q$-separating collection}~\cite{fomin-jacm2016}.
For a further discussion about sparse factorizations see~\cite[Section 5.1]{nederlof-ranksurvey})

We present the algorithmic usefulness of the notion of the sparsity of $1$-cover with the following
statement.

\begin{lemma}[Orthogonal Vectors Parameterized by the Sparsity]
    \label{ov-by-sparsity}

    For any constant integer $c$ and integers $p,q,d$
    such that $c$ divides $p,q,d$, there is an algorithm that takes as an input
    a $1$-cover $\mathcal{M}$ of $\disj_{p/c,q/c,d/c}$ of sparsity
    $\Psi$ and two set families $\Aa \subseteq \binom{[d]}{p}$, $\Bb \subseteq
    \binom{[d]}{q}$ with the following properties:
    It outputs a pair $A \in \mathcal{A}$ and $B \in \mathcal{B}$ such
    that $A \cap B = \emptyset$ with constant non-zero probability if such a
    pair exists. Moreover, it uses $\Ot((|\Aa|+|\Bb|)\Psi^c+2^{2d/c})$ time and
    $\Ot(2^{2d/c}+z^c)$ space, where $z$ is the number of rectangles of $\mathcal{M}$.
\end{lemma}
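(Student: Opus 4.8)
The plan is to reduce the problem on $\Aa \subseteq \binom{[d]}{p}$ and $\Bb \subseteq \binom{[d]}{q}$ to $c$ simultaneous instances of the "one-block" problem on the smaller universe $[d/c]$, one for each of the $c$ coordinate-blocks in a random partition of $[d]$, and then combine the $c$ given $1$-cover rectangles (one per block) into a product rectangle. First I would fix an arbitrary partition of $[d]$ into $c$ blocks $B_1, \dots, B_c$ each of size $d/c$; by a standard averaging/randomization argument (permuting coordinates, as in~\cite{fomin-jacm2016}), if there is a disjoint pair $A \in \Aa$, $B \in \Bb$ then with probability $\Omega^\star(1)$ this pair is "balanced" with respect to the partition, meaning $|A \cap B_i| = p/c$ and $|B \cap B_i| = q/c$ for every $i \in [c]$. (Here I would invoke Lemma~\ref{lem:conc} applied blockwise, paying the $\Omega^\star(1)$ factor; repeating $\poly$ times boosts the success probability, which is why the algorithm is Monte Carlo.) Conditioned on this event, for each block $i$ we have $A \cap B_i \in \binom{B_i}{p/c}$ and $B \cap B_i \in \binom{B_i}{q/c}$, and these restrictions are disjoint, so the entry $\disj_{p/c,q/c,d/c}[A \cap B_i,\, B\cap B_i]$ equals $1$. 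Hence there is some rectangle $(X^{(i)}, Y^{(i)})$ in $\mathcal{M}$ with $A\cap B_i \in X^{(i)}$ and $B \cap B_i \in Y^{(i)}$.

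Next I would set up the search over the $c$-tuples of rectangles. A tuple $(k_1,\dots,k_c) \in [z]^c$ defines a product monochromatic rectangle: its "row side" is the set of all $A \in \binom{[d]}{p}$ whose restriction to $B_i$ lies in $X_{k_i}$ for every $i$, and similarly for the "column side" with the $Y_{k_i}$'s; any row-side set and column-side set from the same tuple are disjoint because they are disjoint block-by-block. For each such tuple I would collect $\Aa_{(k_1,\dots,k_c)} := \{A \in \Aa : A\cap B_i \in X_{k_i}\ \forall i\}$ and $\Bb_{(k_1,\dots,k_c)} := \{B \in \Bb : B\cap B_i \in Y_{k_i}\ \forall i\}$; if both are nonempty for some tuple, output any pair, one from each — it is automatically orthogonal. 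By the previous paragraph, the tuple $(k_1,\dots,k_c)$ coming from the rectangles that cover the balanced solution pair makes both sets nonempty, so correctness holds (up to the $\Omega^\star(1)$ success probability, boosted by repetition).

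For the resource bounds I would organize the computation by \emph{bucketing each element by its block-restrictions}: precompute, for each block $i$ and each small subset $T \in \binom{B_i}{p/c}$, the list of rectangle-indices $k$ with $T \in X_k$ (and analogously for the $Y$ side). The total size of these lists, summed over $i$ and over $T$, is (by definition of sparsity, since there are $\binom{d/c}{p/c}$ rows) at most $\Psi \cdot \binom{d/c}{p/c} \le \Psi \cdot 2^{d/c}$ per side per block — but I would rather avoid materializing these and instead, for each $A \in \Aa$, iterate over the (at most $\Psi^c$, on average across all $A$, by a product-of-sparsities argument) tuples $(k_1,\dots,k_c)$ that are compatible with $A$, and insert $A$ into a hash table keyed by the tuple; same for $\Bb$ on the $Y$ side; then scan the table for a key hit by both sides. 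The work is thus $\Ot((|\Aa|+|\Bb|)\Psi^c)$ for the tuple enumeration and hashing, plus $\Ot(2^{2d/c})$ for the per-block preprocessing of which small subsets belong to which $X_k$ / $Y_k$ (there are $\binom{d/c}{p/c}\le 2^{d/c}$ such subsets on each side, and storing for each its list of rectangle indices costs $\Ot(z^c)$ in the worst case but only $\Ot(2^{2d/c})$ in aggregate since each of the $z$ rectangles has its sets enumerated once). The space is $\Ot(2^{2d/c} + z^c)$: the hash table keyed by tuples has at most $\min(z^c, |\Aa|+|\Bb|)$ occupied cells, the per-block lookup tables cost $\Ot(2^{2d/c})$, and $z^c$ bounds the rectangle bookkeeping.

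\textbf{The main obstacle} I expect is the careful amortized accounting that the per-element tuple enumeration costs $\Ot(\Psi^c)$ \emph{on average}, not $\Ot(z^c)$ worst case: one must argue that $\sum_{A\in\Aa}\prod_{i=1}^c |\{k : A\cap B_i \in X_k\}|$ is bounded by $|\Aa|\cdot\Psi^c$ (or close to it) — this uses that the block-restrictions are "spread out" enough, and here the balancing step and possibly an additional random re-partition or a direct use of the sparsity definition (average number of $1$'s per row of $L$) is needed; getting the dependence to be $\Psi^c$ rather than something like $\Psi^c \cdot \poly$ or a sum of products that does not factor cleanly is the delicate point. Everything else — correctness of the product-rectangle cover, the $\Omega^\star(1)$ balancing probability from Lemma~\ref{lem:conc}, and the $2^{2d/c}$ preprocessing/space term — is routine.
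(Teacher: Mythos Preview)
Your overall architecture matches the paper's: randomly partition $[d]$ into $c$ equal blocks, observe that a balanced disjoint pair is covered block-by-block by rectangles of $\mathcal{M}$, and search over $c$-tuples of rectangle indices via a table keyed by $[z]^c$. The gap is precisely where you flag it, and the resolution you sketch does not work.

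Your amortized claim --- that $\sum_{A\in\Aa}\prod_{i}|\{k:A\cap B_i\in X_k\}|$ is bounded by $|\Aa|\cdot\Psi^c$ because the ``block-restrictions are spread out enough'' --- is false for adversarial $\Aa$. Sparsity only says that the list length averaged over \emph{all} of $\binom{[d/c]}{p/c}$ is at most $\Psi$; nothing prevents every $A\in\Aa$ from having block-restrictions that land on the few rows with very long lists. Averaging ``across all $A$'' in $\Aa$ buys you nothing.

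The paper does not amortize. It identifies each block $U_i$ with $[d/c]$ via an \emph{independent uniformly random bijection} (your random coordinate permutation already supplies this, but you only use it for balancing). Then, for the \emph{fixed solution pair} $(A,B)$, conditioned on being balanced, the image of $A\cap U_i$ is a uniformly random element of $\binom{[d/c]}{p/c}$, so $\mathbb{E}\bigl[|L_i(A\cap U_i)|\bigr]\le\Psi$, and similarly for $B$. Markov plus a union bound over the $2c$ lists gives that with probability at least $1/2$ none of the solution's lists exceeds $4c\Psi$. The algorithm now simply \emph{discards} any $A\in\Aa$ (or $B\in\Bb$) whose product $\prod_i|L_i(A\cap U_i)|$ exceeds $(4c\Psi)^c$, capping the work per surviving element at $\Ot(\Psi^c)$ deterministically. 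Correctness is preserved because the solution pair survives this filter with constant probability. So the $\Psi^c$ runtime comes from ``randomly embed the $1$-cover so that the solution is cheap in expectation'' combined with ``throw away anything expensive'', not from amortization over $\Aa$.
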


\begin{algorithm}
	\SetKwInOut{Input}{Input}
	\SetKwInOut{Output}{Output}
    \Input{$\mathcal{A} \subseteq \binom{d}{p}, \mathcal{B} \subseteq
    \binom{d}{q}$ and $1$-cover $\mathcal{M} = (X_1,Y_1),\ldots,(X_z,Y_z)$.}
    \Output{Exist $A \in \mathcal{A}$ and $B \in \mathcal{B}$ such that $A\cap B = \emptyset$?}
    \DontPrintSemicolon
    Randomly partition $[d] = [U_1] \uplus \ldots \uplus [U_c]$  \\
    For every $Q \in \binom{[U_i]}{p/c}$ construct $L_i(Q) := \{ j \in [z] \; : \; Q \in X_j\}$\tcp*{Use $\Os(2^{2d/c})$ space}
    For every $Q \in \binom{[U_i]}{q/c}$ construct $R_i(Q) := \{ j \in [z] \; : \; Q \in Y_j\}$\tcp*{Use $\Os(2^{2d/c})$ space}

    Initialize $T[i_1,\ldots,i_c] = \mathtt{False}$ for every $i_1,\ldots,i_c
    \in [z]$ \tcp*{Use $\Ot(z^c)$ space}
    \ForEach{$A \in \mathcal{A}$}{
        \If{$\prod_{i=1}^c |L_i(A \cap U_i)| \le \Ot(|\Psi|^c)$}{
            \ForEach{$(i_1,\ldots,i_c) \in L_1(A \cap U_1) \times \ldots \times
            L_c(A \cap U_c)$}{
            Set $T[i_1,\ldots,i_c] = \mathtt{True}$
            }
        }
    }
    \ForEach{$B \in \mathcal{B}$}{
        \If{$\prod_{i=1}^c |R_i(B \cap U_i)| \le \Ot(|\Psi|^c)$}{
            \ForEach{$(i_1,\ldots,i_c) \in R_1(B \cap U_1) \times \ldots \times
            R_c(B \cap U_c)$}{
                \If{$T[i_1,\ldots,i_c] = \mathtt{True}$}{
                    \Return $\mathtt{True}$
                }
            }
        }
    }
    \Return $\mathtt{False}$.
    \caption{Pseudocode of Lemma~\ref{ov-by-sparsity}}
    \label{alg:ov-by-sparsity}
\end{algorithm}
\begin{proof}
	Denote the $1$-cover to be $\mathcal{M} = (X_1,Y_1),\ldots,(X_z,Y_z)$.
    Observe, that if $A \cap B = \emptyset$ then it suffices to find $\ell \in
    [z]$ such that $A \in X_\ell$ and $B \in Y_\ell$ since $\mathcal{M}$ forms a
    $1$-cover. In the bird's eye view, the algorithm will find such an $\ell$.
    We need to make sure that the space usage of our algorithm is low.  We
    will use parameter $c$ to achieve that (it is instructive for a reader to
    assume $c = 1$).  Algorithm~\ref{alg:ov-by-sparsity} presents an
    overview of the proof.

    First, randomly partition $[d]$ into blocks $U_1,\ldots,U_c$ with $|U_d|=d/c$.
    By Lemma~\ref{lem:conc}, if we repeat the algorithm $d^{\Oh(c)}$ times with probably
    at least $1/d^{O(c)}$ this partition is good, i.e., for some orthogonal pair
    $A,B$ it holds that $|A \cap U_i|=p/c$ and $|B \cap U_i|=q/c$.

    Next, we map the given factorization $X_1,\ldots,X_z,Y_1,\ldots,Y_z$ of
    $\disj_{p/c,q/c,d/c}$, to the set $U_i$ by unifying $U_i$ with $[d]$ a uniformly
    random permutation.

    Now we present a processing step of the algorithm. For every $i \in [c]$
    we create and store two lists $L$ and $R$. The purpose of these lists is
    to give every element in $A$ and $B$ fast access to corresponding rectangles
    from the $1$-cover that contain it (i.e., given $A$ we need to find all $X_i$, such that $A \in
    X_i$ in $\Ot(z)$ time). Specifically, for every $i \in [c]$ construct:
    \begin{displaymath}
        \text{For every set } Q \in \binom{[d/c]}{p/c} \text{ construct the list } L_i(Q) := \{ j \in [z] \; : \; Q \in X_j \}
        .
    \end{displaymath}
    And similarly for all $i \in [c]$:
    \begin{displaymath}
        \text{For every set } Q \in \binom{[d/c]}{q/c} \text{ construct the list } R_i(Q) := \{ j \in [z] \; : \; B \in Y_j \}
        .
    \end{displaymath}
    
    Because $\binom{d/c}{p/c} \le 2^{d/c}$
    we can construct and store all $L_i(Q)$ and $R_i(Q)$ in
    $\Ot(2^{2d/c} + 2^{d/c}z)$ time and space. Additionally, initialize a table $T[i_1,\ldots,i_c] := \mathtt{False}$ for every
    $i_1,\ldots,i_c \in [c]$. This table will store which sets $X_i,Y_i$ have
    been seen by elements in $\mathcal{A}$. 
    Observe that so far we did not look at the
    input $\mathcal{A}$ and $\mathcal{B}$; we just preprocessed the $1$-cover, so
    the next steps can be computed efficiently.

    Now iterate over every element $A \in \mathcal{A}$ and check if we
    can afford to process it, i.e., if 
    $|L_1(A\cap U_1)| \cdots |L_c(A \cap U_c)| > (4c\Psi)^c$ we simply ignore it
    (later we will prove that for a disjoint pair $A$ and $B$ this situation
    happens with low probability). If indeed we can afford it, then we mark it in
    table $T$: For every $(i_1,\ldots,i_c) \in L_1(A \cap U_1)\times \ldots
    \times L_c(A \cap U_c)$ we mark $T(i_1,\ldots,i_c)$ to be $\mathtt{True}$. Clearly
    this step takes $\Ot(|\mathcal{A}| \Psi^c)$ time.

	Next, we treat $\mathcal{B}$ in a similar way: We iterate over every element $B \in \mathcal{B}$ and check if 
    $|R_1(B\cap U_1)| \cdots |R_c(B \cap U_c)| \leq (4c\Psi)^c$. If so, we iterate over every $(i_1,\ldots,i_c) \in R_1(B \cap U_1)\times \ldots
    \times R_c(B \cap U_c)$ and check if $T(i_1,\ldots,i_c) = \mathtt{True}$. If this
    happens, then it means there exists $A \in \mathcal{A}$ that is
    orthogonal to the current $B$ and we can return $\mathtt{True}$. If this never
    happens, we return $\mathtt{False}$. Clearly, the total running time of the algorithm is
    $\Ot\left( (|\mathcal{A}| + |\mathcal{B}|) \Psi^c\right)$ and extra amount of working memory is
    $\Ot(2^{2d/c} + z^c)$. Hence we focus on correctness.
	
    Note that if $\mathtt{True}$ is returned, indeed there must exist
    disjoint $A \in \Aa$ and $B \in \Bb$ because $\mathcal{M}$ is $1$-cover.
    For the other direction, suppose that there exist orthogonal $A \in \Aa$ and $B \in \Bb$.
    As mentioned this implies by Lemma~\ref{lem:conc} that with $1/d^{c}$ we have that for each $i$ it holds that $|A \cap U_i|=p/c$ and $|B \cap
    U_i|=p/c$.  Because we unified $[d]$ with $U_i$ with a random permutation,
    $\mathbb{E}[|L_i(A \cap U_i)|],\mathbb{E}[|R_i(A \cap U_i)|]= \Psi$, and
    by Markov's inequality and a union bound there will be no $i$ with $|L_i(A
    \cap U_i)|+|R_i(B \cap U_i)| \geq 4c\Psi$, and therefore $|L_1(A\cap U_1)|
    \cdots |L_c(A \cap U_c)| \le (4c\Psi)^{c}$ and $|R_1(B \cap
    U_1)| \cdots |R_c(B \cap U_c)| \le (4c\Psi)^{c}$. If this happens,
    the orthogonal pair will be detected since $(X_1,Y_1),\ldots,(X_z,Y_z)$ is a
    $1$-cover.
\end{proof}

\begin{lemma}[Construction of $1$-cover with small sparsity]
    \label{lem:factor}

    Let $p,q$ and $d$ be integers such that $p \leq q$ and $p+q\leq d/2$. There is a randomized
    algorithm that in $\Oh(2^{d})$ time and space, constructs $X_1, \ldots,X_z
    \subseteq \binom{[d]}{p}$ and $Y_1, \ldots,Y_z \subseteq \binom{[d]}{p}$,
    where $z$ is at most $2^d$.

    All pairs of sets $(X_1,Y_1),\ldots,(X_z,Y_z)$ form monochromatic rectangles in $\disj_{p,q,d}$
    and with probability at least $3/4$, it holds that $(X_1,Y_1)\ldots,(X_z,Y_z)$ is a $1$-cover of $\disj_{p,q,d}$ with sparsity
	\[
        d^{\Oh(1)} \cdot 2^{d/2 + p+q - d\cdot h\left(\frac{p+q}{2d}\right)}.
	\]
\end{lemma}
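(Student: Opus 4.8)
The plan is to take for the $1$-cover the family of maximal monochromatic $1$-rectangles of $\disj_{p,q,d}$, restricted to a random subsample. Concretely, for a set $W\subseteq[d]$ put $X_W:=\binom{W}{p}$ and $Y_W:=\binom{[d]\setminus W}{q}$; for every $A\in X_W$ and $B\in Y_W$ we have $A\cap B=\emptyset$, so $(X_W,Y_W)$ is always a monochromatic $1$-rectangle, which already yields the deterministic part of the statement. We only use such $W$ of one fixed size $m$, where $m$ is a parameter chosen in the range $p\le m\le d-q$ so as to balance the two contributions $\binom{d-p}{m-p}$ and $\binom{d-q}{m}$ to the sparsity computed below (in the regime $p,q\approx d/4$ relevant to Lemma~\ref{main-lem} this is $m=d/2$); using a \emph{fixed} size rather than including each element of $W$ independently is exactly what buys the improvement over the estimates in~\cite{fomin-jacm2016}. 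To obtain the cover we keep each $W\in\binom{[d]}{m}$ independently with probability $\rho:=\min\{1,Kd/N\}$, where $N:=\binom{d-p-q}{m-p}$ is the number of size-$m$ sets $W$ with $A\subseteq W\subseteq[d]\setminus B$ for a fixed disjoint pair $(A,B)$ (note $N$ is independent of the pair) and $K$ is a large absolute constant. Since $N\le\binom{d}{m}\le 2^d$ we always keep $z\le 2^d$ rectangles, and as each rectangle is stored implicitly by its defining set $W$, the construction runs in $O(2^d)$ time and space.

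For the $1$-cover property, fix a disjoint pair $(A,B)$ with $|A|=p$ and $|B|=q$. Exactly $N$ of the size-$m$ sets cover it, so the number of kept covering sets is $\mathrm{Bin}(N,\rho)$, which is nonzero with probability $1-(1-\rho)^N\ge 1-e^{-Kd}$ (and is $\ge 1$ deterministically when $\rho=1$). As there are at most $\binom{d}{p}\binom{d}{q}\le 4^d$ disjoint pairs, a union bound shows that for $K$ large enough the subsampled family is a $1$-cover of $\disj_{p,q,d}$ with probability at least $7/8$. For the sparsity, linearity of expectation gives $\mathbb{E}[\sum_i|X_i|]=\rho\binom{d}{p}\binom{d-p}{m-p}$ and $\mathbb{E}[\sum_i|Y_i|]=\rho\binom{d}{q}\binom{d-q}{m}$, so the expected sparsity equals $\rho(\binom{d-p}{m-p}+\binom{d-q}{m})$, and by Markov's inequality the sparsity exceeds this by at most a constant factor with probability at least $7/8$. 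Intersecting the two events, with probability at least $3/4$ the constructed family is a $1$-cover whose sparsity is at most
\[
 c\cdot\rho\Big(\binom{d-p}{m-p}+\binom{d-q}{m}\Big)
 \;\le\;
 cK\,d\cdot\frac{\binom{d-p}{m-p}+\binom{d-q}{m}}{\binom{d-p-q}{m-p}}
\]
for an absolute constant $c$.

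It then remains to show that, for the balanced choice of $m$, this ratio of binomial coefficients is bounded by $d^{O(1)}\,2^{d/2+p+q-d\,h((p+q)/(2d))}$, and this purely arithmetic step is the one I expect to be the main obstacle. The tool is the Vandermonde identity $\binom{d-p}{m-p}=\sum_j\binom{q}{j}\binom{d-p-q}{m-p-j}$ (splitting $[d]\setminus A$ into $B$ and the remaining $d-p-q$ elements), which reduces $\binom{d-p}{m-p}/\binom{d-p-q}{m-p}$ to $d^{O(1)}\max_j\binom{q}{j}\binom{d-p-q}{m-p-j}/\binom{d-p-q}{m-p}$, and symmetrically for the other summand; estimating the binomials with~\eqref{eq:binent}--\eqref{ineq:convexity}, carrying out the maximization over $j$, and substituting the balanced value of $m$ (which lies in $[p,d-q]$ thanks to $p\le q$ and $p+q\le d/2$) collapses the expression to the stated closed form. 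I would package this inequality together with the verification that $m$ is well-defined as a stand-alone computational lemma, in the same spirit as the inequality lemma used for the running-time analysis; the conceptual content of the proof — the rectangle family and the two concentration bounds — is short, and essentially all of the work sits in this last estimate.
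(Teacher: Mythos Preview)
Your proposal is correct and follows essentially the same approach as the paper: sample fixed-size sets $W\in\binom{[d]}{m}$ with probability $\Theta(d)/\binom{d-p-q}{m-p}$, form rectangles $(X_W,Y_W)=(\binom{W}{p},\binom{[d]\setminus W}{q})$, use a union bound for coverage and Markov's inequality for sparsity, and reduce to the ratio $(\binom{d-p}{m-p}+\binom{d-q}{m})/\binom{d-p-q}{m-p}$, whose estimation is deferred to a separate computational lemma. The only minor difference is that the paper handles that deferred inequality via an explicit choice of $m$ and Taylor-expansion estimates (its Lemma~\ref{ovtime-inequality}) rather than via Vandermonde.
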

\begin{proof}
    Let $l=p+q$ and let $A \in \mathcal{A}, B \in \mathcal{B}$ be an orthogonal
    pair. Let $x$ be some parameter that we will determine later (think about $x \approx d/2$). Note that
	\[
        \left|\left\{ S \in \binom{[d]}{x} :\;  A \subseteq S \text{ and } S \cap B = \emptyset \right\}\right| = \binom{d-l}{x-p}.
	\]
    Let $\Ss := \{S_1,\ldots,S_z\} \subseteq \binom{[d]}{x}$ be obtained by
    including each set from $\binom{[d]}{x}$ with probability
    $2d\binom{d-l}{x-p}^{-1}$ (assuming $x > p +\Omega(1)$, this probability is indeed in the interval $[0,1]$).
   
    Thus, if $A$ and $B$ are disjoint sets, with good
    probability there is a certificate set $S \in \Ss$, such that $A \subseteq
    S$ and $S \cap B = \emptyset$. More formally:
    \begin{equation}
        \label{eq:prob-cover}
        \prob{\not\exists S \in \Ss: A \subseteq S \text{ and } S \cap B=\emptyset \; \mid \;
        A \cap B = \emptyset}{} = \left(1-2d\binom{d-l}{x-p}^{-1}\right)^{\binom{d-l}{x-p}} \leq \exp(-2d),
    \end{equation}
    (where the last inequality is due to the standard inequality $1+\alpha\leq \exp(\alpha)$).
    Now we define a $1$-cover based on the family $\Ss$:
    \begin{displaymath}
        \text{For every } i \in [z]: \;\;\;\; X_i := \binom{S_i}{p} \; \text{ and } \; Y_i:= \binom{[d]\setminus S_i}{q}
        .
    \end{displaymath}
    First let us prove that with good probability $X_1,Y_1\ldots,X_z,Y_z$ is
    $1$-cover. There are at most $3^d$ disjoint pairs $A,B$. Hence by
    Equation~\ref{eq:prob-cover} and the union bound on all disjoint pairs $A,B$, we have that
    $(X_1,Y_1),\ldots,(X_z,Y_z)$ is a $1$-cover with probability at least
    $3/4$. 

    Next, we bound the sparsity of $(X_1,Y_1),\ldots,(X_z,Y_z)$. By Markov's
    inequality, $z \leq 4d\binom{d}{x}\binom{d-l}{x-p}^{-1}$ with probability
    at least $1/2$. Hence with probability at least $1/2$ our $1$-cover
    has sparsity at most:
    \begin{equation}
        \begin{aligned}
              &4d\binom{d}{x}\binom{d-l}{x-p}^{-1} \bigg(|X_i| / \binom{d}{p} + |Y_i| /   \binom{d}{q}\bigg) =\\
            &4d\binom{d}{x}\binom{d-l}{x-p}^{-1}\left(\binom{x}{p}\binom{d}{p}^{-1}+\binom{d-x}{q}\binom{d}{q}^{-1}\right) =\\
            &4d\bigg(\binom{d-p}{x-p}+\binom{d-q}{x}\bigg)\binom{d-l}{x-p}^{-1},
        \end{aligned}
        \label{eq:sparsitybound}
    \end{equation}
    where the second equality follows from using $\binom{a}{b+c}\binom{b+c}{c}=\binom{a}{b,c}=\binom{a}{c}\binom{a-c}{b}$ twice.

    Next, we use  Lemma~\ref{ovtime-inequality} (see
    Appendix~\ref{sec:ineq-ov}) with: $d=n$, $p=\sigma\lambda n$,
    $q=(1-\sigma)\lambda n$ and $p+q = \lambda n$. Note that we assumed that $\sigma \in
    [0.4,0.6]$ and $\lambda \in [0.4,0.5]$ hence conditions for
    Lemma~\ref{ovtime-inequality} are satisfied. We obtain that for the choice
    of $x:=d(1/2 + (\sigma - 1/2)(\log_2(3)/2) + (1/2 - \sigma)(1/2 - \lambda))$
    expression \eqref{eq:sparsitybound} is bounded from above with
	\[
        d^{\Oh(1)} \cdot 2^{d/2 + p+q - d\cdot h(\frac{p+q}{2d})},
	\]
	as required.
\end{proof}

Now the main statement of this section follows by a straightforward combination of the previous lemmas:

\begin{proof}[Proof of Theorem~\ref{ov-lemma}]
    Let $p := \sigma \lambda d$ and $q := (1-\sigma) \lambda d$.  Set $c = 20$ and assume that integers $p,q,d$ are multiples of $c$ (by padding the
    instance if needed).

    Next, use Lemma~\ref{lem:factor} with $d/c$, $p/c$ and $q/c$ to
    construct a $1$-cover $\mathcal{M}$ of sparsity
	\[
        \Psi = d^{\Oh(1)} \cdot 2^{\frac{d/2 + p+q - d h((p+q)/(2d))}{c}},
	\]
    with good probability.	Subsequently, apply Lemma~\ref{ov-by-sparsity} with this
    $1$-cover $\mathcal{M}$ to detect a disjoint pair $A \in \Aa$ and $B \in \Bb$ with
    constant probability. Note that the runtime is:
	\begin{align*}
		\Otilde\left(\left( |\mathcal{A}| + |\mathcal{B}| \right)(4c\Psi)^c + 2^{2d/c}\right) 
		=\Otilde\left(\left( |\mathcal{A}| + |\mathcal{B}| \right)2^{d/2 + p+q - d h((p+q)/2d)}\right)
        .
    \end{align*}
    Hence, the running time is $\Ot\left( 
		\left( |\mathcal{A}| + |\mathcal{B}| \right)2^{d(1/2 + \lambda -
    h(\lambda/2))}\right)$. The main bottleneck in the space usage comes from the $z^c$ factor in Lemma~\ref{ov-by-sparsity} which gives the $2^d$ factor.
\end{proof}

\paragraph{Lower bound on sparsity}

One might be tempted to try to get even better bounds on the sparsity of the disjointness matrix. 
Here we show that the sparsity bound from Lemma~\ref{lem:factor} is
essentially optimal with a fairly straightforward counting argument. It means
that new techniques would have to be developed to improve an
algorithm for Orthogonal Vectors in the worst case $\sigma = 1/2$ and $\lambda =
1/2$, and in consequence improve the meet-in-middle algorithm for Subset Sum.

\begin{theorem}
	Any $1$-cover of $\disj_{d/4,d/4,d/2}$ has sparsity at least
    $\Oms\left(2^{d}/\binom{d}{d/4}\right)$.
\end{theorem}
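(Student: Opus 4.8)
The plan is a direct double-counting argument whose crux is that every all-$1$ combinatorial rectangle of the disjointness matrix must be far smaller than the trivial bound on its size. Write $\disj$ for the matrix in the statement: its rows and columns are both indexed by $\binom{[d]}{d/4}$, so it is $N\times N$ with $N:=\binom{d}{d/4}$, and it is exactly the matrix for which Lemma~\ref{lem:factor} produces a $1$-cover of sparsity $\Oms(2^{d}/N)$. For a $1$-cover $\mathcal M=(X_1,Y_1),\ldots,(X_z,Y_z)$ the sparsity is $\Psi=\tfrac1N\sum_{k=1}^{z}\bigl(|X_k|+|Y_k|\bigr)$, and we must show $\Psi=\Oms(2^{d}/N)$.

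The only non-routine step is the following structural observation. If $(X,Y)$ is an all-$1$ rectangle of $\disj$ and $U:=\bigcup_{A\in X}A$, then every $B\in Y$ is disjoint from every $A\in X$ and hence from $U$; therefore $X\subseteq\binom{U}{d/4}$ and $Y\subseteq\binom{[d]\setminus U}{d/4}$, so
\[
  |X|\cdot|Y|\ \le\ \binom{|U|}{d/4}\binom{d-|U|}{d/4}\ \le\ 2^{|U|}\cdot 2^{\,d-|U|}\ =\ 2^{d}.
\]
Thus, although a single row- or column-set can be as large as $\binom{d}{d/4}=N$, the \emph{product} over any disjoint-pair rectangle is at most $2^{d}$. (By log-concavity of $u\mapsto\binom{u}{d/4}\binom{d-u}{d/4}$ one can sharpen $2^{d}$ to $\binom{d/2}{d/4}^{2}$, but this only changes polynomial factors, so I would not bother.)

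Next I would lower-bound $\sum_k|X_k||Y_k|$ by double counting: since $\mathcal M$ is a $1$-cover, every disjoint pair $(A,B)$ lies in some rectangle, so
\[
  \sum_{k=1}^{z}|X_k|\cdot|Y_k|\ =\ \sum_{(A,B):\,A\cap B=\emptyset}\bigl|\{k:A\in X_k,\ B\in Y_k\}\bigr|\ \ge\ \bigl|\{(A,B):A\cap B=\emptyset\}\bigr|\ =\ N\binom{3d/4}{d/4}.
\]
Combining the two displays by AM--GM, $|X_k|+|Y_k|\ge 2\sqrt{|X_k||Y_k|}\ge 2|X_k||Y_k|/2^{d/2}$, where the last step uses $|X_k||Y_k|\le 2^{d}$; summing over $k$ and dividing by $N$ gives $\Psi\ge 2\binom{3d/4}{d/4}/2^{d/2}$.

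Finally I would put this in the claimed form. Using $\binom{3d/4}{d/4}\binom{d}{d/4}=\binom{d}{d/4,\,d/4,\,d/2}$ together with~\eqref{eq:binent} in the form $\binom{d}{d/4}\ge\Omega(d^{-1/2})2^{d\,h(1/4)}$ and $\binom{3d/4}{d/4}\ge\Omega(d^{-1/2})2^{(3d/4)h(1/3)}$ and the identity $d\,h(1/4)+\tfrac{3d}{4}h(1/3)=\tfrac32 d$, one gets $\binom{3d/4}{d/4}\binom{d}{d/4}\ge 2^{3d/2}/d^{\Oh(1)}$, whence
\[
  \Psi\ \ge\ \frac{2\binom{3d/4}{d/4}\binom{d}{d/4}}{2^{d/2}\binom{d}{d/4}}\ \ge\ \frac{2^{3d/2}/d^{\Oh(1)}}{2^{d/2}\binom{d}{d/4}}\ =\ \frac{1}{d^{\Oh(1)}}\cdot\frac{2^{d}}{\binom{d}{d/4}}\ =\ \Oms\!\left(\frac{2^{d}}{\binom{d}{d/4}}\right).
\]
I expect no deep obstacle here: the one thing to get right is that one must control the \emph{product} $|X_k||Y_k|$ — which the cover constrains through the double count — rather than the \emph{sum} $|X_k|+|Y_k|$ directly, and bridge them via AM--GM; the bound $|X|\,|Y|\le 2^{d}$ on a monochromatic rectangle is exactly what makes the resulting lower bound match the sparsity achieved in Lemma~\ref{lem:factor}.
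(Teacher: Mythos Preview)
Your proof is correct and rests on the same two ingredients as the paper's: the double-count $\sum_k|X_k||Y_k|\ge\binom{d}{d/4,d/4}$, and the structural observation that in any all-$1$ rectangle the unions $\bigcup X_k$ and $\bigcup Y_k$ are disjoint, which caps $|X_k|\cdot|Y_k|$. The only difference is the finishing step. The paper converts the product bound into a sum bound by a case split: calling a rectangle left-heavy or right-heavy according to which union exceeds $d/2$, it observes that no rectangle can be both, so (after a WLOG swap) at least half of $\sum_k|X_k||Y_k|$ comes from rectangles with $|Y_k|\le\binom{d/2}{d/4}$, giving directly $\sum_k|X_k|\ge\binom{d}{d/4,d/4}/(2\binom{d/2}{d/4})=2^d/d^{\Oh(1)}$. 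You instead use AM--GM with the uniform cap $|X_k||Y_k|\le 2^d$ to get $|X_k|+|Y_k|\ge 2|X_k||Y_k|/2^{d/2}$ and then sum. Your route is arguably cleaner (no case analysis, no symmetry argument) and lands on the same bound up to polynomial factors; the paper's route has the small advantage that it bounds $\sum_k|X_k|$ alone, not just $\sum_k(|X_k|+|Y_k|)$.
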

\begin{proof}
    Let $(X_1,Y_1),\ldots,(X_z,Y_z)$ be a $1$-cover of $\disj_{d/4,d/4,d/2}$.
    Next, we define 
	\[
        L_i: = \bigcup_{ A \in X_i} A \qquad \text{ and }\qquad R_i := \bigcup_{B \in Y_i} B.
	\]
    We say an index $i \in [z]$ is \emph{left-heavy} if $|L_i| > d/2$ and \emph{right-heavy} if $|R_i| > d/2$.
	Note that $i$ cannot both be left-heavy and right-heavy since otherwise there exist $A \in X_i$ and $B \in Y_i$ that overlap, contradicting that $X_i,Y_i$ is a monochromatic rectangle.
	
	By swapping $X_i$ and $Y_i$ we can assume without loss of generality that
	\[
        \sum_{\substack{i\in[z] \\ i \text{ left-heavy}}} |X_i||Y_i| \geq
        \sum_{\substack{i \in [z] \\ i \text{ right-heavy}}} |X_i||Y_i|
        .
	\]
	Since every disjoint pair of sets $A,B \subseteq [d]$ with $|A|=|B|=d/4$ must be in at least one rectangle, we have the lower bound
	\[
	\binom{d}{d/4,d/4} \leq \sum_{i=1}^z |X_i||Y_i| \leq 2\sum_{i \text{ left-heavy}} |X_i||Y_i|\leq 2\sum_{i \text{ left-heavy}} |X_i|\binom{d/2}{d/4},
	\]
	where the last inequality holds since $|A_i| \leq d/2$ implies that $|Y_i| \leq \binom{d/2}{d/4}$.
	Thus $\sum_{i=1}^z |X_i| \geq 2^{d}/d^{\Oh(1)}$, and the theorem follows
    because rows and columns are $\binom{d}{d/4}$.
\end{proof}

\paragraph{Relation of Techniques in this section with existing methods}
The idea for constructing the $1$-cover is relatively standard in communication
complexity (see e.g., the aforementioned~\cite[Claim
1.37]{rao-yehudayoff-commbook}). It was also used in some proofs of Bollob\'as's Theorem~\cite{bollobas1965generalized}. The idea of randomly partitioning the universe to get a structured $1$-cover is very similar to the derandomization of the color-coding approach from~\cite{AlonYZ95}. 

Both ideas were also used by~\cite{fomin-jacm2016}. They also
start with a probabilistic construction (c.f., \cite[Lemma 4.5]{fomin-jacm2016}) on a small universe that is
repeatedly applied, and use it to set up a data structure of `$n$-$p$-$q$-separating
collections' that is similar to our lists.\footnote{Additionally they
derandomize their construction by using brute-force to find the probabilistic
construction and use splitters to derandomize the step of splitting the universe
into $c$ blocks.} The small but crucial difference, however is that (in our
language) they obtain a monochromatic rectangle by sampling a random set $S
\subseteq [d]$ (in contrast to our random sampling $S \in \binom{[d]}{d/4}$ in
the case $p=q=d/4$), and in the case $p=q=d/4$ this would lead to sparsity
$2^{3d/4}/2^{d/4} \gg 2^d/\binom{d}{d/4}$.

\paragraph{Acknowledgements.} The first author would like to thank Per Austrin,
Nikhil Bansal, Petteri Kaski, Mikko Koivisto for several inspiring discussions
about reductions from Subset Sum to Orthogonal Vectors. The second author would
like to thank Marcin Mucha and Jakub Pawlewicz for useful discussions.

\bibliographystyle{alpha}
\bibliography{bib}

\begin{appendices}
\section{Omitted proofs}
\label{proof-of-lem:sumsetenum}

\subsection{The Approach of Schroeppel and Shamir}
\label{schroeppel-shamimir-algorithm}
In this section we recall the Approach from \cite{schroeppel} in such a way that we can easily reuse parts of it. Their crucial insight is formalized in Lemma~\ref{lem:sumsetenum}, which we first recall for convenience: 
\incdatastructure*
\begin{algorithm}
\SetKwInOut{preprocessing}{Preprocessing $\inc(A,B)$}
\SetKwInOut{next}{Operation \inc.\nextel}
\preprocessing{}
Sort $A = \{a_1,\ldots,a_k\},B = \{b_1,\ldots,b_l\}$\\
Initialize priority queue $Q$\\
For every $b_j \in B$ add $(a_1,b_j)$ to $Q$ with priority $a_1 + b_j$\\
\next{}
\setcounter{AlgoLine}{0}
\LineIf {$Q$ is empty} { \Return $\mathtt{EMPTY}$}\\
Let $w$ be the lowest priority in the queue $Q$\\
Initialize $P^l_w := \emptyset$\\
\While {the lowest priority in $Q$ is $w$} {
    Let $(a_i,b_j)$ be the element with lowest priority in queue $Q$ and remove it\\
    \LineIf {$i \le k$} {add $(a_{i+1},b_j)$ with priority $a_{i+1}+b_j$ to $Q$ }\\
    $P^l_w = P^l_w \cup \{ (a_i,b_j) \}$ \tcp*{$a_i+b_j = w$}
}

\LineIf {$P^l_w = \emptyset$}{
        \Return \inc.\nextel \tcp*{Seek next if $w \notin C$}
}

\Return $(P^l_w,w)$.
\caption{Pseudocode of Lemma~\ref{lem:sumsetenum}}
\label{alg:inc}
\end{algorithm}
\begin{proof}
    For the overview of the proof see Algorithm~\ref{alg:inc}. During the
    preprocessing step we sort sets $A$ and $B$ in increasing order $A = \{a_1 \le
    a_2\le\ldots \le a_k\}$ and $B = \{b_1\le b_2 \le \ldots \le b_l\}$. Next we
    initialize priority queue $Q$ and for every $b_j \in B$ 
    we add a tuple $(a_1,b_j)$ with priority $a_1 + b_j$. The
    preprocessing clearly takes $\Ot(|A| + |B|)$ time and space.

    Now we explain the implementation of operation \inc.\nextel. We let $w$ be
    the priority of the element with the lowest priority in our queue. We go over all
    $(a_i,b_j)$ in the priority queue that have the priority $w$ (and therefore
    $a_i + b_j = w$) and add them to $P^l_w$. Namely, we remove every element
    $(a_i,b_j)$ with $a_i+b_j = w$ from the queue and replace it with
    $(a_{i+1},b_j)$. For correctness, note that every pair $(a_i,b_j) \in A
    \times B$ will eventually be added and removed from the queue. Moreover the
    priority queue outputs elements in the increasing order.

    For the space complexity, observe that at any moment for every $b \in B$
    there exists at most one $a \in A$ such that $(a,b) \in Q$. Hence at any
    moment the size of the priority queue is $\Ot(|B|)$ space.

    For the running time, observe that every pair $(a,b) \in A \times B$ will
    be added and removed from $Q$ exactly once. Hence the total running
    time of all calls to $\inc.\nextel$ is $\Ot(|A||B|)$.
\end{proof}

The datastructure of Lemma~\ref{lem:sumsetenum} can be used for an efficient $4$-SUM algorithm.

\begin{lemma}\label{ref:4sum}
	An instance $A,B,C,D,t$ of $4$-SUM with $|A|=|B|=|C|=|D|=N$ can be solved using $\Ot(N^2)$ time and $\Ot(N)$ space.
\end{lemma}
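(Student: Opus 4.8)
The plan is to combine the two data structures from Lemma~\ref{lem:sumsetenum} with a standard two-pointer scan; in fact the resulting algorithm is exactly the skeleton of Algorithm~\ref{main-weighted-OV} with the disjointness/$\mathtt{OV}$ check removed. First I would build $\inc := \inc(A,B)$, which enumerates the distinct values of the sumset $A+B$ in increasing order together with the lists $P^l_a$ of witnessing pairs, and $\dec := \dec(C,D)$, which enumerates the distinct values of $C+D$ in decreasing order with the analogous lists $P^r_b$; by Lemma~\ref{lem:sumsetenum} each preprocessing step costs $\Ot(N)$ time and space. Then I would run a linear search: initialize $(P^r_b,b) \leftarrow \dec.\nextel$, and for each $(P^l_a,a)$ returned by repeated calls to $\inc.\nextel$, repeatedly call $\dec.\nextel$ while $a+b>t$; as soon as $a+b\le t$, test whether $a+b=t$, and if so output a pair $(a_1,b_1)\in P^l_a$ together with the current pair $(c_1,d_1)\in P^r_b$, certifying $a_1+b_1+c_1+d_1=t$. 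If the scan exhausts $A+B$ (or $C+D$) without hitting equality, report that no solution exists.

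For correctness, the forward direction is immediate from the definitions of $P^l_a$ and $P^r_b$. For the converse, suppose $a^\star+b^\star+c^\star+d^\star=t$ with $x^\star:=a^\star+b^\star\in A+B$ and $y^\star:=c^\star+d^\star\in C+D$. Since $\inc$ outputs each distinct sum exactly once, every iteration processed before the one with $a=x^\star$ has value $a'<x^\star$; in such an iteration the pointer $b$ is decreased only while $b>t-a'>t-x^\star=y^\star$, so it comes to rest on the largest element of $C+D$ that is $\le t-a'$, which is still $\ge y^\star$. Hence $b\ge y^\star$ when the iteration with $a=x^\star$ begins (and the initial value of $b$ is $\max(C+D)\ge y^\star$); inside that iteration $b$ is decreased one element at a time while $b>y^\star$, so it is eventually set to exactly $y^\star\in C+D$, at which point the test $a+b=t$ succeeds. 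This monotonicity argument is the same invariant used in the correctness proof of Algorithm~\ref{main-weighted-OV}.

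For the resource bounds, Lemma~\ref{lem:sumsetenum} gives that the total time of all $\inc.\nextel$ calls is $\Ot(|A||B|)=\Ot(N^2)$ and of all $\dec.\nextel$ calls is $\Ot(|C||D|)=\Ot(N^2)$; the pointer $b$ is monotone, so the while loop performs at most $|C+D|+1\le N^2+1$ advances in total, and the remaining per-iteration work is $\Ot(1)$, so the overall running time is $\Ot(N^2)$. The space is $\Ot(N)$: the two data structures use $\Ot(|A|+|B|)$ and $\Ot(|C|+|D|)$ space respectively by Lemma~\ref{lem:sumsetenum}, and at any moment we keep only a single block $P^l_a$ and a single block $P^r_b$, each of size at most $N$ (and for the mere decision version these can be discarded entirely). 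The only genuinely nontrivial point is the two-pointer monotonicity argument in the correctness proof; everything else is bookkeeping already encapsulated in Lemma~\ref{lem:sumsetenum}.
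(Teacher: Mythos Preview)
Your proposal is correct and follows essentially the same approach as the paper: build $\inc(A,B)$ and $\dec(C,D)$ from Lemma~\ref{lem:sumsetenum} and run the standard two-pointer linear search on $A+B$ and $C+D$. The paper's own proof is in fact terser than yours---it simply states the linear search and declares the correctness and resource bounds ``easy to see''---so your explicit monotonicity invariant and space accounting are, if anything, more thorough than what appears in the paper.
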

\begin{proof}
	The idea is to simulate a standard linear search routine on sets $A+B$ and $C+D$.
	Use Lemma~\ref{lem:sumsetenum} to enumerate $A+B$ in increasing order and $C+D$ in
    decreasing order. In any iteration with current items $x \in A+B$ and $y \in C + D$, compare $x+y$ with $t$. If $x+y=t$ output YES. Otherwise, if $x+y < t$, query the next (larger) element of $A+B$, and if $x+y> t$ query the next (smaller) element of $C+D$ and iterate. If this terminates output NO. It is easy to see that this is always correct and runs in the required time and space bounds.
\end{proof}
\begin{theorem}[\cite{schroeppel}]
    Given a set $S$ of $n$ positive integers and a target $t \in \nat$.
    In $\Os(2^{n/2})$ time and $\Os(2^{n/4})$ space we can determine if there
    exists $S' \subseteq S$, such that $w(S') = t$.
\end{theorem}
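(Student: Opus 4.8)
The plan is to reduce the Subset Sum instance to an instance of $4$-SUM of size $N=2^{n/4}$ and then invoke Lemma~\ref{ref:4sum}. First I would arbitrarily partition the $n$ input integers into four blocks $S_1,S_2,S_3,S_4$, each of size $n/4$ (padding with a few zero weights if $4 \nmid n$). For each block $S_i$ I would enumerate all $2^{n/4}$ subset sums by brute force and collect them into the set $A_i := w(2^{S_i})$; this takes $\Os(2^{n/4})$ time and $\Os(2^{n/4})$ space. Since any candidate solution $S' \subseteq S$ restricts to a subset $S' \cap S_i$ of each block, we have $w(S')=t$ if and only if there exist $a_i \in A_i$ with $a_1+a_2+a_3+a_4=t$; hence the original instance is a YES-instance precisely when $A_1,A_2,A_3,A_4,t$ is a YES-instance of $4$-SUM.

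Next I would run the algorithm of Lemma~\ref{ref:4sum} on $A_1,A_2,A_3,A_4,t$ with $N=2^{n/4}$. That algorithm simulates a linear-scan merge of the two sumsets $A_1+A_2$ and $A_3+A_4$, enumerating the former in increasing order and the latter in decreasing order via the $\inc$ and $\dec$ data structures of Lemma~\ref{lem:sumsetenum}, and it runs in $\Ot(N^2)=\Ot(2^{n/2})$ time and $\Ot(N)=\Ot(2^{n/4})$ space. Combined with the $\Os(2^{n/4})$ cost of building the $A_i$, this yields the claimed $\Os(2^{n/2})$ time and $\Os(2^{n/4})$ space bounds, and correctness is immediate from the equivalence established above.

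The only point requiring care is the space accounting: the four lists $A_1,\dots,A_4$ together must occupy only $\Os(2^{n/4})$ memory, and the $4$-SUM routine must never materialize the full sumsets $A_1+A_2$ or $A_3+A_4$, each of which can have up to $2^{n/2}$ elements. This is exactly what the priority-queue implementation in Lemma~\ref{lem:sumsetenum} guarantees — at any time the queue holds at most one pair per element of one of the two sets — so no further work is needed and the theorem follows.
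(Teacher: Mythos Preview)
Your proposal is correct and follows essentially the same approach as the paper: partition the weights into four equal blocks, list all $2^{n/4}$ subset sums of each block, and feed the resulting four lists into the $4$-SUM routine of Lemma~\ref{ref:4sum}. The paper's proof is nearly identical, differing only in that it omits your remarks on padding and on why the priority-queue implementation keeps the space at $\Os(2^{n/4})$.
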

\begin{proof}
    First, arbitrarily partition the input weights into sets $A_1, A_2,A_3,A_4$  with $|A_1|=|A_2|=|A_3|=|A_4| = n/4$. Next enumerate
    and store for every $i \in \{1,\ldots,4\}$ the sets:

    \begin{displaymath}
        \mathcal{A}_i := \{ w(S) \; | \; S \subseteq A_i \}
        .
    \end{displaymath}

    Observe, that $|\mathcal{A}_i| = 2^{n/4}$. We can construct and store
    $\mathcal{A}_i$ for every $i \in \{1,\ldots,4\}$ in $\Os(2^{n/4})$ time and
    space. Now, we solve the 4-SUM instance with sets
    $\mathcal{A}_1,\ldots,\mathcal{A}_4$ and target $t$ using the algorithm from Lemma~\ref{ref:4sum}. Because we can solve an
    instance of 4-SUM of $N$ integers in $\Ot(N^2)$ time and $\Ot(N)$ space and
    an instance size is $N = 2^{n/4}$, the algorithm runs in $\Os(2^{n/2})$ time
    and $\Os(2^{n/4})$ space. For the correctness, assume that $S' \subseteq S$
    with $w(S') = t$. Note that $A_i \cap S' \in \mathcal{A}_i$ for every $i \in
    \{1,\ldots,4\}$. Therefore 4-SUM algorithm answers \emph{yes} if there
    exists  $S' \subseteq S$ with $w(S')=t$. The other direction of correctness is trivial.
\end{proof}

\subsection{Enumerating $\Ll_1,\Ll_2,\Rr_1,\Rr_2$}
\label{enum-appendix}

\begin{lemma}
    \label{enumeration-lemma}
    We can enumerate $\Ll_1,\Rr_1,\Ll_1,\Rr_2$ in
    $\Ot(|\Ll_1|+|\Ll_2|+|\Rr_1|+|\Rr_2|+2^{\mu n})$ time and space.
\end{lemma}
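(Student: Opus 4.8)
The plan is to notice that each of the four lists defined in \Crefrange{eq:l1}{eq:r2} has exactly the same shape,
\[
	\mathcal{F} \;=\; \Big\{\, T_a \cup T_b \;:\; T_a \in \mathcal{A},\ T_b \in \mathcal{B},\ w(T_a \cup T_b) \equiv_p r \,\Big\},
\]
for a suitable choice of two set families $\mathcal{A},\mathcal{B}$, a modulus $p$, and a residue $r$. Concretely, for $\Ll_1$ take $\mathcal{A} = 2^{L}$, $\mathcal{B} = \binom{M_L}{\sigma_L\lambda\mu n}$, $p=p_L$, $r=x_L$; for $\Rr_1$ take $\mathcal{A} = 2^{R}$, $\mathcal{B} = \binom{M_R}{\sigma_R\lambda\mu n}$, $p=p_R$, $r=t-x-x_R$; for $\Ll_2$ take $\mathcal{A} = \binom{M_L}{\sigma\lambda\mu n}$, $\mathcal{B} = \binom{M}{(1-\sigma_L)\lambda\mu n}$, $p=p_L$, $r=x-x_L$; and for $\Rr_2$ take $\mathcal{A} = \binom{M}{(1-\sigma)\lambda\mu n}$, $\mathcal{B} = \binom{M_R}{(1-\sigma_R)\lambda\mu n}$, $p=p_R$, $r=x_R$. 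In each case the ground sets of $\mathcal{A}$ and $\mathcal{B}$ are two distinct blocks of the partition $L \uplus M_L \uplus M \uplus M_R \uplus R$, hence disjoint, so every member of $\mathcal{F}$ decomposes \emph{uniquely} as $T_a \cup T_b$ with $T_a \in \mathcal{A}$, $T_b \in \mathcal{B}$; consequently it suffices to give an algorithm that, for one such instance, outputs every element of $\mathcal{F}$ exactly once in time and space $\Ot(|\mathcal{A}|+|\mathcal{B}|+p+|\mathcal{F}|)$, and then run it four times.

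Before describing that algorithm I would record three elementary facts. First, $|\mathcal{A}|,|\mathcal{B}|\le 2^{\mu n}$: any $\binom{\mu n}{k}$ is at most $2^{\mu n}$, and $2^{|L|},2^{|R|}\le 2^{\mu n}$ because by~\eqref{eq:lr} and $|\beta|\le 0.012$ (see~\eqref{ineq:hbar}) we have $|L|,|R|\le (1-2.988\mu)n/2\le \mu n$ as soon as $\mu > 1/4.988$, which holds since $\mu > 0.21$. Second, $p\le 2^{\mu n}$, since $p_L\in\Theta(2^{(\lambda-\eps_L)\mu n})$ and $p_R\in\Theta(2^{(\lambda-\eps_R)\mu n})$ while $\lambda\le 1/2$. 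Third, each of the families $\mathcal{A},\mathcal{B}$, together with the weight $w(T)$ of each of its members, can be enumerated in time $\Ot(|\mathcal{A}|+|\mathcal{B}|)$ by a standard revolving-door enumeration of fixed-size subsets (respectively a binary Gray-code enumeration of $2^{L}$ or $2^{R}$), updating the running weight by a single addition/subtraction at each step; since $w_1,\dots,w_n\le 2^{O(n)}$, every arithmetic operation costs $\poly(n)$ and is absorbed by $\Ot$.

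With these in hand the enumeration is a hash join on residues modulo $p$: allocate a direct-address array $H$ indexed by $\{0,\dots,p-1\}$ (size $p\le 2^{\mu n}$), enumerate $\mathcal{A}$ and append each $T_a$ (with $w(T_a)$) to the bucket $H[w(T_a)\bmod p]$ — this costs $\Ot(|\mathcal{A}|+p)$ time and space, and stores exactly $|\mathcal{A}|$ entries in total — and then enumerate $\mathcal{B}$, scanning for each $T_b$ the bucket $H[(r-w(T_b))\bmod p]$ and outputting $T_a\cup T_b$ with weight $w(T_a)+w(T_b)$ for every $T_a$ it contains. Correctness is immediate since $w(T_a\cup T_b)\equiv_p r$ iff $w(T_a)\equiv_p r-w(T_b)$, and by the uniqueness of the decomposition each element of $\mathcal{F}$ is emitted once; the emission phase costs $\Ot(|\mathcal{B}|+|\mathcal{F}|)$, so one list is produced in $\Ot(|\mathcal{A}|+|\mathcal{B}|+p+|\mathcal{F}|)\le \Ot(2^{\mu n}+|\mathcal{F}|)$ time and space, the array $H$ being reusable between lists. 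Summing over the four lists yields $\Ot(|\Ll_1|+|\Ll_2|+|\Rr_1|+|\Rr_2|+2^{\mu n})$, as claimed. The only non-mechanical point is the one flagged in the second paragraph — checking that the $2^{|L|}$ and $2^{|R|}$ costs are dominated by $2^{\mu n}$ — which is exactly where the hypotheses $\mu\in(0.21,0.25)$ and the bound on $\beta$ are needed; everything else is bookkeeping.
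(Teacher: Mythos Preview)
Your proof is correct and follows essentially the same approach as the paper's: both arguments bucket the constituent families by weight residue modulo $p_L$ (resp.\ $p_R$) and join the buckets, and both use the same inequality $(1-3\mu+|\beta|\mu)n/2<(1-2.988\mu)n/2<\mu n$ (valid for $\mu>0.21$, $|\beta|<0.012$) to absorb $2^{|L|}$ and $2^{|R|}$ into the $2^{\mu n}$ term. Your write-up is in fact cleaner than the paper's, which somewhat informally refers to the candidate families via the solution parts $S_1,\dots,S_8$; you also make explicit the uniqueness of the decomposition $T_a\cup T_b$, which justifies that the join emits exactly $|\mathcal{F}|$ pairs.
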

\begin{proof}
    We start with enumerating sets $2^{S_1},\ldots,2^{S_8}$ defined in
    Equation~\ref{eq:conditions}. We can do that in time and space
    $\Os(2^{|L|} + 2^{|M|} + 2^{|R|})$. This is bounded by our claimed runtime,
    because $|L|$ and $|R|$ are bounded by $(1-3\mu + |\beta| \mu )n/2
    < (1-2.988\mu)n/2 < \mu$ (recall that $\mu > 0.21$ and $|\beta| < 0.012$).

    Next, we construct tables modulo $p_L,p_R$, for all $i \in \{1,\ldots,4\}$ and
    $a \in [p_L]$:
    \begin{displaymath}
        T_L^i[a] = \{ X \; | \; X \in 2^{S_i} \text{ and } w(X) \equiv_{p_L} a \}
        .
    \end{displaymath}

    And for all $i \in \{5,\ldots,8\}$ and $a \in [p_R]$:
    \begin{displaymath}
        T_R^i[a] = \{ X \; | \; X \in 2^{S_i} \text{ and } w(X) \equiv_{p_R} a \}
        .
    \end{displaymath}

    Now construct sets $\Ll_1, \Ll_2, \Rr_1,\Rr_2$ with the dynamic programming according
    to Equations~(\ref{eq:l1})-(\ref{eq:r2}). For example, to construct $\Ll_1$ we
    join all sets $X \in T_L^1[a]$ with $Y \in T_L^2[x_L-a]$ for all $a \in
    [p_L]$.

    This can be computed in the $\Os(|\Ll_1| + |\Ll_2| + |\Rr_1| + |\Rr_2| + 2^{\mu n})$
    extra time and space (note that $p_L,p_R = \Os(2^{\mu n})$).
\end{proof}

\subsection{Improved Time-Space Trade-off}

\begin{corollary}
	Let $\mathcal{S}$ be an integer satisfying $\mathcal{S} \leq 2^{0.249999n}$.
    Then any Subset Sum instance on $n$ integers can be solved by a Monte Carlo
    algorithm using $\Os(\mathcal{S})$ space and $\Os(2^{n}/\mathcal{S}^{2.000008})$ time.
\end{corollary}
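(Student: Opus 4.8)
The plan is to combine Theorem~\ref{main-thm} with the standard ``enumerate a prefix, recurse on the rest'' technique that turns a meet-in-the-middle style algorithm into a time--space trade-off. Recall (see Section~\ref{sec:prepro}) that we may assume all weights are non-negative. Given the target $\mathcal{S}\le 2^{0.249999n}$, I would choose the integer $k:=\lceil n-\log_2(\mathcal{S})/0.249999\rceil$, which is the largest $k$ with $2^{0.249999(n-k)}\le\mathcal{S}$; the hypothesis $\mathcal{S}\le 2^{0.249999n}$ ensures $k\ge 0$ and $\mathcal{S}\ge 1$ ensures $k\le n$, and by construction $n-k\le \log_2(\mathcal{S})/0.249999$. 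Partition the $n$ weights arbitrarily into $X$ with $|X|=k$ and $Y$ with $|Y|=n-k$.

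The algorithm iterates, one at a time through a binary counter over $X$ (hence in $\poly(n)$ space), over all subsets $S_X\subseteq X$; for each $S_X$ with $w(S_X)\le t$ it runs the algorithm of Theorem~\ref{main-thm} on the Subset Sum instance given by the weights of $Y$ with target $t-w(S_X)$, and returns ``yes'' the moment some subcall does. Correctness of the ``yes''-direction is immediate since every subcall that answers ``yes'' exhibits an actual solution. For the converse, if $S\subseteq[n]$ has $w(S)=t$ then $S_X:=S\cap X$ is enumerated and $S\cap Y$ is a subset of $Y$ of weight $t-w(S_X)$, so the matching subcall succeeds with constant probability; amplifying each subcall to failure probability $2^{-n}$ by $\Oh(n)$ repetitions and taking a union bound over the at most $2^n$ subcalls makes the whole procedure correct with high probability. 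All of this costs only a $\poly(n)$ factor, hidden by $\Os$.

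It remains to check the bounds. Each subcall works on $n-k$ integers and so uses $\Os(2^{0.249999(n-k)})=\Os(\mathcal{S})$ space by the choice of $k$; since the subcalls run sequentially and reuse memory, the total space is $\Os(\mathcal{S})$. There are $2^k$ subcalls, each taking $\Os(2^{(n-k)/2})$ time, so the total time is $\Os(2^k\cdot 2^{(n-k)/2})=\Os(2^{(n+k)/2})$. Using $k\le n-\log_2(\mathcal{S})/0.249999+1$ we get $(n+k)/2\le n-\log_2(\mathcal{S})/0.499998+\tfrac12$, hence the running time is $\Os(2^n\cdot\mathcal{S}^{-1/0.499998})\le \Os(2^n/\mathcal{S}^{2.000008})$, where the last step uses $\mathcal{S}\ge 1$ together with $2.000008<1/0.499998$ (equivalently $2.000008\times 0.499998<1$). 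The argument is entirely routine; the only points needing a moment's care are the rounding of $k$ (absorbed into a constant factor via the ceiling) and the reduction of the potentially two-sided Monte Carlo error to one-sided error by amplification and a union bound.
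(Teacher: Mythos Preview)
Your proof is correct and follows essentially the same approach as the paper's: fix a block of $k\approx n-\log_2(\mathcal{S})/0.249999$ items, enumerate all its subsets, and invoke Theorem~\ref{main-thm} on the remaining $n-k$ items for each choice. The only differences are cosmetic---you spell out the rounding of $k$ and the Monte Carlo amplification that the paper leaves implicit (and note: your parenthetical ``largest $k$'' should read ``smallest $k$'', though the defined value and all subsequent inequalities are correct).
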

\begin{proof}
	Let $w_1,\ldots,w_n,t$ be such an instance, and set $b=n-\log_2\mathcal{S}/0.249999$.
	For every subset of $X \subseteq {n-b+1,\ldots,n}$ solve the Subset Sum instance with weights $w_1,\ldots,w_{n-b}$ and target $t-\sum_{i \in X}w_i$ using Theorem~\ref{main-thm}.
	This is clearly a correct Monte Carlo algorithm, and it uses
    $\mathcal{S}=2^{0.249999(n-b)}$ space and time
	$\mathcal{T} = \Os(2^{b+(n-b)/2})$. Thus we have that $\mathcal{T}\mathcal{S}^{0.5/0.249999n}\leq 2^n$. 
\end{proof}

\section{Inequality in the runtime analysis of algorithm for OV}
\label{sec:ineq-ov}

In this section we will prove the bound on the running time of Orthogonal Vectors
algorithm. Intuitively, it means that the hardest case is when $\sigma=1/2$. We
will use the short binomial notation, i.e., $\binom{\alpha n}{\beta n} =
\binom{\alpha}{\beta}_n$. The inequality that we prove is:

\begin{lemma}
    \label{ovtime-inequality}
    For large enough $n$ and $\lambda  \in [0.4,0.5]$ and $\sigma \in [0.4,0.6]$ the following inequality holds:
    \begin{align*}
        \min_x \left\{
             \frac{\binom{1-\lambda\sigma}{x - \lambda\sigma}_n +
             \binom{1-(1-\sigma)\lambda}{x}_n}{\binom{1-\lambda}{x -
             \lambda\sigma}_n}
        \right\} \le
        2^{n(1/2 + \lambda - h(\lambda/2))} n^{\Oh(1)}.
    \end{align*}
\end{lemma}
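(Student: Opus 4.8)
The plan is to choose a single good value of $x$ and show that for that choice the second term in the numerator already dominates (or is balanced against) the denominator, and the whole ratio is at most $2^{n(1/2+\lambda-h(\lambda/2))}$ up to polynomial factors. Concretely I would set
\[
x \;=\; d\Big(\tfrac12 + (\sigma-\tfrac12)\tfrac{\log_2 3}{2} + (\tfrac12-\sigma)(\tfrac12-\lambda)\Big)
\]
(with $d=n$), which is exactly the value picked in the proof of Lemma~\ref{lem:factor}; note that at the worst case $\sigma=1/2$ this is just $x=d/2$. First I would rewrite everything via the entropy estimate~\eqref{eq:binent}: the ratio in question, on a logarithmic scale divided by $n$, is at most
\[
\max\Big\{\,(1-\lambda\sigma)h\big(\tfrac{x-\lambda\sigma}{1-\lambda\sigma}\big),\;(1-(1-\sigma)\lambda)h\big(\tfrac{x}{1-(1-\sigma)\lambda}\big)\,\Big\} \;-\; (1-\lambda)h\big(\tfrac{x-\lambda\sigma}{1-\lambda}\big) \;+\; o(1),
\]
so it suffices to bound each of the two candidate differences by $1/2+\lambda-h(\lambda/2)$.

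For the symmetric point $\sigma=1/2$ this reduces to a clean one-variable claim: with $x=d/2$ one has to show
\[
(1-\lambda/2)\,h\!\Big(\frac{1/2-\lambda/2}{1-\lambda/2}\Big) - (1-\lambda)\,h\!\Big(\frac{1/2-\lambda/2}{1-\lambda}\Big) \;\le\; \frac12 + \lambda - h(\lambda/2),
\]
which is a function of $\lambda$ alone on $[0.4,0.5]$ and can be verified by bounding the left side using the concavity of $h$ and the estimates~\eqref{ineq:entropy}--\eqref{ineq:entropy2} (or, since it is a single smooth function on a short interval, by checking monotonicity of the difference via its derivative and evaluating at the endpoints $\lambda=0.4$ and $\lambda=0.5$; at $\lambda=1/2$ both sides are easily seen to match the bound). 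For general $\sigma\in[0.4,0.6]$ I would treat $\sigma$ as a perturbation around $1/2$: write $\sigma=\tfrac12+\tau$ with $|\tau|\le 0.1$, Taylor-expand the chosen $x$ and the three entropy terms to first order in $\tau$, and observe that the linear-in-$\tau$ contributions in the two numerator terms are arranged (by the $\tfrac{\log_2 3}{2}$ and $(\tfrac12-\lambda)$ coefficients in the definition of $x$) so that whichever of the two numerator terms is the larger, its first-order excess over the $\sigma=1/2$ value is compensated; the second-order remainder is $O(\tau^2)$ and is absorbed into the slack that already exists at $\sigma=1/2$ (strict inequality there except at $\lambda=1/2$) together with the $n^{O(1)}$ factor.

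The main obstacle I expect is the case analysis in the $\max$: one must argue, for every $(\lambda,\sigma)$ in the box, which of the two numerator binomials is larger and show the bound for that one, ideally without having to separately handle $\lambda=1/2$ where the inequality is tight. A robust way to handle this is to not optimize $x$ cleverly at all but instead to split the range of $x$: for $x$ close to $(1-\lambda)\cdot\tfrac12 + \lambda\sigma$ the denominator is near its maximum $2^{(1-\lambda)d}$, while the numerator terms are binomials on ground sets of size $1-\lambda\sigma>1-\lambda$ and $1-(1-\sigma)\lambda>1-\lambda$ evaluated at the \emph{same} top $x-\lambda\sigma$ or $x$, so they are $2^{(1-\lambda)d}$ times a factor that is at most $2^{(\lambda\sigma)d}\binom{(1-\lambda\sigma)d}{\cdots}/\binom{(1-\lambda)d}{\cdots}$-type quantity; bounding that quotient of binomials with the same "height" by its largest possible value $\binom{(1-\lambda)d + \lambda\sigma d}{\lambda\sigma d}\le 2^{(1/2+\lambda-h(\lambda/2))d}n^{O(1)}$ (using convexity of entropy,~\eqref{ineq:convexity}, exactly as in the displayed step after~\eqref{eq:setsbounds}) closes the argument. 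I would present whichever of these two routes yields the shorter verification, deferring the purely numerical endpoint checks to a remark.
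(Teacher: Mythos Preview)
Your plan picks the same minimizer $x$ as the paper and, like the paper, organizes the argument as a perturbation around $\sigma=1/2$. The first-order cancellation you describe is correct and is exactly why that particular $x$ is chosen. However, the treatment of the second-order term is where your proposal has a real gap.

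At $\lambda=1/2$ the inequality is \emph{tight} at $\sigma=1/2$ (as you yourself note), so there is no slack to absorb anything. The ``$O(\tau^2)$ remainder'' you mention contributes a factor $2^{\Theta(\tau^2)n}$ to the ratio, and since $\tau=\sigma-1/2$ ranges over $[-0.1,0.1]$, this is genuinely exponential in $n$ and cannot be hidden in the $n^{O(1)}$ factor. What must be shown is that the \emph{sign} of the net second-order coefficient is correct: the numerator's quadratic term in $\tau$ must be at least as negative as the denominator's. The paper does precisely this via two explicit Taylor computations (their Claims~\ref{ineq:obs1} and~\ref{ineq:taylorlast}), obtaining a coefficient of roughly $-0.91$ for the numerator versus roughly $-0.68$ for the denominator at $\lambda=1/2$; both are rounded to $-0.7$ and then cancel. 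Without this sign verification your argument does not close at $\lambda=1/2$, which is the only case that matters for the application.

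Your fallback route also does not work as stated. Taking $x=(1-\lambda)/2+\lambda\sigma$ to maximize the denominator fails numerically already at $\lambda=1/2$, $\sigma=0.4$: the ratio exceeds $2^{0.21n}$ while the target is about $2^{0.189n}$. The quoted bound $\binom{(1-\lambda)d+\lambda\sigma d}{\lambda\sigma d}\le 2^{(1/2+\lambda-h(\lambda/2))d}$ is also false; at $\sigma=\lambda=1/2$ the left side is $\binom{3d/4}{d/4}\approx 2^{0.689d}$, not $2^{0.189d}$. So this alternative cannot replace the second-order calculation.

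In short: keep your choice of $x$ and the perturbative setup, but you must actually compute the quadratic coefficients on both sides (via the Taylor expansions of $h$ around $1/3$ and $1/2$, as the paper does) and verify the numerator's is no larger than the denominator's. That calculation, not an appeal to slack, is the substance of the proof.
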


The strategy behind the proof is to find an $x$ that is a good approximation (up
to a $3$rd order factors) of the equation $\binom{1-\lambda\sigma}{x -
\lambda\sigma}_n = \binom{1-(1-\sigma)\lambda}{x}_n$. We found it with a
computer assistance. Next we plug in the $x$ and use the Taylor expansion up to the
$2$nd order. It will turn out that all the 0th and 2nd order terms cancel out.
Moreover we will prove that $2$nd order terms are negative. Because we use
Taylor expansions, we need an extra assumption about the closeness of
$\lambda,\sigma$ to $1/2$ (recall, that we only need $\lambda \le 0.5$). Observe that $2^{nh(\alpha)}$ is within polynomial factors from
$\binom{n}{\alpha n}$, hence in the proof we decided to skip factors $n^{\Oh(1)}$.

\begin{proof}[Proof of Lemma~\ref{ovtime-inequality}]

    First we do the substitution: $\sigma := 1/2 - \alpha$ and $\lambda := 1/2 -
    \beta$. We have that $\alpha \in [-\frac{1}{10},\frac{1}{10}]$ and $\beta \in [0,\frac{1}{10}]$ by
    the symmetry. The inequality that we need to prove is therefore:

    \begin{align*}
        \min_x \left\{
        \frac{
            \binom{\frac{3}{4} - \alpha \beta + (\alpha+\beta)/2}{x-1/4-\alpha\beta + (\alpha+\beta)/2}_n
            +
            \binom{\frac{3}{4} + \alpha\beta - (\alpha-\beta)/2}{x}_n
        }{
            \binom{1/2+\beta}{x-1/4 - \alpha\beta+(\alpha+\beta)/2}_n
        }
        \right\}
        \le 2^{n(1-\beta - h(1/4-\beta/2))}
    \end{align*}

    Our choice for the minimizer is $x := 1/2 + (\sigma -
    1/2)(\log_2(3)/2) + (0.5-\lambda)(0.5-\sigma) = 1/2 -\alpha (\log_2(3))/2 +
    \alpha\beta$. Moreover define constant $c:=  (\log_2(3) - 1)/2$. Then our
    inequality is:

    \begin{align*}
        \frac{
            \binom{\frac{3}{4} + \alpha/2 + (\beta/2 - \alpha\beta)}{1/4 - c\alpha + \beta/2}_n
            +
            \binom{\frac{3}{4} - \alpha/2 + (\beta/2 + \alpha\beta)}{1/4 + c\alpha + \beta/2}_n
        }{
            \binom{1/2+\beta}{1/4 + \beta/2  - c \alpha}_n
        }
        \le
        2^{n(1-\beta - h(1/4-\beta/2))}
    \end{align*}

    Now we will use the following observation:

    \begin{claim}
        \label{ineq:obs1}
        If $\alpha \in [-\frac{1}{10},\frac{1}{10}],\beta \in (0,\frac{1}{10})$ then 
        \begin{displaymath}
            \binom{1/2+\beta}{1/4 + \beta/2  - c \alpha}_n \ge 2^{n(1/2+\beta - 0.7 \alpha^2)}
        \end{displaymath}
    \end{claim}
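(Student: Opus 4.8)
The plan is to deduce the claim from the binomial estimate~\eqref{eq:binent} together with the quadratic lower bound on the entropy near $1/2$ from~\eqref{ineq:entropy}. The key preliminary observation is that $1/4+\beta/2$ is exactly half of $1/2+\beta$, so the ``balance ratio'' of the binomial coefficient in question is a small perturbation of $1/2$:
\[
	\frac{1/4+\beta/2-c\alpha}{1/2+\beta} \;=\; \frac12 - \delta, \qquad \text{where } \ \delta := \frac{c\alpha}{1/2+\beta}.
\]
First I would apply the lower bound in~\eqref{eq:binent} with $d=(1/2+\beta)n$, obtaining
\[
	\binom{1/2+\beta}{1/4+\beta/2-c\alpha}_n \;\geq\; \Omega\!\left(n^{-1/2}\right) 2^{(1/2+\beta)n\cdot h(1/2-\delta)},
\]
so it remains to show $(1/2+\beta)\,h(1/2-\delta) \geq 1/2+\beta-0.7\alpha^2$; the stray $\Omega(n^{-1/2})$ prefactor is harmless since, as stated at the start of this section, we suppress factors $n^{\Oh(1)}$ throughout the proof of Lemma~\ref{ovtime-inequality}.

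Next I would estimate the entropy term. Using that $h$ is symmetric about $1/2$ we may replace $\delta$ by $|\delta|$, and since $|\delta| = c|\alpha|/(1/2+\beta) \leq 0.06 < 1/2$ on the given parameter ranges, the lower bound of~\eqref{ineq:entropy} applies and yields $h(1/2-\delta) \geq 1-4\delta^2$. Hence
\[
	(1/2+\beta)\,h(1/2-\delta) \;\geq\; (1/2+\beta)(1-4\delta^2) \;=\; (1/2+\beta) - \frac{4c^2\alpha^2}{1/2+\beta}.
\]
Since $1/2+\beta \geq 1/2$ we have $\frac{4c^2\alpha^2}{1/2+\beta} \leq 8c^2\alpha^2$, and plugging in $c=(\log_2 3-1)/2$ gives $8c^2 = 2(\log_2 3-1)^2 = 2\,(0.58496\ldots)^2 = 0.6844\ldots < 0.7$. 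Therefore $\frac{4c^2\alpha^2}{1/2+\beta} \leq 0.7\alpha^2$, which gives the required bound $(1/2+\beta)h(1/2-\delta) \geq 1/2+\beta-0.7\alpha^2$ and hence the claim.

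The computation is otherwise routine; the only place requiring care is the final numerical comparison $8c^2<0.7$. The margin is only about $0.015$, produced entirely by the crude step $1/(1/2+\beta)\leq 2$, so one must not round the constant $c$ early (keeping it as $(\log_2 3 - 1)/2$) — this is presumably why the authors phrased the bound with the round constant $0.7\alpha^2$ rather than the sharper $8c^2\alpha^2$: it leaves comfortable but not excessive slack. I do not anticipate any genuine obstacle beyond this bookkeeping.
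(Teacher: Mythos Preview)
Your proof is correct and essentially identical to the paper's: both rewrite the binomial via entropy with balance ratio $1/2-\delta$ for $\delta=c\alpha/(1/2+\beta)$, apply the lower bound $h(1/2-\delta)\ge 1-4\delta^2$ from~\eqref{ineq:entropy}, and finish with the same numerical check (the paper phrases it as $4c^2/0.7\approx 0.489<1/2\le 1/2+\beta$, which is the same inequality $8c^2<0.7$ you wrote).
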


    Hence if we multiply by the divisor, our inequality is simplified to:
    \begin{align*}
        \binom{\frac{3}{4} + \alpha/2 + (\beta/2 - \alpha\beta)}{1/4 - c\alpha + \beta/2}_n
            +
            \binom{\frac{3}{4} - \alpha/2 + (\beta/2 + \alpha\beta)}{1/4 + c\alpha + \beta/2}_n
        \le
        2^{n(3/2 - 0.7 \alpha^2 - h(1/4-\beta/2))}
    \end{align*}
    Next, we use the inequality $\binom{n+\eps}{k} \le 2^\eps \binom{n}{k}$ twice
    (for $\eps = \alpha\beta$ and $\eps = -\alpha\beta$) to simplify to:
    \begin{align*}
        2^{\alpha\beta} \left(  
            \binom{\frac{3}{4} + \alpha/2 + \beta/2}{1/4 - c\alpha + \beta/2}_n
            +
        \binom{\frac{3}{4} - \alpha/2 + \beta/2)}{1/4 + c\alpha + \beta/2}_n
        \right)
        \le
        2^{n(3/2 - 0.7 \alpha^2 - h(1/4-\beta/2))}
    \end{align*}
    Next, we use inequality $\binom{n+\eps}{k+\eps} \le 2^\eps \binom{n}{k}$ for
    $\eps = \beta/2$ and simplify it even further:
    \begin{align*}
        2^{\alpha\beta + \beta/2} \left(  
            \binom{\frac{3}{4} + \alpha/2}{1/4 - c\alpha}_n
            +
            \binom{\frac{3}{4} - \alpha/2}{1/4 + c\alpha}_n
        \right)
        \le
        2^{n(3/2 - 0.7 \alpha^2 - h(1/4-\beta/2))}
    \end{align*}
    Next, we use the following:
    \begin{claim}
        \label{ineq:taylorh4}
        For every $\alpha \in [-\frac{1}{10},\frac{1}{10}],\beta \in [0,\frac{1}{10}]$ it holds:
        \begin{displaymath}
            h(1/4 - \beta) \le h(1/4) - \beta /2 - \alpha \beta.
        \end{displaymath}
    \end{claim}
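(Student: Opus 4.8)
The plan is to obtain the inequality directly from the concavity of the binary entropy. On $(0,1)$ we have $h'(x) = \log_2\frac{1-x}{x}$ and $h''(x) = -\frac{1}{x(1-x)\ln 2} < 0$, so $h$ is strictly concave and lies below its tangent line at $1/4$; this is the same tangent-line fact recorded in \eqref{ineq:entropy2}, now applied with a negative offset. Since $h'(1/4) = \log_2 3$ and $\beta \le 1/10 < 1/4$ keeps $1/4 - \beta$ inside the domain, the tangent-line bound gives
\[
	h(1/4 - \beta) \;\le\; h(1/4) + (-\beta)\, h'(1/4) \;=\; h(1/4) - \beta\log_2 3 .
\]

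It then remains to show $\beta\log_2 3 \ge \beta/2 + \alpha\beta$. When $\beta = 0$ the claimed inequality is the identity $h(1/4) \le h(1/4)$, so we may assume $\beta > 0$; dividing through by $\beta > 0$, it suffices to check $\log_2 3 \ge 1/2 + \alpha$. As $\alpha \le 1/10$ we get $1/2 + \alpha \le 3/5 < \log_2 3$ (indeed $\log_2 3 > 1.58$), which settles it. Chaining the two estimates yields $h(1/4 - \beta) \le h(1/4) - \beta\log_2 3 \le h(1/4) - \beta/2 - \alpha\beta$, as claimed.

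There is no genuine obstacle here: the only points that need a moment's care are taking the concavity bound in the correct direction (the entropy lies \emph{below} its tangent, not above), checking that $1/4 - \beta$ stays in $(0,1)$ via $\beta \le 1/10$, and disposing of the degenerate case $\beta = 0$ separately; everything else is the coarse numerical estimate $\log_2 3 > 3/5$.
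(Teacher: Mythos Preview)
Your proof is correct and follows essentially the same route as the paper's: both bound $h$ at $1/4$ by its tangent line (the paper phrases this via the Taylor expansion~\eqref{eq:taylorfourth}, you via concavity and~\eqref{ineq:entropy2}), and then finish with the numerical check that $\log_2 3$ exceeds $1/2+\alpha$ on the given range. The only cosmetic difference is that the paper's proof is written with $\beta/2$ in the entropy argument (matching how the claim is actually applied in the surrounding argument), whereas you prove the claim exactly as stated with $\beta$; your version is in fact the stronger of the two.
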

    Using this claim, it remains to show that
    \begin{align*}
        \binom{\frac{3}{4} + \alpha/2}{1/4 - c\alpha}_n
            +
            \binom{\frac{3}{4} - \alpha/2}{1/4 + c\alpha}_n
        \le
        2^{n(3/2 - 0.7 \alpha^2 - h(1/4))}
    \end{align*}
    Finally, we use our last claim:
    \begin{claim}
        \label{ineq:taylorlast}
        For every $\alpha \in [-\frac{1}{10},\frac{1}{10}]$ the following holds:
        \begin{displaymath}
            \binom{\frac{3}{4} - \alpha/2}{1/4 + c\alpha}_n \le \binom{3/4}{1/4}_n
            2^{-0.7 \alpha^2 n}
        \end{displaymath}
    \end{claim}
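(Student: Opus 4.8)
The plan is to follow the same recipe as for Claims~\ref{ineq:obs1} and~\ref{ineq:taylorh4}: use~\eqref{eq:binent} to pass from the binomial inequality to an inequality between binary entropies, and then analyse the resulting one–variable function. Set $v(\alpha):=\tfrac34-\tfrac{\alpha}{2}$ and $g(\alpha):=(1/4+c\alpha)/v(\alpha)$, so that by the upper bound in~\eqref{eq:binent} we have $\binom{3/4-\alpha/2}{1/4+c\alpha}_n\le 2^{\,n\,v(\alpha)\,h(g(\alpha))}$, while by the lower bound in~\eqref{eq:binent} we have $\binom{3/4}{1/4}_n\ge \Omega(n^{-1/2})\,2^{\,n\,(3/4)h(1/3)}$ (note $v(0)=\tfrac34$, $g(0)=\tfrac13$). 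Writing $F(\alpha):=v(\alpha)h(g(\alpha))$, it therefore suffices — up to the $n^{\Oh(1)}$ factors that the surrounding proof of Lemma~\ref{ovtime-inequality} absorbs in any case — to prove the pure inequality $F(\alpha)\le F(0)-0.7\,\alpha^2$ for all $\alpha\in[-\tfrac1{10},\tfrac1{10}]$.

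The algebraic heart of the matter is that the numerator of $g'$ is constant: differentiating $g=(1/4+c\alpha)/v$ gives $g'=N/v^2$ with $N:=c\,v+\tfrac12(1/4+c\alpha)=\tfrac{3c}{4}+\tfrac18=\tfrac{3\log_2 3-2}{8}$ independent of $\alpha$, hence $g''=-2Nv'/v^3$. Plugging this into $F'=v'h(g)+vh'(g)g'$ and $F''=2v'h'(g)g'+vh''(g)(g')^2+vh'(g)g''$, the two $h'$-terms of $F''$ collect into $h'(g)(2v'g'+vg'')=0$, so that
\[
    F''(\alpha)=v(\alpha)\,h''(g(\alpha))\,g'(\alpha)^2=\frac{N^2\,h''(g(\alpha))}{v(\alpha)^3}=-\frac{N^2}{\ln 2\cdot v(\alpha)^3\,g(\alpha)(1-g(\alpha))}.
\]
For the first–order term, $F'(0)=-\tfrac12 h(1/3)+\tfrac34\,h'(1/3)\,g'(0)$; using $h'(1/3)=\log_2 2=1$, $h(1/3)=\log_2 3-\tfrac23$ and $g'(0)=16N/9=\tfrac{2(3\log_2 3-2)}{9}$, a short computation gives $F'(0)=0$. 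This is precisely the identity for which the constant $c=\tfrac{\log_2 3-1}{2}$ (equivalently, the minimiser $x$ chosen in Lemma~\ref{ovtime-inequality}) was designed, and it is why the final bound comes out clean.

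It remains to control $F''$ on the compact interval. Since $g'=N/v^2>0$, the map $g$ is increasing, so $g([-\tfrac1{10},\tfrac1{10}])\subseteq[0.27,0.40]$ is bounded away from $0$ and $1$, and $v\in[0.7,0.8]$ there. Using the identity $v^3 g(1-g)=v\,u\,(v-u)$ with $u:=1/4+c\alpha$ — a cubic polynomial in $\alpha$ that is straightforwardly checked to be decreasing on $[-\tfrac1{10},\tfrac1{10}]$, hence maximised at $\alpha=-\tfrac1{10}$ with value below $0.103$ — the closed form above yields $F''(\alpha)<-1.6$ for every $\alpha\in[-\tfrac1{10},\tfrac1{10}]$. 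Now set $\Psi(\alpha):=F(\alpha)-F(0)+0.7\,\alpha^2$. Then $\Psi(0)=0$, $\Psi'(0)=F'(0)=0$, and $\Psi''(\alpha)=F''(\alpha)+1.4<-0.2<0$ throughout the interval, so $\Psi$ is strictly concave there with a stationary point at $0$; hence $0$ is its maximiser and $\Psi(\alpha)\le\Psi(0)=0$, which is exactly the desired inequality.

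The step I expect to be fiddliest is the uniform bound $F''<-1.4$: bounding $v^{-3}$ and $1/(g(1-g))$ by their separate maxima over the interval only gives $F''\lesssim-1.39$, just short of what is needed, so one genuinely has to treat $v^3 g(1-g)=v\,u\,(v-u)$ as a single cubic and bound it on the interval (where, conveniently, it is monotone). The verification that $F'(0)=0$ is elementary but hinges delicately on the exact value of $c$; everything else is routine bookkeeping with $h$, $h'$, $h''$.
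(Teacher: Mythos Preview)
Your proof is correct and takes a genuinely different route from the paper's. Both reduce the claim to the entropy inequality $F(\alpha)\le F(0)-0.7\alpha^2$ with $F(\alpha)=(\tfrac34-\tfrac\alpha2)\,h\!\bigl(\tfrac{1/4+c\alpha}{3/4-\alpha/2}\bigr)$, and both hinge on the fact that the specific constant $c=(\log_2 3-1)/2$ makes the first-order term vanish. The paper proceeds by brute force: it Taylor-expands $\phi(\alpha)=g(\alpha)$ to third order, feeds this into the Taylor expansion of $h$ around $1/3$, multiplies by $3/4-\alpha/2$, and then verifies the cancellations in orders $0$ and $1$ and bounds the order-$2$ and order-$3$ coefficients numerically. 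You instead differentiate $F$ exactly, observe that the numerator of $g'$ is the constant $N=(3\log_2 3-2)/8$, and exploit the resulting clean identity $2v'g'+vg''=0$ to kill all $h'$-terms in $F''$, leaving the closed form $F''=-N^2/(\ln 2\cdot v\,u\,(v-u))$. Bounding the cubic $vu(v-u)$ on the interval (where it is monotone, with maximum $\approx 0.1023$ at $\alpha=-1/10$) then gives $F''<-1.6<-1.4$ uniformly, and concavity of $\Psi=F-F(0)+0.7\alpha^2$ finishes the argument.

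What each approach buys: the paper's expansion is pedestrian but entirely explicit and requires no structural insight beyond ``expand and compare coefficients''. Your approach is shorter and conceptually cleaner---the cancellation in $F''$ is a nice observation and replaces several lines of coefficient-chasing by a single exact formula---but it does require the somewhat delicate endgame you flag yourself: bounding $v^{-3}$ and $1/(g(1-g))$ separately over the interval yields only $F''\lesssim-1.39$, so one must bound the product $vu(v-u)$ as a single cubic to clear the $-1.4$ threshold. Both proofs treat the binomial-to-entropy conversion up to $n^{\Oh(1)}$ factors, which is exactly the slack Lemma~\ref{ovtime-inequality} allows (the paper states this convention explicitly before the four claims).
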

    Hence our inequality boils down to:
    \begin{align*}
        \binom{3/4}{1/4}_n 2^{- 0.7\alpha^2 n}
        \le
        2^{n(3/2 - 0.7\alpha^2 - h(1/4))}
    \end{align*}
To see that this holds, note that the $\alpha^2$ factors cancel out, and that remaining inequality $\binom{3/4}{1/4}_{n}\binom{1}{1/4}_n \leq 2^{1.5 n}$ is in fact equality because both sides count the number of partitions of $n$ in three blocks of size $n/4$, $n/4$ and $n/2$.
\end{proof}
Now, we will present a proofs of the claims. These are based on the following Taylor
expansions of the entropy function:
\begin{equation}\label{eq:taylorthird}
h(1/3 + x) = h(1/3) + x - \frac{27}{4\ln{8}} x^2 + \frac{27}{8\ln{8}} x^3 - \Oh(x^3)
\end{equation}
We will denote $\kappa_1 := - \frac{27}{4\ln{8}}$ and $\kappa_2 := \frac{27}{8\ln{8}}$.
\begin{equation}\label{eq:taylorfourth}
h(1/4 + x) = h(1/4) + \left(\log_2{3}\right) x - \frac{8}{\ln{8}}x^2 + \Oh(x^3)
\end{equation}
\begin{proof}[Proof of Claim~\ref{ineq:obs1}]

    Recall, that we put $c:=  (\log_2(3) - 1)/2$. First, we use the entropy function and write:
    $$ \binom{1/2+\beta}{1/4 + \beta/2  - c \alpha}_n = 2^{(1/2+\beta)h(1/2 -
    c\alpha/(1/2+\beta))n}.
    $$
    Hence we need to show:
    \begin{displaymath}
        (1/2+\beta) \cdot h\left(1/2 - \frac{c\alpha}{1/2+\beta}\right) \ge 1/2 + \beta - 0.7 \cdot \alpha^2.
    \end{displaymath}
    Next, we use the inequality $h(1/2 - x) \ge 1 - 4x^2$ (Inequality~\eqref{ineq:entropy}) and have:
    \begin{displaymath}
        (1/2+\beta) \cdot h\left(1/2 - \frac{c\alpha}{1/2+\beta}\right) \ge 
        (1/2 + \beta)\left(1-4 \left(\frac{c\alpha}{1/2 + \beta}\right)^2\right) 
        = 1/2 + \beta - \frac{4 c^2 \alpha^2}{1/2+\beta}.
    \end{displaymath}
    Hence, we need to show that:
    \begin{displaymath}
        1/2 + \beta - \frac{4 c^2 \alpha^2}{1/2+\beta} \ge 1/2 + \beta - 0.7 \cdot \alpha^2
    \end{displaymath}
    Which is equivalent to
    \begin{displaymath}
        1/2 + \beta \ge \frac{4 c^2}{0.7}
    \end{displaymath}
    Note that $\frac{4 c^2}{0.7} \approx 0.48883$ and the claim follows because
    $\beta \ge 0$.
\end{proof}

\begin{proof}[Proof of Claim~\ref{ineq:taylorh4}]
    From~\eqref{eq:taylorfourth}, we know that
    \begin{displaymath}
        h(1/4 - \beta/2) \le h(1/4) - \frac{\log_2{3}}{2} \beta
    \end{displaymath}
    Hence, we need to show that:
    \begin{displaymath}
        h(1/4) - \beta /2 - \alpha \beta \ge h(1/4) - \frac{\log_2{3}}{2} \beta
    \end{displaymath}
    Which means that:
    \begin{displaymath}
        0 \le \left(\frac{\log_2{3} - 1}{2} - \alpha\right) \beta \approx (0.292 - \alpha)\beta
    \end{displaymath}
    This holds when $\beta \ge 0$ and $\alpha \le 0.292$.
\end{proof}

\begin{proof}[Proof of Claim~\ref{ineq:taylorlast}]
    This is the moment, when the choice of $x$ is used. 
    First, let us rewrite the binomial coefficient as an entropy function.
    \begin{displaymath}
        \binom{3/4 - \alpha/2}{1/4 + c\alpha}_n \le 2^{n(3/4 - \alpha/2)h\left(
                \frac{1/4 - c\alpha}{3/4 - \alpha/2}
        \right)}.
    \end{displaymath}
    Hence, we need to prove
    \begin{displaymath}
        (3/4 - \alpha/2)h\left(
            \frac{1/4 + c\alpha}{3/4 - \alpha/2}
        \right)
        \le 
        \frac{3}{4} h(1/3) - 0.7 \cdot \alpha^2
        .
    \end{displaymath}
    Let us denote $\phi(\alpha) := \frac{1/4 + c\alpha}{3/4 - \alpha/2}$.
    Therefore, we need to show
    \begin{displaymath}
        (3/4 - \alpha/2)h\left(
                \phi(\alpha)
        \right)
        \le 
        \frac{3}{4} h(1/3) - 0.7 \cdot \alpha^2
        .
    \end{displaymath}
    The strategy behind the proof is straightforward. We bound $\phi(\alpha)$
    with Taylor expansion and then bound $h(\phi(\alpha))$. The inequality is
    technical, because we need to expand up to the $\Oh(\alpha^3)$ term.
    Let us use Taylor expansion of fraction inside binary entropy:
    \begin{displaymath}
        \phi(\alpha) := \frac{1/4+c\alpha}{3/4 - \alpha/2} = 1/3 + 
        \frac{2}{9}\left(6c+1  \right) \alpha +
        \frac{4}{27}\left(6c+1  \right) \alpha^2 +
        \frac{8}{81}\left(6c+1  \right) \alpha^3 +
        \Oh(\alpha^4)
        .
    \end{displaymath}
    Let $A := (12c+2)/9$ and $B:= (24c+4)/27$.
    Note, that $\frac{8}{81}\left(6c+1  \right) < 0.3$, therefore
    \begin{displaymath}
        \phi(\alpha) \le 1/3 + 
        A \alpha + B \alpha^2
        + \frac{0.3\alpha^3}{1-\alpha}
        .
    \end{displaymath}
    Because we assumed $\alpha < 0.1$ we can roughly bound:
    \begin{displaymath}
        \phi(\alpha) \le 1/3 + 
        A \alpha + B \alpha^2 +
        1/2 \cdot \alpha^3
        .
    \end{displaymath}
Then we plug in~\eqref{eq:taylorthird}, the Taylor expansion of $h(1/3 + x) \le h(1/3)
    + x + \kappa_1 \cdot x^2 + \kappa_2 \cdot x^3$ and have:
    \begin{displaymath}
        h(\phi(\alpha)) \le 
        h(1/3) +
        A \alpha + \left(B + \kappa_1 A^2
        \right)\alpha^2 + \frac{\kappa_2}{2} \alpha^3
        .
    \end{displaymath}
    Next we multiply it by $(\frac{3}{4} - \alpha/2)$ and have:
    \begin{align*}
        \left(\frac{3}{4} - \alpha/2\right) \cdot h\left(\phi(\alpha)\right)  \le &\; \frac{3}{4} h(1/3) + \\ 
        & \alpha  \left(\frac{3}{4} A - h(1/3)/2\right) + \\ 
        & \alpha^2  \left( \frac{3}{4} B - A/2 \right) + 
        \alpha^2 \left(\frac{3\kappa_1 \cdot A^2}{4}\right) +
        \frac{3 \kappa_2}{4} \alpha^3
        .
    \end{align*}
    The $x$ and the constant $c$ were chosen in such a way that $\frac{3}{4} A
    = h(1/3)/2$ and $\frac{3}{4} B = A/2$.  Hence:
    \begin{align*}
        \left(\frac{3}{4} - \alpha/2\right) \cdot h\left(\phi(\alpha)\right)  \le &\; \frac{3}{4}
        h(1/3) + \alpha^2 \left(\frac{3\kappa_1 \cdot A^2}{4}\right) +
        \alpha^3 \left(\frac{3 \kappa_2}{4} \right)
        .
    \end{align*}
    Moreover $3\kappa_1 \cdot A^2/4 < -0.9123$ and $3 \kappa_2/4 < 1.22$.
    Hence
    \begin{align*}
        \left(\frac{3}{4} - \alpha/2\right) \cdot h\left(\phi(\alpha)\right)  \le \frac{3}{4} h(1/3) -0.91 \alpha^2 + 1.22 \alpha^3
    \end{align*}
    Recall that we assumed that $\alpha < 0.1$, therefore:
    \begin{align*}
        \left(\frac{3}{4} - \alpha/2\right) \cdot h\left(\phi(\alpha)\right)  \le \frac{3}{4} h(1/3) -0.7 \alpha^2
    \end{align*}
    which we needed to prove.
\end{proof}

\section{Problems Definitions}\label{sec:problems}
\defproblem{$4$-SUM}
{Sets $A,B,C,D$ of integers and a target integer $t$}
{Find $a \in A$, $b \in B$, $c \in C$, $d \in D$ such that $a+b+c+d=t$.}

\defproblem{Binary Integer Programming (BIP)}
{Vectors $v,a^1,\ldots,a^d \in [m]^n$ and integers $u_1,\ldots,u_d \in [m]$}
{Find $x \in \mathbb{Z}^n$, such that
	\begin{equation*}
	\begin{array}{ll@{}ll}
	\text{minimize}  & \displaystyle \langle v, x\rangle 
	&\\
	\text{subject to}& \displaystyle \langle a^j,x\rangle \le u_j && \text{for all } j \in [d]\\
	&
	x_{i} \in \{0,1\} && \text{for all } i \in [n]
	.
	\end{array}
	\end{equation*}
}

\defproblem{Exact Node Weighted $P_4$}
{A node weighted, undirected graph $G$.}
{Decide if there exists a simple path on $4$ vertices with total weight equal exactly $0$.}


\defproblem{Knapsack}
{A set of $n$ items $\{(v_1,w_1),\ldots, (v_n,w_n)\}$ }
{Find $x \in \mathbb{Z}^n$ such that:
    \begin{equation*}
        \begin{array}{ll@{}ll}
            \text{maximize}  & \langle v, x \rangle
            &\\
            \text{subject to}& \langle w, x \rangle \leq t,  &\\
            & x_{i} \in \{0,1\}^n && \text{for all } i \in [n]
            .
        \end{array}
    \end{equation*}
}

\defproblem{Orthogonal Vectors (OV)}
{Two sets of vectors $\mathcal{A},\mathcal{B} \subseteq \{0,1\}^d$}
{Decide if there exists a pair $a \in \mathcal{A}$ and $b \in \mathcal{B}$ such
	that $\langle a,b \rangle = 0$.}

\defproblem{Subset Sum}
{A set of $n$ integers $\{w_1,\ldots,w_n\}$ and integer $t$ }
{Decide if there exists $x_1,\ldots,x_n \in \{0,1\}$, such that $\sum_{i=1}^n x_i w_i = t$.}

\end{appendices}

\end{document}